\documentclass[11pt]{article}
\pdfoutput=1
\newif\iflipics\lipicsfalse
\usepackage[letterpaper, margin=1in]{geometry}
\usepackage{mathptmx}
\DeclareMathAlphabet{\mathcal}{OMS}{cmsy}{m}{n}

\usepackage{amsmath}
\usepackage{amsfonts}
\usepackage{amssymb}
\usepackage{amsthm}
\usepackage{xspace}
\usepackage[dvipsnames]{xcolor}
\usepackage{tabularx}
\usepackage{ragged2e}
\usepackage{nicefrac}
\usepackage{bbm}
\usepackage{caption}
\usepackage{graphicx}
\usepackage{multirow}
\usepackage{booktabs}
\usepackage{enumitem}
\usepackage{soul}
\usepackage[most]{tcolorbox}
\usepackage{colortbl}
\usepackage{hhline}

\usepackage{thmtools}
\usepackage{thm-restate}

\newcommand{\EE}{\mathbb{E}}

\newcommand{\NN}{\mathbb{N}}
\newcommand{\RR}{\mathbb{R}}

\newcommand{\indic}{\mathbbm{1}}
\newcommand{\cA}{\mathcal{A}}
\newcommand{\cB}{\mathcal{B}}

\newcommand{\cD}{\mathcal{D}}

\newcommand{\cI}{\mathcal{I}}

\newcommand{\cO}{\mathcal{O}}

\newcommand{\cR}{\mathcal{R}}

\newcommand{\tO}{\widetilde{O}}
\newcommand{\tP}{\widetilde{P}}
\newcommand{\tQ}{\widetilde{Q}}

\newcommand{\tOmega}{\widetilde{\Omega}}

\newcommand{\hw}{\hat{w}}
\newcommand{\hb}{\hat{b}}
\newcommand{\hP}{\hat{P}}
\newcommand{\hf}{\hat{f}}

\newcommand{\ceil}[1]{{\left\lceil{#1}\right\rceil}}
\newcommand{\floor}[1]{{\left\lfloor{#1}\right\rfloor}}

\DeclareSymbolFont{newfont}{OML}{cmm}{m}{it}
\DeclareMathSymbol{\emptystream}{3}{newfont}{15}
\let\epsilon\varepsilon
\renewcommand{\emptyset}{\varnothing}

\newcommand{\avoid}{\textsc{avoid}\xspace}
\newcommand{\mif}{\textsc{mif}\xspace}
\newcommand{\miflong}{\textsc{MissingItemFinding}\xspace}
\newcommand{\fco}{\textsc{fco}\xspace}
\newcommand{\fao}{\textsc{fao}\xspace}

\newcommand{\poly}{\mathrm{poly}\xspace}
\newcommand{\polylog}{\mathrm{polylog}\xspace}
\newcommand{\Algs}{\textsc{Algs}\xspace}

\newcommand{\unsafesc}{\textsc{unsafe}\xspace}

\newcommand{\emptysc}{\textsc{empty}\xspace}
\newcommand{\timeoutsc}{\textsc{timeout}\xspace}
\newcommand{\bigsc}{\textsc{big}\xspace}
\newcommand{\abort}{\textsc{abort}\xspace}
\newcommand{\sort}{\textsc{sort}\xspace}
\newcommand{\abortsc}{\textsc{abort}\xspace}
\newcommand{\errorsc}{\textsc{error}\xspace}

\newcommand{\rl}{\mathbf{\ell}}             
\newcommand{\stream}{\tau}                  
\newcommand{\concat}{\mathbin{\circ}}       
\newcommand{\trans}{T}                      
\newcommand{\SSD}[2]{\textsc{seqs}(#1,#2)}  
\newcommand{\Outs}{\textsc{Outs}\xspace}    
\newcommand{\Mstk}{\Delta}                  

\newcommand{\gobble}[1]{}

\usepackage[pdfusetitle,colorlinks=true,linkcolor=blue,citecolor=BrickRed,bookmarksopen]{hyperref}

\definecolor{darkgreen}{rgb}{0,0.4,0}
\usepackage{algorithm}
\usepackage[commentColor=darkgreen]{algpseudocodex}
\newcommand{\LineComment}{\LComment}
\algnewcommand{\IfThenElse}[3]{
  \State \algorithmicif\ #1\ \algorithmicthen\ #2\ \algorithmicelse\ #3}
\algdef{SE}[SUBALG]{Indent}{EndIndent}{}{\algorithmicend\ }%
\makeatletter
\renewcommand{\ALG@name}{Listing}
\makeatother

\definecolor{todoboxcolor}{RGB}{255,233,254}

\usepackage{cleveref}
\crefname{listing}{listing}{listings}
\crefname{equation}{eq.}{eqs.}

\newtheorem{theorem}{Theorem}[section]
\newtheorem{lemma}[theorem]{Lemma}

\theoremstyle{definition}
\newtheorem{definition}[theorem]{Definition}

\newtheorem{result}{Result}
\theoremstyle{remark}
\newtheorem*{remark}{Remark}
\makeatletter
\newcommand\squeezepar{\@startsection{paragraph}{4}%
{\z@}{1.5ex \@plus1ex \@minus.2ex}{-1em}{\normalfont\normalsize\bfseries}}
\newcommand{\mypara}[1]{\squeezepar{{#1}.}}
\renewcommand{\paragraph}[1]{\squeezepar{{#1}.}}
\makeatother

\tcolorboxenvironment{result}{enhanced, colback=black!10!white, frame hidden, 
  breakable, boxsep=1pt, left=3pt, right=3pt, before skip=4pt, after skip=6pt}
\captionsetup{font=small}

\title{%
Finding missing items requires strong forms of randomness%
\thanks{This work was supported in part by the National Science Foundation under award 2006589.}}

\author{Amit Chakrabarti\thanks{Department of Computer Science, Dartmouth College, Hanover NH 03755, USA.} 
\and
Manuel Stoeckl$^\fnsymbol{footnote}$}

\date{}

\begin{document}

\maketitle
\thispagestyle{empty}

\begin{abstract}

Adversarially robust streaming algorithms are required to process a stream of
elements and produce correct outputs, even when each stream element can be
chosen as a function of earlier algorithm outputs. As with classic streaming
algorithms, which must only be correct for the worst-case fixed stream,
adversarially robust algorithms with access to randomness can use
significantly less space than deterministic algorithms. We prove that for the
Missing Item Finding problem in streaming, the space complexity also
significantly depends on how adversarially robust algorithms are permitted to
use randomness. (In contrast, the space complexity of classic streaming
algorithms does not depend as strongly on the way randomness is used.)

For Missing Item Finding on streams of length $\ell$ with elements in
$\{1,\ldots,n\}$, and $\le 1/\text{poly}(\ell)$ error, we show that when $\ell
= O(2^{\sqrt{\log n}})$, ``random seed'' adversarially robust algorithms,
which only use randomness at initialization, require $\ell^{\Omega(1)}$ bits
of space, while ``random tape'' adversarially robust algorithms, which may
make random decisions at any time, may use $O(\text{polylog}(\ell))$ space.
When $\ell$ is between $n^{\Omega(1)}$ and $O(\sqrt{n})$, ``random tape''
adversarially robust algorithms need $\ell^{\Omega(1)}$ space, while ``random
oracle'' adversarially robust algorithms, which can read from a long random
string for free, may use $O(\text{polylog}(\ell))$ space. The space lower
bound for the ``random seed'' case follows, by a reduction given in prior
work, from a lower bound for pseudo-deterministic streaming algorithms given
in this paper.

\end{abstract}

\newpage
\addtocounter{page}{-1}

\section{Introduction}\label{sec:intro}

Randomized streaming algorithms can achieve exponentially better space bounds
than corresponding deterministic ones: this is a basic, well-known, easily
proved fact that applies to a host of problems of practical interest. A
prominent class of randomized streaming algorithms uses randomness in a very
specific way, namely to sketch the input stream by applying a random linear
transformation---given by a sketch matrix $S$---to the input frequency vector.
The primary goal of a streaming algorithm is to achieve sublinear space, so it
is infeasible to store $S$ explicitly. In some well-known cases, the most
natural presentation of the algorithm is to explicitly describe the
distribution of $S$, a classic case in point being frequency moment
estimation~\cite{Indyk06}. This leads to an algorithm that is very
space-efficient {\em provided one doesn't charge the algorithm any space cost
for storing $S$}. Algorithms that work this way can be thought of as accessing
a ``random oracle'': despite their impracticality, 
they have theoretical
value, because the standard ways of proving space {\em lower} bounds for
randomized streaming algorithms in fact work in this model. For the specific
frequency-moment algorithms mentioned earlier, \cite{Indyk06} goes on to
design variants of his algorithms that use only a small (sublinear) number of
random bits and apply a pseudorandom generator to suitably mimic the behavior
of his random-oracle algorithms. Thus, at least in this case, a random {\em
oracle} isn't necessary to achieve sublinear complexity. This raises a natural
question: from a space complexity viewpoint, does it ever help to use a random
oracle, as opposed to ``ordinary'' random bits that must be stored (and thus
paid for) if they are to be reused?

For most classic streaming problems, the answer is ``No,'' but for
unsatisfactory reasons: Newman's Theorem~\cite{Newman91} allows one to replace
a long oracle-provided random string by a much shorter one (that is cheap to
store), though the resulting algorithm is non-constructive. This brings us to
the recent and ongoing line of work on {\em adversarially robust} streaming
algorithms where we shall find that the answer to our question is a very
interesting ``Yes.'' For the basic and natural \miflong problem, defined
below, we shall show that three different approaches to randomization result
in distinct space-complexity behaviors. To explain this better, let us review
adversarial robustness briefly.

Some recent works have studied streaming algorithms in a setting where the
input to the algorithm can be adaptively (and adversarially) chosen based on
its past outputs. Existing (``classic'') randomized streaming algorithms may
fail in this {\em adversarial setting} when the input-generating adversary
learns enough about the past random choices of the algorithm to identify
future inputs on which the algorithm will likely fail. There are,
heuristically, two ways for algorithm designers to protect against this:
(a)~prevent the adversary from learning the past random choices of the
algorithm (in the extreme, by making a pseudo-deterministic algorithm), or
(b)~prevent the adversary from exploiting knowledge of past random decisions,
by having the algorithm's future behavior depend on randomness that it has not
yet revealed. Concretely, algorithms in this setting use techniques such as
independent re-sampling~\cite{BenEliezerY20}, sketch switching using
independent sub-instances of an underlying classic
algorithm~\cite{BenEliezerJWY20}, rounding outputs to limit the number of
computation paths~\cite{BenEliezerJWY20}, and differential privacy to safely
aggregate classic algorithm sub-instances~\cite{HassidimKMMS20}. Mostly, these
algorithms use at most as many random bits as their space bounds allow.
However, some recently published adversarially robust streaming algorithms for
vertex-coloring a graph (given by an edge stream)~\cite{ChakrabartiGS22,
AssadiCGS23}, and one for the \miflong problem~\cite{Stoeckl23}, assume access
to a large amount of oracle randomness: they prevent the adversary from
exploiting the random bits it learns by making each output depend on an
unrevealed part of the oracle random string. It is still open whether these
last two problems have efficient solutions that do not use this oracle
randomness hammer. This suggests the following question: 

\begin{center}\begin{minipage}{.75\linewidth}\begin{center}
  \emph{Are there problems for which space-efficient adversarially robust
  streaming algorithms provably require access to oracle randomness?}
\end{center}\end{minipage}\end{center}

In this paper, we prove that for certain parameter regimes, \miflong
(henceforth, $\mif$) is such a problem. In the problem $\mif(n,\rl)$, the
input is a stream $\langle e_1,\ldots,e_\rl \rangle$ of $\rl$ integers, not
necessarily distinct, with each $e_i \in \{1,\ldots,n\}$, where $1 \le \rl \le
n$. The goal is as follows: having received the $i$th integer, output a number
$v$ in $\{1,\ldots,n\} \setminus \{e_1,\ldots,e_i\}$.  We will be mostly interested in the setting $\rl = o(n)$, so
the ``trivial'' upper bound on the space complexity of $\mif(n,\rl)$ is
$O(\rl \log n)$, achieved by the deterministic algorithm that simply stores
the input stream as is.

\subsection{Groundwork for Our Results}\label{subsec:models}

To state our results about $\mif$, we need to introduce some key terminology.
Notice that \mif is a {\em tracking problem}: an output is required after
reading each input.%
\footnote{We do not consider algorithms with a ``one-shot'' guarantee, to only
be correct at the end of the stream, because a) the adversarial setting
requires tracking output b) for \mif and most other problems the difference
in space complexity is generally small.} 
Thus, we view streaming algorithms as generalizations of finite state
(Moore-type) machines. An algorithm $\cA$ has a finite set of states $\Sigma$
(leading to a space cost of $\log_2|\Sigma|$), a finite input set $\cI$, and a
finite output set $\cO$.  It has a transition function $\trans \colon \Sigma
\times \cI \times \cR \to \Sigma$ indicating the state to switch to after
receiving an input, plus an output function $\gamma \colon \Sigma \times \cR
\to \cO$ indicating the output produced upon reaching a state. How the final
parameter (in $\cR$) of $\trans$ and $\gamma$ is used depends on the type of
randomness. We consider four cases, leading to four different models of
streaming computation.

\iflipics\begin{itemize}\else
\begin{itemize}[left=2ex]\fi
  \item \emph{Deterministic.~} The initial state of the algorithm is a fixed
  element of $\Sigma$, and $\trans$ and $\gamma$ are deterministic (they do not
  depend on the parameter in $\cR$). 
  \item \emph{Random seed.~} The initial state is drawn from a distribution
  $\cD$ over $\Sigma$, and $\trans$ and $\gamma$ are deterministic. This models
  the situation that all random bits used count towards the algorithm's space
  cost.
  \item \emph{Random tape.~} The initial state is drawn from a distribution
  $\cD$ over $\Sigma$.\footnote{Requiring that this model use a fixed initial
  state could make some algorithms use one additional ``\textsc{init}''
  state.} The space $\cR$ is a sample space; when the algorithm receives an
  input $e \in \cI$ and is at state $\sigma \in \Sigma$, it chooses a random
  $\rho \in \cR$ independent of all previous choices and moves to state
  $\trans(e, \sigma, \rho)$. However, $\gamma$ is deterministic.%
  \footnote{Alternatively, we could associate a \emph{distribution} of outputs
  to each state, or a function mapping (input, state) pairs to outputs. As
  these formulations are slightly more complicated to prove things with, and
  only affect the space usage of \miflong algorithms by an additive $O(\log n
  + \log \frac{1}{\delta})$ amount, we stick with the one state = one output
  convention.} %
  This models the situation that the algorithm can make random decisions at
  any time, but it cannot remember past random decisions without recording
  them (which would add to its space cost). 
  \item \emph{Random oracle.~} The initial state is fixed; $\cR$ is a
  sample space. A specific $R \in \cR$ is drawn at the start of the algorithm
  and stays the same over its lifetime. When the algorithm is at state
  $\sigma$ and receives input $e$, its next state is $\trans(e, \sigma, R)$.
  The output given at state $\sigma$ is $\gamma(\sigma, R)$.  This models the
  situation that random bits are essentially ``free'' to the algorithm; it can
  read from a long random string which doesn't count toward its space cost and
  which remains consistent over its lifetime. A random oracle algorithm can be
  interpreted as choosing a random deterministic algorithm, indexed by $R$,
  from some family.  
\end{itemize}

These models form a rough hierarchy; they have been presented in (almost)
increasing order of power. Every $z$-bit ($2^z$-state) deterministic algorithm
can be implemented in any of the random models using $z$ bits of space; the
same holds for any $z$-bit random seed algorithm.  Every $z$-bit random tape
algorithm has a corresponding $(z + \log\rl)$-bit random oracle
algorithm---the added space cost is because for a random oracle algorithm to
emulate a random tape algorithm, it must have a way to get ``fresh''
randomness on each turn.%
\footnote{An alternative, which lets one express $z$-bit random tape
algorithms using a $z$-bit random oracle variant, is to assume the random
oracle algorithm has access to a clock or knows the position in the stream for free;
both are reasonable assumptions in practice.}

Streaming algorithms are also classified by the kind of correctness guarantee
they provide. Recall that we focus on ``tracking''
algorithms~\cite{BenEliezerJWY20}; they present an output after reading each
input item and this {\em entire sequence} of outputs must be correct.  Here
are three possible meanings of the statement ``algorithm $\cA$ is
$\delta$-error'' (we assume that $\cA$ handles streams of length $\rl$ with
elements in $\cI$ and has outputs in $\cO$):

\iflipics\begin{itemize}\else
\begin{itemize}[left=2ex]\fi
  \item \emph{Static setting.~} For all inputs $\stream \in \cI^\rl$,
  running $\cA$ on $\stream$ produces incorrect output with probability $\le
  \delta$.
  \item \emph{Adversarial setting.~} For all (computationally unbounded)
  adaptive adversaries $\alpha$ (i.e., for all functions
  $\alpha \colon \cO^\star \to \cI$),\footnote{By the minimax
  theorem, it suffices to consider deterministic adversaries.} running $\cA$
  against $\alpha$ will produce incorrect output with probability $\le
  \delta$.
  \item \emph{Pseudo-deterministic setting.~} There exists a canonical output
  function $f \colon \cI^\star \rightarrow \cO$ producing all correct outputs
  so that, for each $\stream \in \cI^\rl$, $\cA(\stream)$ fails to output
  $f(\stream)$ with probability $\le \delta$.
\end{itemize}

Algorithms for the static setting are called ``classic'' streaming algorithms;
ones for the adversarial setting are called ``adversarially robust'' streaming
algorithms. All pseudo-deterministic algorithms are adversarially robust, and
all adversarially robust algorithms are also classic.

As a consequence of Newman's theorem~\cite{Newman91}, any random oracle or
random tape algorithm in the static setting with error $\delta$ can be
emulated using a random seed algorithm with only $\epsilon$ increase in error
and an additional $O(\log \rl + \log \log |\cI| + \log \frac{1}{\epsilon
\delta})$ bits of space. However, the resulting algorithm is non-constructive.

\subsection{Our Results}\label{subsec:results}

As context for our results, we remind the reader that it's trivial to solve
$\mif(n,\rl)$ in $O(\rl\log n)$ space deterministically (somewhat better
deterministic bounds were obtained in \cite{Stoeckl23}). Moving to randomized
algorithms, \cite{Stoeckl23} gave a space bound of $O(\log^2 n)$ for $\rl
\le n/2$ in the static setting, and a bound of $\tO(\rl^2/n + 1)$\,\footnote{The
notations $\tO(\cdot)$ and $\tOmega(\cdot)$ hide factors polylogarithmic in
$n$ and $\rl$.} in the adversarial setting, using a random \emph{oracle}. The
immediate takeaway is that, given access to a deep pool of randomness (i.e.,
an oracle), \mif becomes easy in the static setting for essentially the full
range of stream lengths $\rl$ and remains easy even against an adversary for
lengths $\rl \le \sqrt{n}$.

\newcommand{\centeredcell}[1]{\begin{tabular}{l} #1 \end{tabular}}
\begin{table}[!htbp]
  \begin{minipage}{\linewidth}
  \renewcommand{\arraystretch}{1.5}
  \centering
  \begin{tabular}{ p{0.2\textwidth} p{0.17\textwidth} p{0.32\textwidth} p{0.16\textwidth} }
  \toprule
    Setting & Type  & Bound & Reference \\ 
  \midrule
  \rowcolor{black!7!white}  
  Static & Random seed & $O((\log n)^2)$ if $\rl \le n/2$ & \cite{Stoeckl23}\footnote{This is obtained by accounting for the randomness cost of
\cite{Stoeckl23}'s random {\em oracle} algorithm for the static setting.} \\
  {Adversarial} & {Random oracle} & $O((\frac{\rl^2}{n} + \log n) \log n)$ & \cite{Stoeckl23} \\
          & & $\Omega(\frac{\rl^2}{n})$ & \cite{Stoeckl23} \\
  \rowcolor{black!7!white}  
  {Adversarial} & {Random tape} & $O(\rl^{\log_n \rl} (\log \rl)^2 + \log \rl \cdot\log n ) ~~\dagger$ & \Cref{thm:rt-ub-intro} \\
  \rowcolor{black!7!white}
            &  & $\Omega(\rl^{\frac{15}{32} \log_n \rl}) ~~\dagger$ & \Cref{thm:rt-lb-intro} \\
  {Adversarial} & {Random seed} & $O((\frac{\rl^2}{n} + \sqrt{\rl} + \log n) \log n)$ & \cite{Stoeckl23}\footnote{The random seed algorithm for the adversarial setting is given in the arXiv version of \cite{Stoeckl23}.} \\
            &  & $\Omega(\frac{\rl^2}{n} + \sqrt{\frac{\rl}{(\log n)^3}} + \rl^{1/5} )$ & \Cref{cor:rs-lb-intro} \\
  \rowcolor{black!7!white}  
  Pseudo-deterministic & Random oracle & $\Omega(\frac{\rl}{(\log (2n/\rl))^2} + \left(\rl \log n \right)^{1/4} )$ & \Cref{thm:pd-lb-intro} \\
  {Static} & {Deterministic} & $\Omega(\frac{\rl}{\log (2n/\rl)} + \sqrt{\rl})$ & \cite{Stoeckl23} \\
      &  & $O(\frac{\rl \log \rl}{\log n} + \sqrt{\rl \log \rl})$ & \cite{Stoeckl23} \\
  \bottomrule
  \end{tabular}
  \caption{\label{table:mif-results} Bounds for the space complexity of
  $\mif(n,\rl)$, from this and prior work. To keep expressions simple, these
  bounds are evaluated at error level $\delta = 1/n^2$, when applicable.
  ($\dagger$) indicates that the precise results are stronger. \label{table:bounds-summary}}
  \end{minipage}
\end{table}

\begin{figure}[!htbp]
  \centering
  \includegraphics[height=9cm]{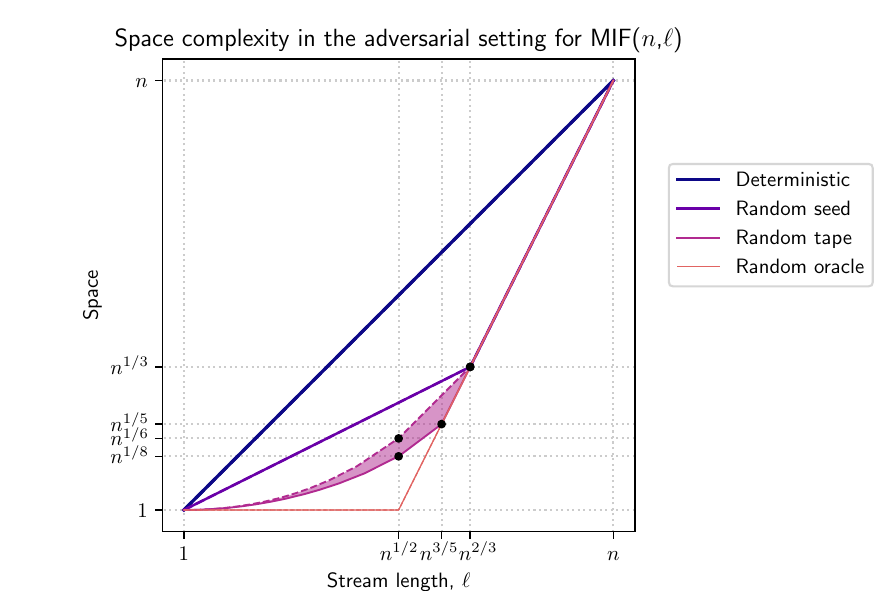}
  \caption{Known bounds for the space complexity of $\mif(n,\rl)$ in different streaming models, at error level $\delta=1/n^2$. This is a log-log plot. Upper and lower bounds are drawn using lines of the same color; the region between them is shaded. The upper and lower bounds shown all match (up to $\polylog(n)$ factors) except for the case of adversarially robust, random tape algorithms. Pseudo-deterministic and deterministic complexities match within $\polylog(n)$ factors. \label{fig:bounds-summary}}
\end{figure}

The main results of this paper consist of three new lower bounds and one new
upper bound on the space complexity of $\mif(n,\rl)$.  Stating the bounds in
their strongest forms leads to complicated expressions; therefore, we first
present some easier-to-read takeaways from these bounds that carry important
conceptual messages. In the lower bounds below, the error level should be
thought of as $\delta = 1/n^2$.

\begin{result} \label{res:rt-lb}
  At $\rl = \sqrt{n}$, adversarially robust random tape algorithms for
  $\mif(n,\rl)$ require $\Omega(\rl^{1/4})$ bits of space. More generally, for
  every constant $\alpha \in (0,1)$, there is a constant $\beta \in (0,1)$
  such that at $\rl = \Omega(n^\alpha)$, the space requirement is
  $\Omega(\rl^\beta)$, in the adversarially robust random tape setting.
\end{result}

This shows that \mif remains hard, even for modest values of $\rl$, if we must
be robust while using only a random {\em tape}, i.e., if there is a cost to
storing random bits we want to reuse---a very reasonable requirement for a
practical algorithm. The above result is an exponential separation between the
random tape and random oracle models.

The random {\em seed} model places an even greater restriction on an
algorithm: besides counting towards storage cost, random bits are available
only at initialization and not on the fly. Many actual randomized algorithms,
including streaming ones, are structured this way, making it a natural model
to study. We obtain the following result.

\begin{result} \label{res:rs-lb}
  Adversarially robust random seed algorithms for $\mif(n,\rl)$ require
  $\tOmega(\sqrt{\rl})$ bits of space.
\end{result}

Consider the two results above as $\rl$ decreases from $\sqrt{n}$ to
$\Theta(1)$. The bound in \Cref{res:rs-lb} stays interesting even when $\rl =
n^{o(1)}$, so long as $\rl \ge (\log n)^C$ for a suitable constant $C$ (in
fact, the full version of the result is good for even smaller $\rl$). In
contrast, the bound in \Cref{res:rt-lb} peters out at much larger values of
$\rl$. There is a very good reason: \mif starts to become ``easy,'' even under
a random-tape restriction, once $\rl$ decreases to sub-polynomial in $n$.
Specifically, we obtain the following {\em upper} bound.

\begin{result} \label{res:rt-ub}
  There is an adversarially robust random tape algorithm for $\mif(n,\rl)$
  that, in the regime $\rl = O(2^{\sqrt{\log n}})$, uses $O(\log\rl \cdot \log
  n)$ bits of space.
\end{result}

Notice that at $\rl = \Theta(2^{(\log n)^{1/C}})$, where $C \ge 2$ is
a constant, the bound in \Cref{res:rt-ub} is polylogarithmic in $\rl$.
Combined with the lower bound in \Cref{res:rs-lb}, we have another exponential
separation, between the random seed and random tape models.

Empirically, the existing literature on streaming algorithms consists of
numerous linear-sketch-based algorithms, which tend to be efficient in the
random seed model, and sampling-based algorithms, which naturally fit the
(stronger) random tape model. In view of this, the combination of
\Cref{res:rs-lb,res:rt-ub} carries the following important message.
\begin{center}\begin{minipage}{.8\linewidth}\begin{center}
  \emph{Sampling is provably more powerful than sketching in an adversarially
  robust setting.}
\end{center}\end{minipage}\end{center}

The proof of \Cref{res:rs-lb} uses a reduction, given in prior
work~\cite{Stoeckl23}, that converts a space lower bound in the
pseudo-deterministic setting to a related bound in the random-seed setting. A
pseudo-deterministic algorithm is allowed to use randomness (which, due to
Newman's theorem, might as well be of the oracle kind) but must, with high
probability, map each input to a {\em fixed} output, just as a deterministic
algorithm would. This strong property makes the algorithm adversarially
robust, because the adversary has nothing to learn from observing its outputs.
Thanks to the \cite{Stoeckl23} reduction, the main action in the proof of
\Cref{res:rs-lb} is the following new lower bound we give.

\begin{result} \label{res:pd-lb}
  Pseudo-deterministic random oracle algorithms for $\mif(n,\rl)$ require
  $\tOmega(\rl)$ bits of space.
\end{result}

These separations rule out the possibility of a way to convert an
adversarially robust random oracle algorithm to use only a random {\em seed}
or even a random {\em tape}, with only minor (e.g., a $\polylog(\rl,n)$
factor) overhead. In contrast, as we noted earlier, such a conversion is
routine in the static setting, due to Newman's theorem~\cite{Newman91}.
The separation between random oracle and random tape settings shows that
$\miflong$ is a problem for which much lower space usage is possible if one's
adversaries are computationally bounded (in which case a pseudo-random
generator can emulate a random oracle.)

\Cref{table:bounds-summary} shows more detailed versions of the above results
as well as salient results from earlier work. Together with
\Cref{fig:bounds-summary}, it summarizes the state of the art for the space
complexity of $\mif(n, \rl)$.  The fully detailed versions of our results,
showing the dependence of the bounds on the error probability, appear in later
sections of the paper, as indicated in the table.

\subsection{Related Work}

We briefly survey related work. An influential early work~\cite{HardtW13}
considered adaptive adversaries for {\em linear} sketches.  The adversarial
setting was formally introduced by \cite{BenEliezerJWY20}, who provided
general methods (like sketch-switching) for designing adversarially robust
algorithms given classic streaming algorithms, especially in cases where the
problem is to approximate a real-valued quantity. For some tasks, like
$F_0$-estimation, they obtained slightly better upper bounds by using a random
oracle, although later work \cite{WoodruffZ22} removed this need.
\cite{BenEliezerY20} observed that in sampling-based streaming algorithms,
increasing the sample size is often all that is needed to make an algorithm
adversarially robust. \cite{HassidimKMMS20} described how to use differential
privacy techniques as a more efficient alternative to sketch-switching, and
\cite{BenEliezerEO22} used this as part of a more efficient adversarially
robust algorithm for turnstile $F_2$-estimation.

Most of these papers focus on providing algorithms and general techniques, but
there has been some work on proving adversarially robust lower bounds.
\cite{KaplanMNS21} described a problem (of approximating a certain real-valued
function) that requires exponentially more space in the adversarial setting
than in the static setting. \cite{ChakrabartiGS22}, in a brief comment,
observed a similar separation for a simple problem along the lines of $\mif$.
They also proved lower bounds for adversarially robust coloring algorithms for
graph edge-insertion streams. \cite{Stoeckl23} considered the \mif problem as
defined here and, among upper and lower bounds in a number of models,
described an adversarially robust algorithm for \mif that requires a random
oracle; they asked whether a random oracle is {\em necessary} for
space-efficient algorithms.

The {\em white-box} adversarial setting~\cite{AjtaiBJSSWZ22} is similar to the
adversarial setting we study, with the adversary having the additional power
of seeing the internal state of the algorithm, including (if used) the random
oracle.  \cite{Stoeckl23} proved an $\Omega(\rl/\polylog(n)$) lower bound for
$\mif(n,\rl)$ for random tape algorithms in this setting, suggesting that any
more efficient algorithm for \mif must conceal some part of its internal
state. Pseudo-deterministic streaming algorithms were introduced by
\cite{GoldwasserGMW20}, who gave lower bounds for a few problems.
\cite{BravermanKKS23, GrossmanGS23} gave lower bounds for pseudo-deterministic
algorithms that approximately count the number of stream elements. The latter
shows they require $\Omega(\log m)$ space, where $m$ is the stream length; in
contrast, in the static setting, Morris's counter algorithm\footnote{Morris's
is a ``random tape'' algorithm; ``random seed'' algorithms for counting aren't
better than deterministic ones.} uses only $O(\log \log m)$ space.

While it is not posed as a streaming task, the {\em mirror game} introduced by
\cite{GargS18} is another problem with conjectured separation between the
space needed for different types of randomness. In the mirror game, two
players (Alice and Bob) alternately state numbers in the set $\{1,\ldots,n\}$,
where $n$ is even, without repeating any number, until one player mistakenly
states a number said before (loss) or the set is completed (tie).
\cite{GargS18} showed that if Alice has $o(n)$ bits of memory and plays a
deterministic strategy, Bob can always win. Later, \cite{Feige19, MenuhinN22}
showed that if Alice has access to a random oracle, she can tie-or-win
w.h.p.~using only $O(\polylog(n))$ space. A major open question here is how
much space Alice needs when she does not have a random oracle. \cite{MagenN22}
did not resolve this, but showed that if Alice is ``open-book'' (equivalently,
that Bob is a white-box adversary and can see her state), then Alice needs
$\Omega(n)$ bits of state to tie-or-win.

Assuming access to a random oracle is a reasonable temporary measure when
designing streaming algorithms in the static setting. As noted at the
beginning of \Cref{sec:intro}, \cite{Indyk06} designed $L_p$-estimation
algorithms using random linear sketch matrices, without regard to the amount
of randomness used, and then described a way to apply Nisan's
PRG~\cite{Nisan90} to partially derandomize these algorithms and obtain
efficient (random seed) streaming algorithms. In general, the use of PRGs for
linear sketches has some space overhead, which later work (see
\cite{JayaramW23} as a recent example) has been working to eliminate.

It is important to distinguish the ``random oracle'' type of streaming
algorithm from the ``random oracle model'' in cryptography~\cite{BellareR93},
in which one assumes that \emph{all} agents have access to the random oracle.
\cite{AjtaiBJSSWZ22}, when defining white-box adversaries, also assumed that
they can see the same random oracle as the algorithm; and, for one task,
obtained a more efficient algorithm against a computationally bounded
white-box adversary, when both have access to a random oracle, than when
neither do.  Tight lower bounds are known in neither case.

The power of different types of access to randomness has been studied in
computational complexity. \cite{Nisan93B} showed that logspace Turing machines
with a multiple-access random tape can (with zero error) decide languages that
logspace Turing machines with a read-once random tape decide only with bounded
two-sided error. This type of separation does not hold for {\em time}
complexity classes.

For a more detailed history and survey of problems related to \miflong, we
direct the reader to \cite{Stoeckl23}.

\section{Technical Overview}\label{sec:technical-overview}

The proofs of \Cref{res:rt-lb,res:rt-ub,res:pd-lb} are all significant
generalizations of existing proofs from \cite{Stoeckl23} which handled
different (and more tractable) models. The proof of \Cref{res:rs-lb} consists
of applying a reduction from \cite{Stoeckl23} to the lower bound given by
\Cref{res:pd-lb}. As we explain our techniques, we will summarize the relevant
``basic'' proofs from \cite{Stoeckl23}, which will clarify the enhancements
needed to obtain our results.  

Space complexity lower bounds in streaming models are often proved via
communication complexity. This meta-technique is unavailable to us, because
the setup of communication complexity blurs the distinctions between random
seed, random tape, and random oracle models and our results are all about
these distinctions. Instead, to prove \Cref{res:rt-lb}, we design a suitable
strategy for the stream-generating adversary that exploits the algorithm's
random-tape limitation by learning enough about its internal state. Our
adversary uses a nontrivially recursive construction.  To properly appreciate
it, it is important to understand what streaming-algorithmic techniques the
adversary must contend with.  Therefore, we shall discuss our {\em upper}
bound result first.

\subsection{Random Tape Upper Bound\texorpdfstring{ (\Cref{res:rt-ub}; \Cref{thm:rt-ub-intro})}{}}\label{subsec:rt-ub-overview}

The adversarially robust random tape algorithm for $\mif(n,\rl)$ can be seen as a
generalization of the random oracle and random seed algorithms.

\mypara{The random oracle algorithm and its adversaries} The random
oracle algorithm for $\mif(n,\rl)$ from \cite{Stoeckl23} has the following structure. It interprets
its oracle random string as a uniformly random sequence $L$ containing $\rl +
1$ distinct elements in $[n]$. As it reads its input, it keeps track of which
elements in $L$ were in the input stream so far (were ``covered''). It reports
as its output the first uncovered element of $L$.  Because $L$ comes from the
oracle, the space cost of the algorithm is just the cost of keeping track of
the set $J$ of covered positions in $L$. We will explain why that can be done
using only $O((\rl^2/n + 1) \log \rl)$ space, in expectation.

An adversary for the algorithm only has two reasonable strategies for choosing the
next input. It can ``echo'' back the current algorithm output to be the next input to
the algorithm. It can also choose the next input to be a value from the set $U$ of
values that are neither an earlier input nor the current output---but because $L$
is chosen uniformly at random, one can show that the adversary can do no better than picking
the next input uniformly at random from $U$. (The third strategy, of choosing an old
input, has no effect on the algorithm.) When the algorithm is run against an
adversary that chooses inputs using a mixture of the echo and random strategies,
the set $J$ will be structured as the union of a contiguous interval starting at 1
(corresponding to the positions in $L$ covered by the echo strategy) and a sparse
random set of expected size $O(\rl^2/n)$ (corresponding to positions in $L$ covered by
the random strategy). Together, these parts of $J$ can be encoded using 
$O((\rl^2/n + 1) \log \rl)$ bits, in expectation.

\mypara{Delaying the echo strategy} If we implemented the above random
oracle algorithm as a random seed algorithm, we would need $\Omega(\rl)$ bits
of space, just to store the random list $L$. But why does $L$ need to have
length $\rl + 1$?  This length is needed for the algorithm to be resilient to
the echo strategy, which covers one new element of $L$ on every step; if $L$
were shorter, the echo strategy could entirely cover it, making the algorithm
run out of possible values to output. The random seed algorithm for
$\mif(n,\rl)$ works by making the echo strategy less effective, ensuring that
multiple inputs are needed for it to cover another element of $L$. It does this
by partitioning $[n]$ into $\Theta(\rl)$ disjoint subsets (``blocks'') of size
$\Theta(n/\rl)$, and then taking $L$ to be a random list of blocks (rather
than a random list of elements of $[n]$).  We will now say that a block is
``covered'' if \emph{any} element of that block was an input. Instead of
outputting the first uncovered element in $L$, the algorithm will run a
deterministic algorithm for $\mif$ \emph{inside} the block corresponding to
the first uncovered block of $L$, and report outputs from that; and will only
move on to the next uncovered block when the nested algorithm stops. See
\Cref{alg:rt-example} for the details of this design. Because the analogue of
the echo strategy now requires many more inputs to cover a block, we can make
the list $L$ shorter. This change will not make the random strategy much more
effective.\footnote{The fact that $[n]$ is split into $\Omega(\rl)$ blocks is
enough to mitigate the random strategy; with $\rl$ guesses, the adversary is
unlikely to guess more than a constant fraction of the elements in $L$.} The
minimum length of $L$ is constrained by the $O(n/\rl)$ block sizes, which limit the number of outputs that the nested algorithm can make; as a result, one must have $L = \Omega(\rl^2/n)$. In the end, after balancing the length of the list with the
cost of the nested algorithm, the optimal list length for the random seed
algorithm will be $O(\rl^2/n + \sqrt{\rl})$.

\begin{algorithm}[htb]
  \caption{Example: recursive construction for a random tape $\mif(n,\rl)$ algorithm, building on algorithm $\cA$}
  \label[listing]{alg:rt-example}

  \begin{algorithmic}[1]
  \Statex Parameter: $t \in [\Omega(\rl^2 / n), \rl]$ is the number of parts into which the input stream is split
  \vspace{2mm}
  \Statex \ul{\textbf{Initialization}}:
    \State Let $k = O(t)$, $s = O(\rl)$, and $B_1,\ldots,B_{s}$ be a partition of $[n]$ into $s$ equal ``blocks'' \Comment{assuming $s \mid n$}
    \State $L \gets$ uniformly randomly chosen sequence of $k$ distinct elements of $[s]$
    \State $J \gets \emptyset$, is a subset of $[k]$ \Comment{a set marking which blocks of $L$ have been covered}
    \State $c \gets 1$ \Comment{the current active block}
    \State $A \gets$ instance of algorithm $\cA$ solving $\mif(n/s, \ceil{\rl/t})$

  \vspace{2mm}
  \Statex \ul{\textbf{Update}($a \in [n]$)}: 
    \State Let $h$ be the block containing $a$, and $x$ the rank of $a$ in $B_h$
    \If{$h \in L$}
        \State Add $j$ to $J$, where $L_j = h$ \Comment{Mark list element containing $h$ as used}
    \EndIf
    \If{$h = L_c$}
        \State $A.\textsc{Update}(x)$
    \EndIf
    \If{$A$ is out of space} \Comment{This requires that $A.\textsc{Update}()$ be called $\ge \ceil{\rl/t}$ times}
      \State $c \gets$ least integer which is $>c$ and not in $J$ \Comment{This line may abort if $J = [k]$}
      \State $A \gets$ new instance of algorithm $\cA$ \Comment{Using new random bits, if $\cA$ is randomized}
    \EndIf
    
  \vspace{2mm}
  \Statex \ul{\textbf{Output} $\rightarrow [n]$}:
    \State Let $x \in [n/s]$ be the output of $A$
    \State \textbf{return} $x$th entry of block $B_c$
  \end{algorithmic}
\end{algorithm}

\begin{figure}[ht]
  \centering
  \includegraphics[width=15cm]{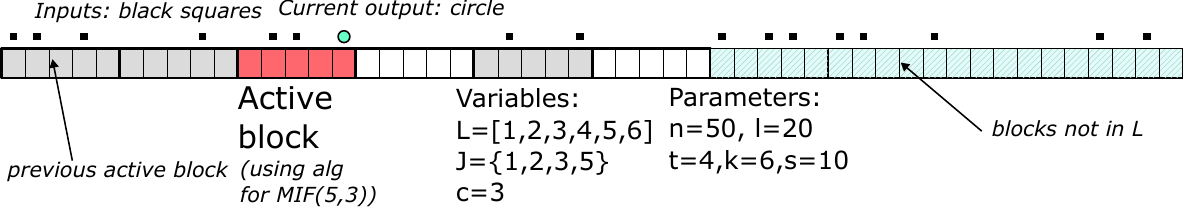}
  \caption{A diagram illustrating the state of an instance of \Cref{alg:rt-example} on an example input. Positions on the horizontal axis correspond to integers in $[n]$; the set of values in the input stream ($\{1,2,4,9,12,13,\ldots\}$) is marked with black squares; the current output value ($15$) with a circle. Outside this example, $L$ need not be contiguous or in sorted order. \label{fig:rt-rec-diagram}}
\end{figure}

\mypara{The recursive random tape algorithm} The random seed algorithm for
$\mif(n,\rl)$ used the construction of \Cref{alg:rt-example} to build on top
of an ``inner'' deterministic algorithm.\footnote{The construction uses
randomness in two places: when initializing the random sequence $L$, and
(possibly) each time the inner algorithm is initialized. For the random seed
model, every ``inner'' initialization would require a corresponding set of
random bits, which are counted toward the space cost of the algorithm. Using a
deterministic inner algorithm avoids this cost.} To get an efficient random
tape algorithm, we can recursively apply the construction of
\Cref{alg:rt-example} $d - 1$ times, for $d = O(\min(\log \rl, \log n / \log
\rl))$; at the end of this recursion, we can use a simple deterministic
algorithm for $\mif$. The optimal lengths of the random lists used at each
level of the recursion are determined by balancing the costs of the different
recursion levels.  We end up choosing list lengths that all bounded by a
quantity which lies between $O(\rl^{1/d})$ and $O(\rl^{1/(d-1)})$.

In the extreme case where $d = \Theta(\log \rl)$ and the required error level
$\delta$ is constant, our recursive algorithm may have a stack of random
lists, each of length $2$, and every time a level of the algorithm completes
(i.e., all blocks of a list have been used), it will make a new instance of
that level. That is, some large uncovered block will be split into many
smaller blocks, and the algorithm will randomly pick two of them for the new
instance's list. Because the lists are all short, the algorithm will not need
to remember many random bits at a given time; in exchange, for this regime it
needs a very large ($n = \rl^{\Omega(d)}$) number of possible outputs and will
frequently need to sample new random lists.

We defer the exact implementation details to the final version of our
algorithm, \Cref{alg:rt}.  It looks somewhat different from the recursive
construction in \Cref{alg:rt-example}, because we have unraveled the recursive
framing to allow for a simpler error analysis that must only bound the
probability of a single ``bad event.''

\subsection{Random Tape Lower Bound\texorpdfstring{ (\Cref{res:rt-lb}; \Cref{thm:rt-lb-intro})}{}} \label{subsec:rt-lb-overview}

\mypara{The \avoid problem} At the core of many of the \mif lower bounds is
the \textsc{SubsetAvoidance} communication problem, introduced
in~\cite{ChakrabartiGS22}. Here we have two players, Alice and Bob, and a
known universe $[m]$: Alice has a set $A \subseteq [m]$ of size $a$, and
should send a message (as short as possible) to Bob, who should use the
message to output a set $B \subseteq [m]$ of size $b$ which is disjoint from
$A$. Henceforth, we'll call this problem $\avoid(m,a,b)$.
\cite{ChakrabartiGS22} showed that both deterministic and constant-error
randomized one-way protocols for this problem require $\Omega(a b /m)$ bits of
communication. An adversarially robust $z$-space algorithm for $\mif(m, a +
b)$ can be used as a subroutine to implement a $z$-bit one-way protocol for
$\avoid(m,a,b)$, thereby proving $z = \Omega(a b /m)$. This immediately gives
us an $\Omega(\rl^2/n)$ space lower bound for $\mif(n,\rl)$, which, as we have
seen, is near-optimal in the robust, random oracle setting.

\mypara{The random tape lower bound} 
To prove stronger lower bounds that exploit the random tape limitation of the
algorithm, we need a more sophisticated use of \avoid. Fix an adversarially
robust, random tape, $z$-space algorithm $\cA$ for $\mif(n,\rl)$.  Roughly speaking,
while the random oracle argument used $\cA$ to produce an \avoid protocol at
the particular scale $a = b = \rl$, for the fixed universe $[n]$, our random
tape argument will ``probe'' $\cA$ in a recursive fashion---reminiscent of the
recursion in our random tape upper bound---to identify a suitable scale and
sub-universe at which an \avoid protocol can be produced. This probing will
itself invoke the \avoid lower bound to say that if an $\avoid(m,a,b)$
protocol is built out of a $z$-space streaming algorithm where $z \ll a$, then
$B$ must be small, with size $b = O((z/a) m)$.

We will focus on the regime where $\delta = O(1/n)$. This error level requires
a measure of structure from the algorithm: it cannot just pick a random output
each step, because that would risk colliding with an earlier input with $\ge
1/n$ probability. Our recursive argument works by writing $z$, the space usage
of $\cA$, as a function of a space lower bound for $\mif(w,t)$, where $w =
\Theta(z n / \rl)$ and $t = \Theta(\rl/z)$. For small enough $z$, $t^2/w \gg
\rl^2/n$, so by repeating this reduction step a few times we can increase the
ratio of the stream length to the input domain size until we can apply the
simple $\Omega(\hat{\rl}^2/\hat{n})$ lower bound for
$\mif(\hat{n},\hat{\rl})$.  With the right number of reduction steps, one
obtains the lower bound formula of \Cref{thm:rt-lb-intro}, of which
\Cref{res:rt-lb} is a special case.

\mypara{The reduction} The reduction step argues that the $\mif(n,\rl)$
algorithm $\cA$ ``contains'' a $z$-space algorithm for $\mif(w,t)$, which, on
being given any $t = O(\rl / z)$ items in a certain sub-universe $W \subseteq
[n]$ of size $w = O(z n / \rl)$, will repeatedly produce missing items {\em
from that sub-universe}.  That such a set $W$ \emph{exists} can be seen as a
consequence of the lower bound for $\avoid$: if $\cA$ receives a random sorted
subset $S$ of $\rl/2$ elements in $[n]$, then because there are
\smash{$\binom{n}{\rl/2}$} possible subsets, most of the $2^z$ states of $\cA$
will need to be ``good'' for $\Omega(2^{-z} \binom{n}{\rl/2})$ different
subsets. In particular, upon reaching a given state $\sigma$, for $\cA$ to
solve \mif with error probability $O(1/n)$, its outputs henceforth---for the
next $\rl/2$ items in the stream---must avoid most of the sets of inputs that
could have led it to $\sigma$.  We will prove by a counting argument
(\Cref{lem:forward-avoid}) that after the random sequence $S$ is sent, each
state $\sigma$ has an associated set $H_\sigma$ of possible ``safe'' outputs
which are unlikely to collide with the inputs from $S$, and that $|H_\sigma|$
is typically $O(z n / \rl)$. Thus, for a typical state $\sigma$, starting
$\cA$ from $\sigma$ causes its next $\rl/2$ outputs to be inside $H_\sigma$,
w.h.p.; in other words, $\cA$ contains a ``sub-algorithm'' solving $\mif(O(z n
/ \rl), \rl/2)$ on the set $W = H_\sigma$. 

However, even though there exists a set $W$ on which $\cA$ will concentrate
its outputs, it may not be possible for an adversary to find it. In
particular, had $\cA$ been a random {\em oracle} algorithm, each setting of
the random string might lead to a different value for $W$, making $W$
practically unguessable. But $\cA$ is in fact a random {\em tape} algorithm,
so we can execute the following strategy.

In our core lemma, \Cref{lem:mif-rt-lb-step}, we design an adversary
(\Cref{alg:rt-adversary-step}) that can with $\Omega(1)$ probability identify
a set $W$ of size $\Theta(z n /\rl)$ for which the next $\Theta(\rl/z)$
outputs of $\cA$ will be contained in $W$, with $\Omega(1)$ probability, no
matter what inputs the adversary sends next. In other words, our adversary
will identify a part of the stream and a sub-universe of $[n]$ where the
algorithm solves $\mif(\Theta(z n /\rl), \Theta(\rl/z))$. The general strategy
is to use an iterative search based on a win-win argument.  First, the
adversary will send a stream comprising a random subset $S$ of size $\rl/2$ to
$\cA$, to ensure that henceforth its outputs are contained in some (unknown)
set $H_\rho$, where $\rho$ is the (unknown) state reached by $\cA$ just after
processing $S$.  Because $\cA$ has $\le 2^z$ states, from the adversary's
perspective there are $\le 2^z$ possible candidates for $H_\rho$. Then, the
adversary conceptually divides the rest of the stream to be fed to $\cA$ into
$O(z)$ phases, each consisting of $t = O(\rl/z)$ stream items. In each phase,
one of the following things happens.
\begin{enumerate}
\item There exists a ``sub-adversary'' (function to choose the $t$ items
constituting the phase, one by one) which will probably make $\cA$ output an
item that rules out a constant fraction of the candidate values for $H_\rho$
(output $i$ rules out set $J$ if $i \notin J$). The adversary then runs this
sub-adversary.
\item No matter how the adversary picks the $t$ inputs for this phase, there
will be a set $W$ (roughly, an ``average'' of the remaining candidate sets)
that probably contains the corresponding $t$ outputs of $\cA$.
\end{enumerate}
As the set of candidate sets can only shrink by a constant fraction $O(z)$
times, the first case can only happen $O(z)$ times, with high probability.
Thus, eventually, the adversary will identify the set $W$ that it seeks. Once
it has done so, it will run the optimal adversary for $\mif(\Theta(z n /\rl),
\Theta(\rl/z))$.  This essentially reduces the lower bound for $\mif(\rl,n)$
to that for $\mif(\Theta(z n /\rl), \Theta(\rl/z))$.

One subtlety is that we will need to carefully account for the probability
that $\cA$, over the next $\Theta(\rl/z)$ stream items, produces outputs
outside $W$. This will require us to distinguish between two types of
``errors'' for the algorithm over those next $\Theta(\rl/z)$ items: an $O(1)$
chance of producing an output outside $W$, and a smaller chance of making a
mistake per the definition of \mif, i.e., outputting an item that was not
missing (cf.~\Cref{def:mif-rt-err-complexity}).

\subsection{Random Seed Lower Bound via Pseudo-Determinism\texorpdfstring{ (\Cref{res:rs-lb};
\Cref{cor:rs-lb-intro})}{}} \label{subsec:rs-lb-overview}

The adversary constructed above for our random tape lower bound can be seen as
a significant generalization of the adversary used by \cite{Stoeckl23} to
prove a random seed lower bound conditioned on a (then conjectured)
pseudo-deterministic lower bound. Indeed, \cite{Stoeckl23}'s adversary against
a $z$-space algorithm $\cA$ also proceeds in a number of phases, each of
length $t = \Theta(\rl / z)$. In each step, either (1)~it can learn some new
information about the initial state of $\cA$ (the ``random seed''), by sending
$\cA$ a specific stream of inputs in $[n]^t$, looking at the resulting output,
and ruling out the seed values that could not have produced the output; or
(2)~it cannot learn much information, because for any possible input stream in
$[n]^t$, $\cA$ has an output that it produces with constant probability.  Each
time the adversary follows the case~(1), a constant fraction of the $\le 2^z$
seed values are ruled out. Therefore, either within $O(z)$ steps the adversary
will exactly learn the seed, at which point it can perfectly predict $\cA$'s
behavior, which lands us in case~(2); or $\cA$ will not reveal much
information about the seed in a given phase, which also puts us in case~(2).
Because case~(2) means that $\cA$ behaves pseudo-deterministically, $\cA$ must
use enough space to pseudo-deterministically solve $\mif(n, t)$.

Thus, \Cref{res:rs-lb} follows as a corollary of \Cref{res:pd-lb}, which we
discuss next.

\subsection{Pseudo-Deterministic Lower Bound\texorpdfstring{ (\Cref{res:pd-lb}; \Cref{thm:pd-lb-intro})}{}} \label{subsec:pd-lb-overview}

This proof generalizes \cite{Stoeckl23}'s space lower bound for {\em
deterministic} $\mif(n,\rl)$ algorithms, which we briefly explain. Fix a
deterministic $\mif(n,\rl)$ algorithm $\cA$ that uses $z$ bits of space. For
each stream $\stream$ with length $|\stream| \le \rl$, define $F_\stream$ to
be the set of {\em all possible outputs} of $\cA$ corresponding to
length-$\rl$ streams that have $\stream$ as a prefix. Let $\rho$ be a stream
such that $|\stream| + |\rho| \le \rl$. Then, by definition, $F_{\stream
\concat \rho} \subseteq F_\stream$ whereas, by the correctness of $\cA$,
$F_{\stream \concat \rho} \cap \rho = \emptyset$. Now consider the $\avoid$
problem over the universe $F_\stream$, for a fixed $\stream$: if Alice gets
$\rho \subseteq F_\stream$ as an input, she could send Bob the state $\sigma$
of $\cA$ upon processing $\stream \concat \rho$, whereupon Bob could determine
$F_{\stream \concat \rho}$ (by repeatedly running $\cA$'s state machine
starting at $\sigma$), which would be a valid output. 

Let us restrict this scenario to suffixes $\rho$ of some fixed length $t$;
we'll soon determine a useful value for $t$. By the above observations, were
it the case that
\begin{align} \label{eq:non-shrinkage}
  \exists \stream \in [n]^{\le \rl - t}~ 
  \forall \rho \in [n]^t \colon
  |F_{\stream \concat \rho}| \ge \tfrac12 |F_\stream| \,,
\end{align}
we would have a $z$-bit protocol for $\avoid(|F_\stream|, t, \frac12
|F_\stream|)$. By the \cite{ChakrabartiGS22} lower bound, we would have $z \ge
Ct$ for a universal constant $C$.  On the other hand, if the opposite were
true, i.e.,
\begin{align} \label{eq:shrinkage}
  \forall \stream \in [n]^{\le \rl - t}~ 
  \exists \rho \in [n]^t \colon
  |F_{\stream \concat \rho}| < \tfrac12 |F_\stream| \,,
\end{align}
then, starting from the empty stream $\emptystream$, we could add a sequence
of length-$t$ suffixes $\rho_1, \ldots, \rho_d$ (where $d \le
\floor{\rl/t}$) such that $|F_{\rho_1 \concat \cdots \concat \rho_d}| < 2^{-d}
|F_\emptystream| \le 2^{-d} n$. Since $\cA$ must produce {\em some} output at
time $\rl$, this would be a contradiction for $d \ge \log n$. Thus, for a
setting of $t = \Theta(\rl/\log n)$, situation \eqref{eq:non-shrinkage} must
occur, implying a lower bound of $z = \Omega(\rl/\log n)$.

\mypara{Relaxing ``all outputs'' to ``common outputs''} Examining the above
argument closely shows where it fails for pseudo-deterministic algorithms. In
constructing an $\avoid$ protocol above, we needed the key property that
$F_\stream$ can be determined from just the {\em state} of $\cA$ upon
processing $\stream$.  For pseudo-deterministic algorithms, if we simply
define $F'_\stream$ to be ``the set of all \emph{canonical} outputs at time
$\rl$ for continuations of $\stream$,'' we cannot carry out the above proof
plan because this $F'_\stream$ cannot be computed reliably from a single
state: given a random state $\sigma$ associated to $\stream$, on average a
$\delta$ fraction of the outputs might be incorrect and have arbitrary values;
even a single bad output could corrupt the union calculation!

To work around this issue, we replace $F_\stream$ with a more elaborate
recursive procedure \textsc{FindCommonOutputs}, (or \fco for short) that
computes the ``most common outputs'' at time $\rl$ for a certain distribution
over continuations of $\stream$. To explain this, let us imagine positions $1$
through $\rl$ in the input stream as being divided into $d$ contiguous ``time
intervals.'' In the deterministic proof, these intervals were of length $t$
each. Given a stream $\stream$ that occupies the first $d-k$ of these
intervals, $F_\stream$ can be thought of as the output of a procedure
\textsc{FindAllOutputs} (or \fao for short) where $\fao(\cA, \stream, k)$
operates as follows: for each setting $\rho$ of the $(d-k+1)$th time interval,
call $\fao(\cA, \stream \concat \rho, k-1)$ and return the union of the sets
so obtained. In the base case, $\fao(\cA, \stream, 0)$ takes a stream $\stream
\in [n]^\rl$ and returns the singleton set $\{\cA(\stream)\}$. The
deterministic argument amounts to showing that, with interval lengths $t =
\Theta(z)$, the set $\fao(\cA, \stream, k)$ has cardinality $\ge 2^k$; since
$\fao(\cA, \emptystream, d)$ has cardinality $\le n$, this bounds $d \le \log
n$, which lower-bounds $z$.

For our pseudo-deterministic setting, we use time intervals as above and we
design an analogous procedure $\fco(B, C, \stream, k)$ that operates on a
function $B \colon [n]^\rl \to [n]$ (roughly corresponding to an \mif
algorithm), a matrix $C$ of random thresholds,\footnote{The use of random
thresholds is a standard trick for robustly computing quantities in the
presence of noise.} and a stream $\stream$ of length $\le \rl$ that occupies
the first $d-k$ time intervals. The recursive structure of $\fco(B, C,
\stream, k)$ is similar to \fao, but crucially, the sets computed by the
recursive calls $\fco(B, C, \stream \concat \rho, k-1)$ are used differently.
Instead of simply returning their union, we use these sets to collect
statistics about the outputs in $[n]$ and return only those that are
sufficiently common. The thresholds in $C$ control the meaning of
``sufficiently common.''

The function $B$ provided to \fco can be either the canonical output function
$\Pi$ of the given pseudo-deterministic algorithm $\cB$ or a deterministic
algorithm $A \sim \cB$ obtained by fixing the random coins of $\cB$. We will
show that:
\begin{itemize}

  \item With high probability over $C$ and the randomness of $\cB$, $\fco$
  will produce the same outputs on $\Pi$ and $\cB$. In other words, $\fco$ is
  robust to noise (i.e., to algorithm errors).

  \item When applied to the canonical algorithm, the cardinalities of the sets
  returned by $\fco$ will grow exponentially with $k$. Equivalently, similar
  to $|F_\stream|$ from the deterministic proof, the cardinality of
  $\fco(\stream, \ldots)$ will shrink exponentially as the length $|\stream|$
  grows.  Ultimately, this is proven by implementing $\avoid$ using $\fco$ on
  the actual algorithm as a subroutine. Critically, this implementation uses
  the fact that the recursive calls to $\fco$ w.h.p. produce the same output
  on $\Pi$ and $\cB$.

  \item The argument can be carried out with all but one of the $d$ time
  intervals being of length $\approx \Theta(z)$. If $z$ were too small, $d$
  would be large enough that for the empty stream prefix we would have
  $|\fco(\emptystream,\ldots)| > n$, which contradicts $\fco(\ldots) \subseteq
  [n]$; this lets us derive a lower bound on $z$.

\end{itemize}

\mypara{Error amplification and the case $n \gg \rl$} One technical issue that
arises is that the correctness of \fco requires $\cB$'s error probability to
be as small as $1/n^{\Omega(\log n)}$.  Fortunately, even if the original
error probability was $1/3$, we can reduce it to the required level since
pseudo-deterministic algorithms allow efficient error reduction by independent
repetition. A second technical point is that a $z$-space pseudo-deterministic
algorithm can be shown to have only $O(2^{z})$ possible outputs; so if $n \gg
\rl$, we can sometimes obtain a stronger lower bound by pretending that $n$ is
actually $O(2^{z})$. This is formalized by a simple encoding argument.

\section{Preliminaries}\label{sec:prelim}

\mypara{Notation} Throughout this paper, $\log x = \log_2 x$, while $\ln x =
\log_e x$. The set $\NN$ consists of all positive integers; $[k] :=
\{1,2,\ldots,k\}$; and $[a,b)$ is a half open interval of real numbers. For a
condition or event $E$, the symbol $\indic_{E}$ takes the value $1$ if $E$
occurs and $0$ otherwise. The sequence (stream) obtained by concatenating
sequences $a$ and $b$, in that order, is denoted $a \concat b$. For a set $S$
of elements in a totally ordered universe, $\sort(S)$ denotes the sequence of
elements of $S$ in increasing order; $\binom{S}{k}$ is the set of $k$-element
subsets of $S$; and $\SSD{S}{k} = \{\sort(Y) : Y \in \binom{S}{k}\}$. We
sometimes extend set-theoretic notation to vectors and sequences; e.g., for $y
\in [n]^t$, write $y \subseteq S$ to mean that $\forall i \in [t]: y_i \in S$.
For a set $X$, $\triangle[X]$ denotes the set of probability distributions
over $X$, while $A \in_R X$ indicates that $A$ is chosen uniformly at random
from $X$. When naming probability distributions, we will use either
calligraphic letters (e.g., $\cA,\cD$) or the letter $\mu$; $A \sim \mu$ means
that $A$ is drawn from the distribution $\mu$.

\subsection{Useful Lemmas}\label{subsec:useful-lemmas}

These will be used in following sections. When no external work is cited, a proof is given for completeness either here or in \Cref{subsec:useful-proofs}.

\begin{lemma}[Multiplicative Azuma's inequality]\label{lem:azumanoff}
  Let $X_1,\ldots,X_t$ be $[0,1]$ random variables, and $\alpha \ge 0$. If, for all $i \in [t]$, $\EE[X_i \mid X_1,\ldots,X_{i-1}] \le p_i$, then
  \begin{align*}
    \Pr\left[ \sum_{i=1}^{t} X_i \ge (1 + \alpha) \sum_{i =1}^{t} p_i \right] 
    \le \exp\left( - ((1+\alpha) \ln (1+\alpha) - \alpha) \sum_{i =1}^{t} p_i\right) 
    \le \exp\left( - \frac{\alpha^2}{2 + \alpha} \sum_{i =1}^{t} p_i\right) \,.
  \end{align*}
  On the other hand, if for all $i$, $\EE[X_i \mid X_1,\ldots,X_{i-1}] \ge p_i$, then
  \begin{align*}
    \Pr\left[ \sum_{i=1}^{t} X_i \le (1 - \alpha) \sum_{i =1}^{t} p_i \right]
    \le \exp\left( - ((1-\alpha) \ln (1-\alpha) + \alpha) \sum_{i =1}^{t} p_i\right) 
    \le \exp\left( - \frac{\alpha^2}{2} \sum_{i =1}^{t} p_i\right) \,.
  \end{align*}
\end{lemma}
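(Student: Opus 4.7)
The plan is to follow the classical exponential moment (Chernoff) argument, adapted to the martingale setting. For any $\lambda > 0$ and any threshold $c$, Markov's inequality yields $\Pr[\sum_i X_i \ge c] \le e^{-\lambda c}\, \EE[\exp(\lambda \sum_i X_i)]$, so the real work is to control the moment generating function using only the one-sided conditional mean assumption. For this I would use the convexity bound $e^{\lambda x} \le 1 + x(e^\lambda - 1)$, which holds for every $x \in [0,1]$ and every real $\lambda$ because $x \mapsto e^{\lambda x}$ is convex on $[0,1]$. Taking conditional expectation given $X_1,\ldots,X_{i-1}$ and using the hypothesis $\EE[X_i \mid X_1,\ldots,X_{i-1}] \le p_i$ (together with $e^\lambda - 1 > 0$ for $\lambda > 0$) gives $\EE[e^{\lambda X_i} \mid X_1,\ldots,X_{i-1}] \le 1 + p_i(e^\lambda - 1) \le \exp\bigl(p_i(e^\lambda - 1)\bigr)$.

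Next I would chain these bounds via the tower property. Writing $S_t = \sum_{i=1}^t X_i$, iterating
\[
\EE[e^{\lambda S_t}] = \EE\bigl[e^{\lambda S_{t-1}}\, \EE[e^{\lambda X_t} \mid X_1,\ldots,X_{t-1}]\bigr] \le \exp\bigl(p_t(e^\lambda-1)\bigr)\, \EE[e^{\lambda S_{t-1}}]
\]
gives $\EE[e^{\lambda S_t}] \le \exp\bigl((e^\lambda-1)\mu\bigr)$, where $\mu := \sum_i p_i$. Plugging this into Markov at $c = (1+\alpha)\mu$ and optimizing by taking $\lambda = \ln(1+\alpha)$ produces $\Pr[S_t \ge (1+\alpha)\mu] \le \exp\bigl(-\mu[(1+\alpha)\ln(1+\alpha) - \alpha]\bigr)$, which is the first claimed inequality. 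The second (weaker) inequality reduces to the purely analytic fact $(1+\alpha)\ln(1+\alpha) - \alpha \ge \alpha^2 / (2+\alpha)$ for $\alpha \ge 0$, which I would verify by setting $f(\alpha) := (1+\alpha)\ln(1+\alpha) - \alpha - \alpha^2/(2+\alpha)$, noting $f(0) = 0$, and showing $f'(\alpha) \ge 0$ on $[0,\infty)$ by elementary calculus.

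The lower-tail bound is symmetric. If $\alpha \ge 1$, then $(1-\alpha)\mu \le 0 \le S_t$ almost surely and the inequality is trivial. For $\alpha \in [0,1)$, I would repeat the argument with $\lambda < 0$. The same convexity bound $e^{\lambda x} \le 1 + x(e^\lambda - 1)$ holds, but now $e^\lambda - 1 < 0$, so the hypothesis $\EE[X_i \mid X_1,\ldots,X_{i-1}] \ge p_i$ is precisely the right direction to conclude $\EE[e^{\lambda X_i} \mid X_1,\ldots,X_{i-1}] \le \exp\bigl(p_i(e^\lambda - 1)\bigr)$. Chaining and taking $\lambda = \ln(1-\alpha)$ in Markov applied to $\Pr[S_t \le (1-\alpha)\mu] = \Pr[e^{\lambda S_t} \ge e^{\lambda(1-\alpha)\mu}]$ yields the first lower-tail inequality; the weaker version follows from $(1-\alpha)\ln(1-\alpha) + \alpha \ge \alpha^2/2$, once again a standard analytic estimate.

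The only delicate point is bookkeeping in the conditional step: one must use that the convexity bound has a \emph{linear} right-hand side, so that taking $\EE[\,\cdot \mid X_1,\ldots,X_{i-1}]$ preserves the inequality, and that the sign of $e^\lambda - 1$ matches the direction of the one-sided hypothesis on $\EE[X_i \mid X_1,\ldots,X_{i-1}]$ in the two tails. Once this is in place, the rest is standard Chernoff manipulation plus the two real-variable inequalities needed to pass from the tight exponent to the friendlier quadratic exponent.
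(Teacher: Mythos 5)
Your proof is correct and follows essentially the same route as the paper's: Markov's inequality on the exponential moment, the convexity bound $e^{\lambda x}\le 1 + x(e^\lambda-1)$ on $[0,1]$, peeling off conditional expectations via the tower property, and the substitution $\lambda=\ln(1\pm\alpha)$. Your explicit handling of the $\alpha\ge 1$ case in the lower tail (where $\ln(1-\alpha)$ is undefined but the bound is vacuous) is a small point of care that the paper glosses over, but the argument is otherwise the same.
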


\noindent In contrast to the above, the ``usual'' form of Azuma's inequality uses a
martingale presentation and gives an additive-type bound.

\begin{lemma}[Chernoff bound with negative association, from \cite{JoagDevP83}]\label{lem:chernoff-neg-assoc} The standard multiplicative Chernoff bounds work with negatively associated random variables.
\end{lemma}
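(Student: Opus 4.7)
The plan is to reduce the claim to the standard MGF-based derivation of the Chernoff bound, using the one structural property of negatively associated random variables that makes this derivation go through. Recall that a collection $X_1,\ldots,X_t$ is negatively associated if, for every pair of disjoint index sets $I,J \subseteq [t]$ and every pair of coordinate-wise nondecreasing functions $f,g$, one has $\EE[f(X_I) g(X_J)] \le \EE[f(X_I)] \EE[g(X_J)]$. I would cite Joag-Dev and Proschan for this definition and for the following consequence (a straightforward induction on $t$): for any nondecreasing functions $\phi_1,\ldots,\phi_t$,
\[
  \EE\!\left[\prod_{i=1}^t \phi_i(X_i)\right] \le \prod_{i=1}^t \EE[\phi_i(X_i)] \,.
\]

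The main step is to apply this inequality with $\phi_i(x) = e^{\lambda x}$ for any $\lambda \ge 0$ (which is nondecreasing), yielding
\[
  \EE\!\left[e^{\lambda \sum_i X_i}\right] = \EE\!\left[\prod_i e^{\lambda X_i}\right] \le \prod_i \EE\!\left[e^{\lambda X_i}\right] \,.
\]
Thus the joint moment generating function is dominated coordinate-wise by the product of marginal MGFs, exactly as in the independent case. The usual Chernoff argument---apply Markov's inequality to $e^{\lambda \sum_i X_i}$, bound each marginal MGF using $\EE[X_i] \le p_i$ and $X_i \in [0,1]$ via $e^{\lambda x} \le 1 + x(e^\lambda - 1)$, and optimize over $\lambda$---then yields the upper-tail bound verbatim. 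For the lower tail, apply the analogous argument with $\lambda \le 0$, noting that $x \mapsto e^{\lambda x}$ is then nonincreasing, so one uses the symmetric form of negative association (equivalently, $-X_1,\ldots,-X_t$ are negatively associated as well, so $e^{-\lambda X_i}$ for $-\lambda \ge 0$ is a nondecreasing function of $-X_i$).

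The only subtle point, and the one I would take slight care with, is the conditional-expectation versions of the hypotheses appearing in \Cref{lem:azumanoff}: there the bounds on $\EE[X_i \mid X_1,\ldots,X_{i-1}]$ are pointwise in the history. For the multiplicative-Azuma strengthening, I would simply verify that the MGF bound $\EE[e^{\lambda X_i} \mid X_1,\ldots,X_{i-1}] \le 1 + p_i(e^\lambda - 1) \le \exp(p_i(e^\lambda - 1))$ holds conditionally (which follows from $X_i \in [0,1]$ and the conditional mean bound), chain these multiplicatively via the tower property to get $\EE[e^{\lambda \sum_i X_i}] \le \exp((e^\lambda - 1)\sum_i p_i)$, and finish with the standard optimization. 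For the negatively associated setting of the present lemma (not requiring the conditional form), the product-of-MGFs inequality is all one needs. I do not expect any real obstacle---the whole content is that negative association preserves the one inequality on which Chernoff's proof depends---so I would keep the proof brief and defer to \cite{JoagDevP83} for the underlying structural fact.
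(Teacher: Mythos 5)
Your proposal is correct and is the standard route: reduce to the single inequality $\EE[\prod_i e^{\lambda X_i}] \le \prod_i \EE[e^{\lambda X_i}]$, which follows by induction from the defining property of negative association (applied to the nondecreasing, nonnegative functions $x \mapsto e^{\lambda x}$ for $\lambda \ge 0$, and to the reflected variables for $\lambda < 0$), after which the usual MGF/Markov/optimization steps go through verbatim. Note that the paper does not supply its own proof of this lemma --- it is cited to \cite{JoagDevP83}, consistent with the paper's stated convention that proofs are only included when no external work is cited --- so there is nothing in the paper to compare against beyond the attribution. Your aside about the conditional-expectation hypotheses is a harmless digression: that form belongs to \Cref{lem:azumanoff}, which the paper proves separately by a direct martingale-style chaining of conditional MGF bounds, not by invoking negative association.
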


\begin{lemma}[Error amplification by majority vote]\label{lem:error-reduction-by-vote}
  Let $\epsilon \le \delta \le 1/3$. Say $X$ is a random variable, and $v$ a value with $\Pr[X = v] \ge 1 - \delta$. If $X_1,\ldots,X_p$ are independent copies of $X$, then the most common value in $(X_1,\ldots,X_p)$ will be $v$ with probability $\ge 1 - \epsilon$, for $\epsilon = \left(2\delta\right)^{p / 30}$.
\end{lemma}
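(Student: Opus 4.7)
The plan is to reduce ``the most common value is not $v$'' to a tail bound on a single sum of independent Bernoullis and then apply \Cref{lem:azumanoff}. Observe first that if $v$ is not the most common value in $(X_1,\ldots,X_p)$, then some $w \ne v$ has count $N_w \ge N_v$, and combined with $N_v + N_w \le p$ this forces $N_v \le p/2$. Equivalently, at least $p/2$ of the samples disagree with $v$. Since the $X_i$ are i.i.d.\ with $\Pr[X_i \ne v] \le \delta$, the indicators $Y_i := \indic[X_i \ne v]$ are i.i.d.\ Bernoullis of mean at most $\delta$, so the bad event is contained in $\bigl\{\sum_{i=1}^p Y_i \ge p/2\bigr\}$.

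Next, I would apply the multiplicative Chernoff bound of \Cref{lem:azumanoff} with the choice $(1+\alpha)\, p\delta = p/2$, i.e.\ $1+\alpha = 1/(2\delta)$; this is legal because $\delta \le 1/3 < 1/2$ gives $\alpha > 0$. A routine substitution of $1+\alpha = 1/(2\delta)$ into $(1+\alpha)\ln(1+\alpha) - \alpha$ and multiplying by $p\delta$ simplifies the exponent to give
\[
\Pr\!\Bigl[\textstyle \sum_{i=1}^p Y_i \ge p/2\Bigr]
\;\le\; (2\delta)^{p/2}\, e^{p(1/2 - \delta)} \,.
\]

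Finally, it remains to verify the inequality $(2\delta)^{p/2}\, e^{p(1/2-\delta)} \le (2\delta)^{p/30}$. Taking logarithms and dividing by $p$, this is equivalent to $g(\delta) := \tfrac{7}{15}\ln(2\delta) + (\tfrac12 - \delta) \le 0$ on $(0, \tfrac13]$. Because $g'(\delta) = \tfrac{7}{15\delta} - 1 > 0$ for $\delta < \tfrac{7}{15}$, $g$ is increasing throughout $(0, \tfrac13]$, so it suffices to check the endpoint value $g(\tfrac13) = \tfrac{7}{15}\ln(\tfrac23) + \tfrac16 < 0$, which is a short numerical computation. The only nontrivial ``obstacle'' is this constant verification, which pins down the denominator $30$ in the bound; otherwise the proof is a one-shot Chernoff calculation that does not even require a union bound over alternative values~$w$.
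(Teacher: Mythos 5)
Your proof is correct and follows essentially the same route as the paper's: reduce ``not most common'' to ``not a majority,'' view the disagreements as Bernoullis of mean at most $\delta$, and apply the multiplicative Chernoff bound (\Cref{lem:azumanoff}) with $1+\alpha = 1/(2\delta)$. The only cosmetic difference is the final constant-verification step, where you argue by monotonicity of $g(\delta)$ on $(0,1/3]$ and check the endpoint, whereas the paper bounds $(1+\alpha)\ln(1+\alpha)-\alpha$ by a constant fraction of $(1+\alpha)\ln(1+\alpha)$; both cleanly yield the exponent $p/30$.
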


\noindent The above is a standard lemma, useful for trading error for space
for algorithms with a single valid output.

As outlined in \Cref{subsec:rt-lb-overview}, $\avoid(m,a,b)$ is a one-way
communication problem, wherein player Alice has a set $A \in
\smash{\binom{[m]}{a}}$, and should send a short message to player Bob, who
should use the message to output a set $B \in \smash{\binom{[m]}{b}}$ that is
disjoint from $A$. In the randomized $\delta$-error setting, this disjointness
should hold with probability $\ge 1-\delta$.

\begin{theorem}[\avoid communication lower bound, from \cite{ChakrabartiGS22}]\label{lem:avoid-lb}
  Suppose there exists a randomized protocol for $\avoid(m,a,b)$, in which
  Alice communicates $\le K$ bits, that is $\delta$-error either on a
  worst-case input or when Alice's input is chosen uniformly at random from
  $\binom{[t]}{a}$. Then
  \begin{align*}
    K \ge \frac{a b}{t \ln 2} + \log(1-\delta) \,.
  \end{align*}
\end{theorem}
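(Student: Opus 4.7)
The plan is a standard one-way-communication counting argument of the ``fooling set/union bound'' flavor that is routine for set-avoidance problems. I treat $t$ as an alias for the universe size $m$ (evidently a typo in the statement), and focus on the distributional case when Alice's input is uniform on $\binom{[m]}{a}$; the worst-case bound follows by the easy direction of Yao's minimax principle (fix Alice's and Bob's coins to the best values).

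After this averaging step, I reduce to a deterministic one-way protocol of cost $K$ whose success probability on a uniform $A$ is at least $1-\delta$. Such a protocol is specified by a message function $M \colon \binom{[m]}{a} \to \{0,1\}^{\le K}$ and, for each message $y$, a fixed output set $B_y \in \binom{[m]}{b}$. An input $A$ succeeds iff $A \cap B_{M(A)} = \emptyset$; in particular, any successful $A$ with message $y$ must lie in $\binom{[m] \setminus B_y}{a}$, a set of size $\binom{m-b}{a}$. Summing over the at most $2^K$ messages,
\[
  \Pr_{A}[\text{success}] \;\le\; \frac{2^K \binom{m-b}{a}}{\binom{m}{a}} \,.
\]
Imposing $\Pr_{A}[\text{success}] \ge 1-\delta$ and taking logarithms yields $K \ge \log \frac{\binom{m}{a}}{\binom{m-b}{a}} + \log(1-\delta)$.

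The final step is to lower-bound the binomial ratio by $e^{ab/m}$. Writing $\binom{m}{a}/\binom{m-b}{a} = \prod_{i=0}^{a-1}(m-i)/(m-b-i)$ and using the inequality $\ln(1+x) \ge x/(1+x)$ with $x = b/(m-b-i)$ gives $\ln\!\big((m-i)/(m-b-i)\big) \ge b/(m-i)$. Summing over $i \in \{0,\ldots,a-1\}$ and using $m-i \le m$ term-by-term,
\[
  \ln \frac{\binom{m}{a}}{\binom{m-b}{a}} \;\ge\; \sum_{i=0}^{a-1} \frac{b}{m-i} \;\ge\; \frac{ab}{m} \,.
\]
Dividing by $\ln 2$ and combining with the previous display gives the claimed bound $K \ge \frac{ab}{m \ln 2} + \log(1-\delta)$.

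I do not anticipate any real obstacle: the union-bound step is textbook, and the one place one can slip is the binomial-ratio estimate, where a naive $\ln(1+b/m)^a$ expansion yields the weaker bound $ab/(m+b)$; using $\ln(1+x) \ge x/(1+x)$ in the form above avoids this loss and gives the tight $ab/m$ in the exponent. Averaging over Alice's randomness (rather than fixing it to the best value) would only weaken the bound, so the deterministic reduction is lossless for our purposes.
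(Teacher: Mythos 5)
Your proof is correct. Note that the paper does not prove this theorem itself — it cites it directly from \cite{ChakrabartiGS22} — so there is no in-paper argument to compare against. Your approach (averaging away the shared randomness, then a union bound over the at most $2^K$ messages using the fact that any successful $A$ with message $y$ must lie in $\binom{[m] \setminus B_y}{a}$, and finally the binomial-ratio estimate) is the standard counting argument for this kind of set-avoidance lower bound, and it is essentially the same as what appears in the cited source. You are also right that $t$ is just a typo for the universe size $m$ in the statement. The one step worth flagging as well-handled is the last one: the naive $\ln(1+b/m)^a$ bound only gives $ab/(m+b)$ in the exponent, whereas using $\ln(1+x) \ge x/(1+x)$ with $x = b/(m-b-i)$ to get $\ln\frac{m-i}{m-b-i} \ge \frac{b}{m-i} \ge \frac{b}{m}$ gives the tight $ab/m$, matching the stated constant.
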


Using \Cref{lem:avoid-lb}, it is straightforward to derive the following lower
bound for robust algorithms for $\mif(n,\rl)$, as was done in
\cite{Stoeckl23}. The proof is short, yet instructive, so we outline it here.

\begin{theorem}[Adversarially robust random oracle lower bound, from \cite{Stoeckl23}]\label{lem:ext-robust-lb}
  If there exists a $\delta$-error adversarially robust random oracle
  algorithm for $\mif(n,\rl)$ using $z$ bits of space, then
  \begin{align*}
    z \ge \frac{\rl^2}{4 n \ln 2} + \log(1-\delta) \,.
  \end{align*}
\end{theorem}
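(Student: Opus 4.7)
The plan is to prove the lower bound by reduction from the $\avoid$ communication problem, using the algorithm for $\mif(n,\rl)$ to build an $\avoid(n,\rl/2,\rl/2)$ protocol and then invoking \Cref{lem:avoid-lb}. For clarity, assume $\rl$ is even (the general case changes constants negligibly). Suppose $\cA$ is a $\delta$-error adversarially robust random oracle algorithm for $\mif(n,\rl)$ using $z$ bits of space.

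Given an $\avoid(n,\rl/2,\rl/2)$ instance, Alice and Bob treat the random oracle of $\cA$ as a shared public random string, so it incurs no communication cost. Alice, holding $A \in \binom{[n]}{\rl/2}$, runs $\cA$ on the input stream $\sort(A)$ of length $\rl/2$, and sends Bob the resulting state of $\cA$, which is $z$ bits. Bob, using this state and the shared oracle, continues the simulation: he reads $\cA$'s current output $v_1$, sets $B := \{v_1\}$, feeds $v_1$ back into $\cA$ as the next stream item, then reads the next output $v_2$, adds it to $B$, feeds it back, and so on for $\rl/2$ steps, producing $B = \{v_1,\ldots,v_{\rl/2}\}$. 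This describes an $\avoid$ protocol with $z$ bits of Alice-to-Bob communication.

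The key observation is that Bob's behavior instantiates an adaptive adversary against $\cA$: each input after the first $\rl/2$ is a function of $\cA$'s previous outputs. By the adversarial robustness guarantee, with probability at least $1-\delta$ every one of the $\rl$ outputs of $\cA$ is a valid \mif answer, i.e., missing from the prefix of the stream seen so far. In that event, every $v_i$ avoids $A \cup \{v_1,\ldots,v_{i-1}\}$, so the $v_i$ are pairwise distinct and disjoint from $A$; hence $|B| = \rl/2$ and $B \cap A = \emptyset$. This yields a $\delta$-error $\avoid(n,\rl/2,\rl/2)$ protocol that uses $z$ bits of communication and works on worst-case inputs, hence also on uniformly random $A \in \binom{[n]}{\rl/2}$.

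The main (and only) obstacle is to confirm that the algorithm's random oracle may be shared for free in a communication protocol, which is standard: the $\avoid$ lower bound in \Cref{lem:avoid-lb} holds in the public-coin setting because the communication bound counts only Alice's message, and any private randomness used by either party can be absorbed into the shared oracle without affecting the bound. Applying \Cref{lem:avoid-lb} with $m = n$, $a = b = \rl/2$, we conclude
\[
z \;\ge\; \frac{(\rl/2)(\rl/2)}{n \ln 2} + \log(1-\delta) \;=\; \frac{\rl^2}{4n\ln 2} + \log(1-\delta),
\]
as claimed.
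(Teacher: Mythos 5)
Your proof is correct and follows essentially the same route as the paper: feed Alice's set into $\cA$, send the resulting state, and have Bob extract a disjoint set via the echo strategy, then invoke the $\avoid$ lower bound. The paper uses parameters $\avoid(n,\ceil{\rl/2},\floor{\rl/2}+1)$ (noting that Bob gets one output ``for free'' before his first echo), which avoids the parity fuss and gives a marginally stronger $ab$ product, but your $(a,b)=(\rl/2,\rl/2)$ with $\rl$ even yields exactly the claimed bound.
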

\begin{proof}
  Such an algorithm $\cA$ yields a $z$-bit $\delta$-error randomized protocol
  for $\avoid(n, \ceil{\rl/2}, \floor{\rl/2}+1)$ wherein Alice feeds her set
  $A$ into $\cA$, sends the state of $\cA$ to Bob, and Bob extracts set $B$ by
  using the ``echo'' adversarial strategy, i.e., repeatedly asking $\cA$ for
  an output item and feeding that item back as the next input. The result now
  follows by appealing to \Cref{lem:avoid-lb}.
\end{proof}

We also note the following simple lower bound.

\begin{lemma}\label{lem:rt-triv-logr-lb}
  For every $\delta < 1$, if there exists a $\delta$-error random tape
  algorithm for $\mif(n,\rl)$ using $z$ bits of space, then $z \ge
  \log(\rl+1)$.
\end{lemma}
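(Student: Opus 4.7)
The plan is to show that the state space $\Sigma$ must have $|\Sigma| \ge \rl + 1$ elements, which immediately gives $z \ge \log(\rl + 1)$. Since the random tape model uses a deterministic output function $\gamma \colon \Sigma \to \cO$, each state yields a single output, so it suffices to exhibit a single execution of the algorithm that produces $\rl + 1$ pairwise distinct values from the image $\gamma(\Sigma)$.

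To do this, I would pick any value $s$ in $\gamma(\Sigma)$ (the image is nonempty because any run of the algorithm produces at least one output) and then run $\cA$ against the adversary that sends $e_1 = s$ as its first input and afterwards plays the ``echo'' strategy, i.e., sets $e_{i+1} = v_i$ for $1 \le i \le \rl - 1$, where $v_i$ is the $i$th output of $\cA$. Because $\delta < 1$, there exists a realization of the random tape on which all $\rl$ outputs of $\cA$ are correct $\mif$ answers. On such a realization, correctness forces $v_i \notin \{e_1, \ldots, e_i\} = \{s, v_1, \ldots, v_{i-1}\}$ for every $i \in [\rl]$, so the values $s, v_1, \ldots, v_\rl$ are pairwise distinct and all lie in $\gamma(\Sigma)$, giving $|\Sigma| \ge |\gamma(\Sigma)| \ge \rl + 1$.

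There is no real obstacle here; the only delicate point, and the reason the bound is $\log(\rl+1)$ rather than $\log \rl$, is that $s$ must be chosen from $\gamma(\Sigma)$ rather than as an arbitrary element of $[n]$. This choice guarantees that $s$ itself witnesses a state of $\Sigma$ and can therefore be counted alongside the $\rl$ states corresponding to $v_1, \ldots, v_\rl$, upgrading the trivial ``$\rl$ distinct outputs'' argument to the desired ``$\rl + 1$ states'' bound.
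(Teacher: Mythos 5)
Your key idea matches the paper's: in the random tape model the output function $\gamma$ is deterministic, so the set of possible outputs has size at most $|\Sigma| \le 2^z$, and a correct algorithm must be able to produce $\rl + 1$ distinct output values. But there is a gap in how you establish that last point. Your adversary is \emph{adaptive}---it echoes the algorithm's previous output as its next input---so the claim ``because $\delta < 1$, there exists a realization of the random tape on which all $\rl$ outputs are correct'' requires that the algorithm have failure probability $\le \delta$ against this adaptive adversary, i.e., the adversarial error guarantee. A random tape algorithm that is $\delta$-error only in the static sense (correct with probability $\ge 1 - \delta$ on every \emph{fixed} stream) can still fail with probability $1$ against an adaptive echo adversary, in which case the ``good'' realization you need may not exist.

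The paper's proof avoids this by using a fixed stream: if $z < \log(\rl + 1)$ then $|H| \le 2^z \le \rl$ where $H := \gamma(\Sigma)$, and feeding a fixed length-$\rl$ stream containing every element of $H$ forces the final output, which must lie in $H$, to be a non-missing item---so the algorithm fails with probability $1$, contradicting $\delta < 1$. That version needs only static $\delta$-error and hence covers every setting. Your argument would serve everywhere the paper actually invokes the lemma (always for adversarially robust algorithms), but as a proof of the lemma as stated it is strictly less general; replacing the adaptive echo with a fixed stream enumerating $\gamma(\Sigma)$ closes the gap and recovers the paper's argument.
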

\begin{proof}
  Each state of a random tape streaming algorithm $\cA$ has a unique
  associated output value. If $z < \log(\rl + 1)$, then $\cA$ has at most
  $\rl$ states. Let $H$ be the set of outputs associated with these states; so
  $|H| \le \rl$. When sent a stream containing each element of $H$, $\cA$ will
  fail with probability $1$ because every output it could make is wrong.
\end{proof}

By the remarks in \Cref{subsec:models} following the definitions of the models of
computation, \Cref{lem:ext-robust-lb} also applies to random tape and random
seed algorithms and \Cref{lem:rt-triv-logr-lb} also applies to random seed
algorithms.

\section{The Random Tape Lower Bound}\label{sec:rt-lb}

This section presents our first and perhaps most important lower bound, of
which \Cref{res:rt-lb} is a consequence. We shall carry out the proof plan
outlined in \Cref{subsec:rt-lb-overview}, designing a recursive adversary to
foil a given random-tape \mif algorithm $\cA$ that runs in $z$ bits of space.
Correspondingly, our lower bound proof will be inductive.

\subsection{Setup and Base Case}

Recall that the adversary organizes the $\rl$-length input to be fed into
$\cA$ as a random prefix of length $\rl/2$ followed by another $\rl/2$ inputs
divided into several phases, each consisting of $t = O(\rl/z)$ inputs. The
adversary's eventual goal is to identify a particular phase and a
corresponding sub-universe $W \subseteq [n]$ so that $\cA$, when suitably
conditioned and restricted to that phase, yields a sub-algorithm $\cB$ that is
good for \mif for inputs from $W$. However, we will need to generalize the notion of a
``good'' \mif algorithm, because this sub-algorithm might produce outputs outside of $W$,
even when fed inputs from $W$. To aid our analysis, we will make $\cB$ abort
anytime it would have produced an output outside of $W$. In what follows, it
will be important to maintain a distinction between these aborts and actual
mistakes.

\begin{definition}\label{def:mif-rt-err-complexity}
  An algorithm $\cA$ for $\mif(n,\rl)$ can fail in either of two ways. It may
  make an incorrect output, or {\em mistake}, if outputs an element in $[n]$
  that {\em is} in its input stream (i.e., not missing). It may also {\em
  abort}, by outputting a special value $\bot$ (where $\bot \notin [n]$)
  and stopping its run.
 
  For integers $n,\rl,z$ with $1 \le \rl < n$, and $\gamma \in [0,1]$, let
  $\Algs(n,\rl,\gamma,z)$ be the set of all $z$-bit \emph{random tape}
  algorithms for $\mif(n,\rl)$ which on \emph{any} adversary abort
  with probability $\le \gamma$. Define
  \begin{align*}
    \Mstk(n,\rl,\gamma,z) := \min_{\cA \,\in\, \Algs(n,\rl,\gamma,z)} 
      \delta_{\max}(\cA, n,\rl) \,,
  \end{align*}
  where $\delta_{\max}(\cA, n,\rl)$ is the maximum probability, over all
  possible adversaries, that $\cA$ makes a mistake. As a consequence of the
  definition, $\Mstk(n,\rl,\gamma,z)$ is non-increasing in $\gamma$ and $z$.
\end{definition}

We shall establish \Cref{res:rt-lb} (concretely, \Cref{thm:rt-lb-intro}) using
a proof by induction. The base case is straightforward and handled by the
following lemma.

\begin{lemma}[Base case]\label{lem:mif-rt-lb-base-case}
  If a random tape adversarially robust algorithm $\mif(n,\rl)$ uses at most
  $\rl^2/(16 n \ln 2)$ bits of space and aborts with probability $\le \frac12$,
  then it makes a mistake with probability $\ge \frac14$. Equivalently,
  \begin{align}
    \Mstk(n,\rl,\gamma,z) 
    \ge \frac{1}{4} \indic_{z \le \rl^2 / (16 n \ln 2)} \indic_{\gamma \le 1/2} \,.
    \label{eq:mif-rt-lb-base-eq}
  \end{align}
\end{lemma}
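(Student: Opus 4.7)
The plan is to reduce any $\cA \in \Algs(n,\rl,\gamma,z)$ to a $z$-bit randomized one-way protocol for $\avoid(n,a,b)$ with $a = \lceil \rl/2 \rceil$ and $b = \lfloor \rl/2 \rfloor + 1$, essentially reusing the construction from the proof of \Cref{lem:ext-robust-lb} while carefully separating aborts from mistakes. Alice runs $\cA$ on $\sort(A)$ using her own random tape, transmits the resulting $z$-bit state, and Bob continues the run with his own random tape using the ``echo'' adversarial strategy to extract $b$ further outputs into $B$. Since $a + (b - 1) = \rl$, this is a legitimate $\rl$-round execution of $\cA$ against some adaptive adversary, so by a union bound the protocol errs with probability at most $\gamma + \delta$, where $\delta := \delta_{\max}(\cA, n, \rl)$. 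Combining $ab \ge \rl^2/4$ with \Cref{lem:avoid-lb} then yields
\begin{align*}
  z \;\ge\; \frac{\rl^2}{4 n \ln 2} + \log(1 - \gamma - \delta) \,.
\end{align*}

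I would then argue \eqref{eq:mif-rt-lb-base-eq} by contradiction. It suffices to treat the case where both indicators on its right-hand side are $1$, i.e., $z \le \rl^2/(16 n \ln 2)$ and $\gamma \le 1/2$. Suppose for contradiction that some $\cA \in \Algs(n,\rl,\gamma,z)$ achieves $\delta < 1/4$. Then $\gamma + \delta < 3/4$, so $\log(1-\gamma-\delta) > -2$, and substituting into the displayed inequality gives $\rl^2/(16 n \ln 2) > \rl^2/(4 n \ln 2) - 2$, which rearranges to $\rl^2/n < 32 \ln 2 / 3 < 8$. Consequently $z \le \rl^2/(16 n \ln 2) < 2/3 < 1$, forcing $|\Sigma| = 1$.

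With a single state, $\cA$'s deterministic output is some fixed symbol $v$. If $v = \bot$, then $\cA$ aborts on every input, contradicting $\gamma \le 1/2$; otherwise $v \in [n]$, and the adversary that sets $e_1 = v$ forces a mistake on the first step with probability $1$, contradicting $\delta < 1/4$. Either way we reach a contradiction, establishing the claim.

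The only mildly delicate step is the union-bound bookkeeping in the reduction: aborts can occur during either Alice's or Bob's simulation, and mistakes can occur during any of Bob's $b$ extractions, and both types of failure must be absorbed into the single $\gamma + \delta$ error budget passed to \Cref{lem:avoid-lb}. The residual $z = 0$ case has to be handled separately because the $\avoid$ bound becomes vacuous when $\rl^2/n$ is an absolute constant, but it falls immediately from inspection of a single-state algorithm.
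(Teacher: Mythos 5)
Your proposal is correct and follows essentially the same route as the paper's proof: both apply the echo-strategy reduction to $\avoid(n,\lceil\rl/2\rceil,\lfloor\rl/2\rfloor+1)$ (i.e., \Cref{lem:ext-robust-lb} with the combined failure probability $\gamma+\delta$) to get $z \ge \rl^2/(4n\ln 2) + \log(1-\gamma-\delta)$, and then observe that this contradicts $z \le \rl^2/(16n\ln 2)$ once a trivial lower bound on $z$ is invoked. The only difference is cosmetic: the paper dispatches the residual small-$z$ regime by citing $z\ge 1$ ``trivially'' together with $\max(1,x-2)>x/4$, whereas you push the contradiction further to conclude $|\Sigma|=1$ and then rule out a one-state algorithm by direct inspection; both resolutions are valid and of comparable length.
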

\begin{proof}
  Suppose that $\gamma \le \frac12$. Let $\cA \in \Algs(n,\rl,\gamma,z)$ be an
  algorithm with mistake probability $\delta \le \frac14$. Blurring the
  distinction between aborts and mistakes, and applying the space lower bound
  for random {\em oracle} algorithms (\Cref{lem:ext-robust-lb}), we obtain
  \begin{align*}
    z \ge \frac{\rl^2}{4 n \ln 2} + \log(1 - \gamma - \delta) 
    \ge \frac{\rl^2}{4 n \ln 2} - 2 > \frac{\rl^2}{16 n \ln 2} \,.
  \end{align*}
  The latter inequality holds because $z \ge 1$ (trivially) and we have
  $\max(1, x-2) > x/4$. Taking the contrapositive, if $z \le \rl^2 / (16 n \ln
  2)$, then $\delta > 1/4$. This proves \cref{eq:mif-rt-lb-base-eq}.
\end{proof}

\subsection{The Induction Step} \label{subsec:rt-ind-step}

The induction step consists of a reduction, using an adaptive adversary
described in \Cref{alg:rt-adversary-step}, to prove a lower bound on the
mistake probability. This is formalized in the next lemma and the rest of
\Cref{subsec:rt-ind-step} is devoted to its proof.

\begin{lemma}[Induction lemma]\label{lem:mif-rt-lb-step}
  Let $1 \le \rl < n$ and $z$ be integers. Define, matching definitions in \Cref{alg:rt-adversary-step},
  \begin{align}
    w := 2 \floor{32 \frac{z n}{\rl}} \qquad \text{and} \qquad 
    t := \floor{\frac{\rl}{64 z}} \,.
    \label{eq:w-t-def}
  \end{align}
  If $z \ge 8$ and $t < w$, then:
  \begin{align}
    \Mstk\Big(n,\rl,\frac{1}{2},z\Big) \ge \min\left( \frac{\rl}{2^7 n k}, \,
      \frac{1}{4} \Mstk\Big(w,t,\frac{1}{2},z\Big) \right) \,. 
      \label{eq:mif-rt-lb-step-err}
  \end{align}
\end{lemma}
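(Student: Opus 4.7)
The plan is to analyze the two-stage adversary specified in \Cref{alg:rt-adversary-step}. The adversary first feeds $\cA$ a uniformly random sorted prefix $S \in \SSD{[n]}{\rl/2}$, and then partitions the remaining $\rl/2$ stream slots into $k$ phases of length $t$ each. The heart of the argument is a forward-avoidance claim, proved via an \avoid-style counting/encoding step: because $\cA$ has at most $2^z$ states while the random prefix carries far more information, I can associate to each typical state $\sigma$ reachable after processing $S$ a subset $H_\sigma \subseteq [n]$ of size at most $w/2$ such that, starting from $\sigma$, the next $\rl/2$ outputs of $\cA$ stay inside $H_\sigma$ except with small probability. This gives the adversary a list of at most $2^z$ candidate sets that constrain $\cA$'s future behavior.

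In each of the $k$ subsequent phases, I would run a win--win argument on the current list $\cC_i$ of candidates consistent with the outputs seen so far, starting from $|\cC_0|\le 2^z$. Either (i) there exists a length-$t$ ``probe'' of inputs such that, with constant probability, $\cA$'s response rules out at least half of $\cC_i$, in which case the adversary executes that probe and contracts $\cC_i$; or (ii) no such probe exists, whereupon a majority/averaging argument produces a single set $W \subseteq [n]$ with $|W| \le w$ (roughly, the elements lying in many of the $H_\sigma \in \cC_i$) such that for \emph{every} length-$t$ input sequence drawn from $W$, the outputs of $\cA$ stay inside $W$ with constant probability. Case (i) can recur at most $\log_2|\cC_0| \le z$ times before $\cC_i$ collapses, so with $k > z$ phases we are guaranteed to land in case (ii).

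Once case (ii) is reached, I would run against $\cA$ (restricted to this phase, with outputs outside $W$ relabelled as aborts) the optimal adversary witnessing $\Mstk(w, t, \tfrac12, z)$. By construction, the restricted algorithm is a random-tape $z$-bit algorithm for $\mif(w, t)$ whose abort probability is at most $\tfrac12$, so it must make a mistake with probability at least $\Mstk(w, t, \tfrac12, z)$. After absorbing the constant-probability conditioning built up in the earlier ``good'' events, this yields the $\tfrac14 \Mstk(w, t, \tfrac12, z)$ term of \eqref{eq:mif-rt-lb-step-err}. The other term, $\rl/(2^7 n k)$, arises from a separate mistake source inside the probing branch: each probe elicits outputs in $[n]$ while the set of previously issued inputs already has size $\Omega(\rl)$, so with probability $\Omega(\rl/(n k))$, amortized across the $k$ probing attempts, some probe output collides with a past stream element and is an outright mistake rather than informative feedback.

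The main obstacle I anticipate is the careful separation of aborts from mistakes throughout the recursion. The forward-avoidance lemma, the win--win step, and the conditioning on ``good'' events each leak probability mass into the abort budget, and these leakages must fit inside $\tfrac12$ while the residual mistake probability still propagates cleanly to the sub-instance $\mif(w, t)$. This is precisely why \Cref{def:mif-rt-err-complexity} parameterizes aborts and mistakes separately, and the bulk of the detailed work will be bookkeeping to preserve the constants $\tfrac12$ and $\tfrac14$ appearing in \eqref{eq:mif-rt-lb-step-err} across the two-stage adversary.
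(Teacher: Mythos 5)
Your outline tracks the paper's proof closely: a random prefix $X$ of length $\ceil{\rl/2}$, the forward-avoid counting argument (\Cref{lem:forward-avoid}) to show each typical post-prefix state $\sigma$ has $|H_\sigma| \le w/2$, a win--win phase structure that either shrinks the candidate set of states $Q_h$ or certifies a set $W$ on which $\cA$ acts as a $\mif(w,t)$ algorithm, the abort-relabelling trick, and the careful abort/mistake bookkeeping via \Cref{def:mif-rt-err-complexity}.

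Two points need correcting. First, the $k$ in the lemma statement's bound $\rl/(2^7 n k)$ appears to be a typographical artifact: the paper's own proof concludes with $\min(\rl/(2^7 n),\, \frac14\Mstk(w,t,\frac12,z))$, with no $k$. Your explanation --- that the $\rl/(2^7 nk)$ term arises by amortizing collision probability ``across the $k$ probing attempts'' --- is therefore trying to rationalize a quantity that shouldn't be there, and the mechanism you describe is not the right one. The paper's source for this term is the event $B_{\unsafesc}$: if $\cA$ ever emits an output $i \notin H_\rho$, then by the very definition of $H_\rho$ in \cref{eq:H-def} we have $\Pr[i \in X \mid F(X) = \rho] > q/(4n) \ge \rl/(8n)$, so that single unsafe output is already a mistake with probability $\ge \rl/(8n)$. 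The argument is a clean dichotomy, not an amortization: either $\Pr[B_{\unsafesc}] > 1/16$, giving a mistake probability $\ge \rl/(2^7 n)$ outright, or $\Pr[B_{\unsafesc}] \le 1/16$ and the analysis moves on to the $\frac14\Mstk(w,t,\frac12,z)$ branch.

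Second, the claim that ``with $k > z$ phases we are guaranteed to land in case (ii)'' is too strong. A probe is only \emph{$\frac12$-splitting} --- it produces a divisive output with probability $\ge \frac12$, not with certainty --- so the candidate set $Q_h$ need not halve in every probe round. The paper runs $h_{\max} = 32z$ phases and invokes the multiplicative Azuma inequality (\Cref{lem:azumanoff}) to bound the probability of fewer than $z+1$ divisive outcomes (the event $B_{\timeoutsc}$) by $\le 1/8$; this is what makes the $\frac14$ conditioning factor go through via the union bound $\Pr[B_{\unsafesc}] + \Pr[B_{\bigsc}] + \Pr[B_{\timeoutsc}] + \Pr[B_{\abortsc}] \le 3/4$. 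Your sketch needs a concentration step of this kind rather than a deterministic ``at most $z$ contractions'' count.
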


\mypara{The Initial Random Prefix} The adversary begins by sending $\cA$ a
uniformly random sequence $X \in_R \SSD{[n]}{q}$, i.e., a sorted sequence of
$q$ distinct elements of $[n]$; we'll eventually use $q = \ceil{\rl/2}$.  Let
$F$ be the {\em random function} where, for $x \in \SSD{[n]}{q}$, $F(x)$ is
the random state reached by $\cA$ upon processing $x$, starting at its initial
state; note that $F$ is determined by the transition function $T$ of $\cA$
(see~\Cref{subsec:models}).  Let $\Sigma$ be the set of states of $\cA$, so
$|\Sigma| = 2^z$. For each $\sigma \in \Sigma$, define
\begin{align}
  H_{\sigma} := \left\{ i\in[n]:\, \Pr[i\in X \mid F(X) = \sigma] \le \frac{q}{4n} \right\} \,,
  \label{eq:H-def}
\end{align}
which we can think of as the set of inputs that are ``unlikely'' to have been
seen given that $\cA$ has reached $\sigma$.  A key part of our proof of
\Cref{lem:mif-rt-lb-step} is the following lemma, which says that if $\Sigma$
is small, then $\cA$ doesn't have enough space to mark too many inputs as
unlikely, so $H_{F(X)}$ is likely to be small. The lemma can be seen as a
smoothed variant of the communication lower bound for $\avoid$.

\begin{lemma}\label{lem:forward-avoid}
  Let $\Sigma$ be a set with $|\Sigma| \le 2^z$, let $1 \le q \le n$ be
  integers, let $F$ be a random function that maps each sequence in
  $\SSD{[n]}{q}$ to a random element of $\Sigma$, and let $X \in_R
  \SSD{[n]}{q}$, chosen independently of $F$. For each $\sigma\in\Sigma$,
  define $H_\sigma$ as in \cref{eq:H-def}. Then, for all
  $\alpha\in\left(0,1\right)$,
  \begin{align*}
    \Pr\left[\left|H_{F(X)}\right|\ge\hat{w}\right] \le \alpha
    \qquad\text{where}\qquad
    \hat{w} := \left\lceil \frac{2\ln2}{1-\ln2} \frac{z+1+\log\frac{1}{\alpha}}{q}\, n\right\rceil \,.
  \end{align*}
\end{lemma}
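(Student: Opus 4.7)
The natural strategy is to upper-bound $\Pr[F(X)=\sigma]$ individually for each state $\sigma$ with $|H_\sigma|\ge \hat{w}$, and then take a union bound over the at most $|\Sigma|\le 2^z$ such states. For the per-state bound, I would combine a Markov step on the conditional distribution of $X$ given $F(X)=\sigma$ with a Chernoff step on its unconditional (uniform) distribution.

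First, summing the definition $\Pr[i\in X\mid F(X)=\sigma]\le q/(4n)$ over all $i\in H_\sigma$ yields $\EE[|X\cap H_\sigma|\mid F(X)=\sigma]\le h q/(4n)$, where $h:=|H_\sigma|$. By Markov's inequality, the conditional distribution of $X$ given $F(X)=\sigma$ assigns probability at least $1/2$ to the event $A_\sigma:=\{x\in \SSD{[n]}{q}:|x\cap H_\sigma|\le h q/(2n)\}$. Equivalently, $\Pr[X\in A_\sigma,\,F(X)=\sigma]\ge \tfrac12\Pr[F(X)=\sigma]$.

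Second, the unconditional variable $|X\cap H_\sigma|$ is hypergeometric with mean $\mu = hq/n$, and its indicator summands are negatively associated. Applying the multiplicative Chernoff bound (\Cref{lem:azumanoff}, valid here by \Cref{lem:chernoff-neg-assoc}) with deviation parameter $1/2$:
\begin{align*}
\Pr[X\in A_\sigma]
= \Pr\Big[|X\cap H_\sigma|\le \tfrac{\mu}{2}\Big]
\le \exp\Big(-\big(\tfrac12\ln\tfrac12+\tfrac12\big)\mu\Big)
= \exp\Big(-\tfrac{(1-\ln 2)\,h q}{2n}\Big).
\end{align*}
Combining with $\Pr[X\in A_\sigma,\,F(X)=\sigma]\le \Pr[X\in A_\sigma]$ gives the per-state bound $\Pr[F(X)=\sigma]\le 2\exp(-(1-\ln 2)h q/(2n))$. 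Note that the argument only uses that $X$ is uniformly distributed and independent of $F$; it does not care whether $F$ is deterministic or randomized, since everything is cast in terms of the joint distribution of $(X, F(X))$.

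Finally, summing this estimate over the $\le 2^z$ states $\sigma$ with $|H_\sigma|\ge \hat{w}$ yields $\Pr[|H_{F(X)}|\ge \hat{w}]\le 2^{z+1}\exp(-(1-\ln 2)\hat{w} q/(2n))$. Solving for the smallest integer $\hat{w}$ that makes this quantity $\le \alpha$ gives $\hat{w} \ge \tfrac{2n}{(1-\ln 2)q}\big((z+1)\ln 2 + \ln(1/\alpha)\big)$, which (after pulling out $\ln 2$ and taking a ceiling) is exactly the formula in the statement. The main subtlety is extracting the tight constant $\tfrac{2\ln 2}{1-\ln 2}$: a crude Chernoff exponent such as $\mu/8$ would only give a weaker constant, so one must invoke the multiplicative form at precisely $\alpha=1/2$ to obtain the $(1-\ln 2)/2$ in the exponent.
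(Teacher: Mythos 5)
Your proposal is correct and follows exactly the same route as the paper's own proof: Markov on the conditional distribution of $|X\cap H_\sigma|$ given $F(X)=\sigma$ to get a constant-probability event, the multiplicative Chernoff bound with negative association at deviation $1/2$ on the unconditional distribution, the ratio (or, equivalently, your joint-probability) comparison yielding $\Pr[F(X)=\sigma]\le 2\exp(-\tfrac{1}{2}(1-\ln 2)|H_\sigma|q/n)$, and a union bound over the $\le 2^z$ states. The constants and the final algebra match the paper exactly.
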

\begin{proof}
   Consider a specific $\sigma\in\Sigma$. By linearity of expectation:
  \begin{align*}
    \EE\left[\sum_{i\in H_{\sigma}}\indic_{i \in X}\Bigm| F\left(X\right)=\sigma\right]=\sum_{i\in H_{\sigma}}\Pr\left[i\in X\mid F\left(X\right)=\sigma\right]\le\frac{q}{4n}\left|H_{\sigma}\right| \,.
  \end{align*}
  Then by Markov's inequality,
  \begin{align*}
    \Pr\left[\sum_{i\in H_{\sigma}}\indic_{i \in X}\ge2\cdot\frac{q}{4n}\left|H_{\sigma}\right|\Bigm| F\left(X\right)=\sigma\right]
      &\le \frac{1}{2} \\
      \text{which implies}\qquad
      \Pr\left[\sum_{i\in H_{\sigma}}\indic_{i \in X}\le\frac{q}{2n}\left|H_{\sigma}\right|\Bigm| F\left(X\right)=\sigma\right]
        &\ge \frac{1}{2} \,.
  \end{align*}
  Since $X$ is drawn uniformly at random from $\SSD{[n]}{q}$, the random
  variables $\{\indic_{i \in X}\}_{i \in [n]}$ are negatively associated, with
  $\EE \indic_{i \in X} = q/n$ for each $i \in [n]$. For any set $A
  \subseteq [n]$, we use the multiplicative Chernoff bound
  (\Cref{lem:chernoff-neg-assoc}) to bound the probability that $X$'s overlap
  with $A$ is much smaller than the expected value:
  \begin{align*}
    \Pr\left[\sum_{i\in A}\indic_{i \in X}\le\left(1-\frac{1}{2}\right)\cdot\frac{q}{n}|A|\right]
      \le\left(\frac{e^{-1/2}}{(1/2)^{1/2}}\right)^{(q/n)|A|}
      =\exp\left(-\frac{1}{2}(1-\ln 2)\frac{q}{n}|A|\right) \,.
  \end{align*}
  We now bound
  \begin{align*}
    \Pr\left[F\left(X\right)=\sigma\right]
    &\le \frac{\Pr\left[\sum_{i\in H_{\sigma}}\indic_{i \in X} \le \frac{q}{2n}|H_{\sigma}|\right]}%
      {\Pr\left[\sum_{i\in H_{\sigma}}\indic_{i \in X} \le \frac{q}{2n}|H_{\sigma}| \bigm| F(X)=\sigma\right]}
    \le 2\exp\left(-\frac{1}{2}(1-\ln 2)\frac{q}{n}|H_{\sigma}|\right) \,.
  \end{align*}
  Finally, let $B = \{\sigma\in\Sigma:\, |H_{\sigma}|\ge\hat{w}\}$. Then
  \begin{align*}
    \Pr\left[\left|H_{F(X)}\right|\ge\hat{w}\right]
    &=\sum_{\sigma\in B}\Pr\left[F\left(X\right)=\sigma\right] \\
    &\le 2^{z}\cdot2\exp\left(-\frac{1}{2}(1-\ln 2)\frac{q}{n}\hat{w}\right) \\
    &\le 2^{z}\cdot2\exp\left(-\left(z+1+\log\frac{1}{\alpha}\right)\ln2\right) 
    \le 2^{z}\cdot2\cdot2^{-\left(z+1+\log\frac{1}{\alpha}\right)}
    = \alpha \,. \qedhere
	\end{align*}
\end{proof}

\mypara{The Recursive Phases and Win-Win Argument} Let $\rho$ denote the
random state in $\Sigma$ reached by $\cA$ upon processing the random sequence
$X$. Having observed $\cA$'s outputs (transcript) in response to $X$, the
adversary can compute $\cD$, the distribution of $\rho$ conditioned on this
transcript. The adversary has $\floor{\rl/2}$ more items to send to $\cA$ and
it organizes these into phases of $t$ items each (see \cref{eq:w-t-def}).
Recall, from the discussion in \Cref{subsec:rt-lb-overview}, that the
adversary's goal is to identify a suitable sub-universe $W \subseteq [n]$ so
that, in some phase, $\cA$ can be seen as solving $\mif(|W|,t)$ in this
sub-universe.  As it chooses a suitable input sequence for each phase, the
adversary maintains the following objects to guide the choice:
\begin{itemize}
  \item the evolving transcript of $\cA$ given the inputs chosen so far;
  \item the corresponding distribution $\cD \in \triangle[\Sigma]$;
  \item a set $Q \subseteq \Sigma$ where, for each $\sigma \in Q$, the set
  $H_\sigma$ will be useful for determining $W$.
\end{itemize}
If the current set $Q$ leads to a suitable $W$, the adversary's strategy for
the next phase is to recursively run an optimal sub-adversary for
$\mif(|W|,t)$. If not, then (we shall show that) over the next phase, the
adversary will be able to significantly shrink the set $Q$.

To make these ideas more concrete, we introduce some terminology below; these
terms show up in \Cref{alg:rt-adversary-step}, which spells out the adversary's
actions precisely.
\begin{definition}[Divisive output sequence, Splitting adversary]\label{def:rt-defs}
  Let $\cA$, $\Sigma$, and $H_\sigma$ be as above and let $Q \subseteq
  \Sigma$.  A sequence (of ``outputs'') $y \in [n]^t$ is said to be
  \textsc{divisive} for $Q$ if $|\{\sigma \in Q : y \subseteq H_\sigma\}| \le
  \frac{1}{2} |Q|$.

  Say $\Upsilon$ is a $t$-length deterministic adversary, i.e., a function
  $\Upsilon \colon [n]^{\le t-1} \to [n]$.  For each $\sigma \in \Sigma$, let
  $\Outs(\sigma, \Upsilon)$ be the random variable in $[n]^t \cup \{\bot\}$
  that gives the output if we run $\cA$, starting at state $\sigma$, against
  the adversary $\Upsilon$,\footnote{This means that if, after processing a
  few inputs, the algorithm has output sequence $v \in [n]^\star$, its next
  input will be $\Upsilon(v)$.} with $\bot$ indicating that $\cA$ aborts. 
  We say that $\Upsilon$ is $\alpha$-\textsc{splitting} for $Q$ with respect
  to a distribution $\cD \in \triangle[\Sigma]$ if
  \begin{align*}
    \Pr_{S \sim \cD}[\text{$\Outs(S, \Upsilon)$ is divisive for $Q$}] \ge \alpha \,,
  \end{align*} 
  with the convention that the value $\bot$ \emph{is} divisive.\footnote{The proof can also be made to work if one assumes $\bot$ is not divisive, but gives a weaker and more complicated result.}
\end{definition}

Notice that, knowing $\cD$ and $Q$, the adversary can determine whether a
given $\Upsilon$ is $\alpha$-splitting. This is implicit in 
\Cref{step:rt-pick-split-adv} of \Cref{alg:rt-adversary-step}.

\begin{algorithm}[!ht]
  \caption{An adversary for a random tape $\mif(n,\rl)$ algorithm.}
  \label[listing]{alg:rt-adversary-step}

  \begin{algorithmic}[1]
    \Statex \underline{\textsc{Adversary} (against algorithm $\cA$)}
    \vspace{.5\baselineskip}
    \State $w \gets 2 \floor{32 z n/\rl};~ h_{\max} \gets 32 z;~ t \gets \floor{\rl/(2 h_{\max})}$
    \State $X \gets$ a uniformly random sequence in $\SSD{[n]}{\ceil{\rl/2}}$.\label{step:rt-pick-v}
    \State \textbf{send} $X$ to $\cA$ and record the transcript of outputs \label{step:rt-send-v}
    \State compute $H_\sigma$ for each state $\sigma$ using \cref{eq:H-def}, with $q = \ceil{\rl/2}$
    \State $Q_0 \gets \{\sigma \in \Sigma : |H_\sigma| \le \frac{1}{2} w\}$\label{step:rt-q0-def}
    \For{$h$ in $1,\ldots,h_{\max}$}
      \State $\cD \gets$ distribution over $\Sigma$ conditioned on the cumulative transcript so far
      \If{$\exists$ a $t$-length deterministic adversary $\Upsilon$ that is $\frac12$-splitting for $Q_{h-1}$ w.r.t. $\cD$}\label{step:rt-pick-split-adv}
          \State \textbf{run} $\Upsilon$ against $\cA$ and gather the transcript of outputs $y \in [n]^t$ \label{step:rt-send-splitting}
          \State $Q_h \gets \{\sigma \in Q_{h-1}:\, y \subseteq H_\sigma\}$\label{step:rt-qh-def} \Comment{have a $\ge \frac12$ chance that $|Q_h| \le \frac{1}{2} |Q_{h-1}|$}
          \If{$Q_h = \emptyset$} \textbf{fail} \label{step:rt-abort-qempty} \EndIf
      \Else
        \State $W \gets \{i \in [n]: |\{\sigma \in Q_{h-1} : i \in H_\sigma\}|
        \ge \frac{1}{2} |Q_{h-1}|\}$ \label{step:rt-wkp1-selection} \Comment{will show that $|W| \le w$}
        \State $W' \gets W$ plus $w - |W|$ padding elements
        \State define algorithm $\cB$ to behave like $\cA$ conditioned on the transcript of inputs and outputs so far
        \State modify $\cB$, changing every output outside $W'$ to $\bot$ \label{step:rt-modify-abort} \Comment{thus $\cB$ will abort in these cases}
        \State let $\Xi$ be an adversary using inputs from $W'$, maximizing the probability that $\cB$ makes a mistake \label{step:rt-pick-subadv}
        \Statex \Comment{can be computed using brute-force search}
        \State \textbf{run} adversary $\Xi$, sending $t$ inputs in $W'$\label{step:rt-run-subadv}
        \State \textbf{return} \Comment{succeeded in causing $\cA$ to have a high enough error probability}
      \EndIf
    \EndFor
    \State \textbf{fail}\label{step:rt-abort-hmax}
  \end{algorithmic}
\end{algorithm}

We proceed to prove the induction lemma.

\begin{proof}[Proof of \Cref{lem:mif-rt-lb-step}]
  To prove the lower bound in \cref{eq:mif-rt-lb-step-err}, we show that when
  the adversary in \Cref{alg:rt-adversary-step} is run against a $z$-bit
  random-tape algorithm $\cA$ for $\mif(n,\rl)$ which has $\le \frac{1}{2}$
  worst-case probability of aborting, the probability that $\cA$ makes a
  mistake is at least the right hand side of \cref{eq:mif-rt-lb-step-err}.
  Note that the adversary feeds at most $\ceil{\rl/2} + t h_{\max} =
  \ceil{\rl/2} + \floor{\rl/ (2 h_{\max})} h_{\max} \le \ceil{\rl/2} +
  \floor{\rl/2} = \rl$ inputs to $\cA$.
  
  Consider a run of the adversary against $\cA$. This is a random process,
  with some of the randomness coming from the adversary's choices
  (\Cref{step:rt-pick-v}) and some coming from $\cA$'s internal randomness.
  Let $\rho$ be the state of $\cA$ after $X$ is sent. We now define a number
  of events, as follows.
  \begin{itemize}
    \item $B_{\unsafesc}$ occurs if $\cA$ produces an output in $[n] \setminus H_\rho$.
    \item $B_{\bigsc}$ occurs if the state $\rho$ has $|H_\rho| > \frac{1}{2} w$.
    \item $B_{\emptysc}$ occurs if the adversary fails at \Cref{step:rt-abort-qempty}.
    \item $B_{\timeoutsc}$ occurs if the adversary fails at \Cref{step:rt-abort-hmax}.
    \item $B_{\abortsc}$ occurs if $\cA$ aborts \emph{before} the adversary reaches \Cref{step:rt-run-subadv}.
    \item $R_{\errorsc}$ occurs if $\cA$ makes a mistake \emph{while} the adversary is executing \Cref{step:rt-run-subadv}.
  \end{itemize}
  
  We will consider each of the events listed above.
  \iflipics\begin{itemize}\else
  \begin{itemize}[wide, font=\bfseries]\fi
  \item[(Event $B_{\unsafesc}$)~] 
    If $B_{\unsafesc}$ occurs, then some $i \in [n] \setminus H_\rho$ is
    output and, in view of \cref{eq:H-def}, the set $X$ from
    \Cref{step:rt-pick-v} has the property that
    \[
      \Pr[i \in X \mid \cA \text{ reaches } \rho] \ge \frac{\ceil{\rl/2}}{4n}
      \ge \frac{\rl}{8n} \,.
    \]
    Consequently, the probability that $\cA$ makes a mistake by producing an
    output from $X$ (which would be a non-missing item) is $\ge
    \Pr[B_{\unsafesc}] \cdot \rl/(8 n)$. 

    If $\Pr[B_{\unsafesc}] > 1/(16)$, we then have
    $\Mstk(n,\rl,\frac{1}{2},z) \ge \rl/(2^7 n)$, which implies
    \cref{eq:mif-rt-lb-step-err} leaving nothing more to prove. Therefore, for
    the rest of this proof we will consider the case in which
    \begin{align}
      \Pr[B_{\unsafesc}] \le \frac{1}{16} \,.
      \label{eq:b-unsafe-bound}
    \end{align}

  \item[(Event $B_{\bigsc}$)~]
    We apply \Cref{lem:forward-avoid} with $q = \ceil{\rl/2}$, $\alpha =
    1/16$, $F$ being the random function that maps each $x \in
    \SSD{[n]}{q}$ to the random state reached by $\cA$ upon processing $x$,
    starting at its initial state, and
    \begin{align*}
      \hat{w} = \ceil{\frac{2\ln2}{1-\ln2} \frac{z+1+\log 16}{\ceil{\rl/2}}\, n} 
        &\le 1 + \floor{\frac{2\ln2}{1-\ln2} \frac{z+1+\log 16}{\ceil{\rl/2}}\, n} \\
        &\le 1 + \floor{\frac{8 \ln 2}{1 - \ln 2} \frac{z + 1 + \log 16}{\rl}\, n} \\
        &\le 1 + \floor{19 \frac{z + 5}{\rl} n} 
          \le 1 + \floor{32 \frac{z n}{\rl}} = \frac{1}{2} w + 1 \,,
    \end{align*}
    since $z \ge 8$. As $\frac{1}{2} w$ is an integer,
    \begin{align}
      \Pr\left[B_{\bigsc}\right]
      = \Pr\left[|H_\rho| > \frac{1}{2} w\right] 
      = \Pr\left[|H_\rho| \ge \frac{1}{2} w + 1\right] 
      \le \Pr\left[|H_\rho| \ge \hat{w}\right] 
      \le \frac{1}{16} \,.
      \label{eq:b-big-bound}
    \end{align}
    
  \item[(Event $B_{\abort}$)~] 
    There are exactly exactly three spots in \Cref{alg:rt-adversary-step}
    where the algorithm can abort:
    \Cref{step:rt-send-v,step:rt-send-splitting,step:rt-run-subadv}, when the
    adversary is feeding it inputs.  By assumption, $\cA$'s worst case
    probability of aborting, against {\em any} adversary, is at most
    $\frac12$. Therefore, in particular,
    \begin{align}
      \Pr[B_{\abort}] \le \frac{1}{2} \,.
      \label{eq:b-abort-bound}
    \end{align}
    When the algorithm does abort, we stop running the adversary, so
    $B_{\abortsc}$, $B_{\emptysc}$, and $B_{\timeoutsc}$ are mutually exclusive.
    
  \item[(Event $B_{\emptysc}$)~]
    For this to happen, at some point we must have $Q_h = \emptyset$. At that
    point, the state $\rho$ must not be in $Q_h$, in which case either $\rho
    \notin Q_0$ or $\rho$ was filtered out of $Q_h$ on \Cref{step:rt-qh-def}.
    By the definition of $Q_0$, $\rho \notin Q_0$ iff $B_{\bigsc}$ holds. On
    the other hand, filtering $\rho$ out of $Q_h$ requires that the algorithm
    produce an output outside $H_{\rho}$, which can only happen if either
    $B_{\unsafesc}$ or $B_{\abortsc}$ occurs. Since $B_{\abortsc}$ is mutually
    exclusive with $B_{\emptysc}$, we conclude that
    \begin{align}
      B_{\emptysc} \Rightarrow B_{\unsafesc} \lor B_{\bigsc} \,. \label{eq:b-empty-bound}
    \end{align}
  
  \item[(Event $B_{\timeoutsc}$)~]
    We bound the probability that the adversary will fail using
    \Cref{step:rt-abort-hmax}. For this to
    happen, the adversary must have picked $h_{\max}$ splitting adversaries,
    but fewer than $z+1$ of them must have
    produced a divisive output. (If there is a divisive output in round $h$,
    then $|Q_h| \le \frac{1}{2} |Q_{h-1}|$; if not, then $|Q_h| \le
    |Q_{h-1}|$. Thus with $z+1$ divisive outputs, $|Q_{h_{\max}}| \le |Q_0| /
    2^{z+1} \le |\Sigma|/2^{z+1} \le \frac{1}{2} < 1$, in which case
    the adversary would have failed at \Cref{step:rt-abort-qempty} instead.)
  
    For each $h \in [h_{\max}]$, let $X_h$ be the $\{0,1\}$ indicator random
    variable for the event that a divisive output is found in the $h$th step.
    (If the $h$th step did not occur or no splitting adversary was found, set
    $X_h = 1$.\footnote{Note that this definition accounts for the cases where the
    algorithm aborts: if it aborts on $h$, by \Cref{def:rt-defs} this is interpreted
    as divisive, and the following steps do not occur, so $X_h=X_{h+1}=\ldots,X_{h_{\max}}=1$.
    Then the event $\{\sum X_h < z+1\}$ slightly overestimates the probability
    of $B_{\timeoutsc}$; it would be more accurate to have $\cA$ aborting produce 
    $X_h = \infty$.}) Since in the $h$th step, a splitting adversary for the
    distribution for the current state of the algorithm, conditioned on the
    transcript so far, is chosen, then $\EE[X_h \mid X_1,\ldots,X_{h-1}] \ge
    1/2$.
    Applying \Cref{lem:azumanoff} gives:
    \begin{align*}
      \Pr\Big[\sum_{h \in [h_{\max}]} X_h < z + 1\Big] &= \Pr\left[\sum_{h \in [h_{\max}]} X_h \le \left(1 - \left(1-\frac{2 z}{h_{\max}} \right) \right)\frac{h_{\max}}{2} \right]\\
        &\le \exp\left( - \frac{1}{2} \left(1-\frac{2 z}{h_{\max}} \right)^2 \frac{h_{\max}}{2}\right) \\
        &\le \exp\left( - \frac{1}{8} \frac{h_{\max}}{2}\right) &&\hspace{-2.5cm}\text{\color{black!50!white}since $h_{\max} = 32 z \ge 4 z$} \\
        &\le \frac{1}{8} \,. &&\hspace{-2.5cm}\text{\color{black!50!white}since $h_{\max} \ge 16 \ln 8$}
    \end{align*}
    Thus $\Pr[B_{\timeoutsc}] \le 1/8$.

    Combining this  with \cref{eq:b-unsafe-bound,eq:b-big-bound,eq:b-abort-bound,eq:b-empty-bound} gives:
    \begin{align}
      \Pr[B_{\emptysc} \lor B_{\timeoutsc} \lor B_{\abortsc}]
        &\le \Pr[B_{\unsafesc} \lor B_{\bigsc} \lor B_{\timeoutsc} \lor B_{\abortsc}] \nonumber\\
        &\le \Pr[B_{\unsafesc}] + \Pr[B_{\bigsc}] + \Pr[B_{\timeoutsc}] + \Pr[B_{\abortsc}] \nonumber\\
        &\le \frac{1}{16} + \frac{1}{16} + \frac{1}{8} + \frac{1}{2} = \frac{3}{4} \,.\label{eq:unsafe-big-fail-abort-bound}
    \end{align}
  
  \item[(Event $R_{\errorsc}$)~]
    Let $E$ be the event that the adversary executes
    \Cref{step:rt-run-subadv}. Notice that $R_{\errorsc}$ can only occur when
    $E$ occurs. Note also that
    \[
      E = \neg B_{\emptysc} \land \neg B_{\timeoutsc} \land \neg B_{\abort} \,,
    \]
    so
    \cref{eq:unsafe-big-fail-abort-bound} implies that 
    \begin{align}
      \Pr[E] \ge \frac14 \,.
      \label{eq:e-bound}
    \end{align}

    Suppose that $E$ does occur. We now make two claims: (a)~that $|W| \le w$,
    and (b)~that $\cB \in \Algs(w,t,\frac12,z)$. For the first claim, note
    that by \Cref{step:rt-q0-def}, the set $Q_0$ only contains states $\sigma
    \in \Sigma$ with $|H_\sigma| \le \frac{1}{2} w$. The same bound on
    $|H_\sigma|$ holds for all $\sigma \in Q_{h-1}$ because $Q_{h-1} \subseteq
    Q_0$, thanks to \Cref{step:rt-qh-def}.  By the definition of $W$
    (\Cref{step:rt-wkp1-selection}),
    \begin{align*}
      |W| 
      &= \left|\left\{i \in [n] :\, 
        \frac{|\{\sigma \in Q_{h-1} :\, i \in H_\sigma \}|}{|Q_{h-1}|} 
        \ge \frac{1}{2} \right\}\right| \\
      &\le \sum_{i \in [n]} \frac{2 |\{\sigma \in Q_{h-1} :\, i \in H_\sigma \}|}{|Q_{h-1}|} \\
      &= \frac{2}{|Q_{h-1}|} \sum_{\sigma \in Q_{h-1}} |H_\sigma|
      \le \frac{2}{|Q_{h-1}|} \cdot |Q_{h-1}| \cdot \frac{1}{2} w 
      = w \,,
    \end{align*}
    which proves claim~(a).
  
    For the second claim, consider the sub-algorithm $\cB'$ defined as $\cB$
    just before the modification at \Cref{step:rt-modify-abort}. Then, for a
    particular adversary $\Upsilon$, running $\Upsilon$ against $\cB$ causes
    an abort exactly when running $\Upsilon$ against $\cB'$ causes either an
    abort or an output outside $W'$. In the iteration of the \textbf{for} loop
    that caused $E$, since we reached the \textbf{else} branch, there was no
    deterministic splitting adversary with respect to $\cD$. Thus for
    \emph{any} deterministic adversary $\Upsilon$, by \Cref{def:rt-defs},
    \begin{align*}
      \Pr_{\hat{\sigma} \sim \cD}[\text{$\Outs(\hat{\sigma}, \Upsilon)$ is divisive for $Q_{h-1}$}] 
      \le \frac{1}{2} \,.
    \end{align*}  
    Let $y$ be a realization of the random variable $\Outs(\hat{\sigma},
    \Upsilon)$; note that $y \in [n]^{t} \cup \{\bot\}$. If a given $y$ is not
    divisive, then $y \in [n]^{t}$ and, for each $i \in y$,
    \begin{align*}
      |\{\sigma \in Q_{h-1}:\, i \in H_\sigma\}| 
      \ge |\{\sigma \in Q_{h-1}:\, y \subseteq H_\sigma\}| 
      \ge \frac{1}{2} |Q_{h-1}| \,,
    \end{align*}
    which implies that $i \in W$. Thus in fact $y \subseteq W$. It follows
    that the probability that running $\Upsilon$ against $\cB'$ causes an
    abort or an output outside $W$ (which is a subset of $W'$) is at most
    $\frac12$.
    Viewing $\cB$ as a random-tape \mif algorithm handling input streams
    of length at most $t$, with items from $W'$, in the terminology of
    \Cref{def:mif-rt-err-complexity}, we have $\cB \in \Algs(w,t,\frac12,z)$.
    This proves claim~(b).
  
    Since $\Xi$ is picked to
    maximize the probability of $\cB$ making a mistake, we have
    $\Pr[R_{\errorsc} \mid E] \ge \Mstk(w,t,\frac12,z)$. Using
    \cref{eq:e-bound}, we obtain
    \begin{align*}
      \Pr[R_{\errorsc}] 
      \ge \Pr[E]\, \Mstk\left(w,t,\frac12,z\right)
      \ge \frac{1}{4} \Mstk\left(w,t,\frac12,z\right) \,.
    \end{align*}
  \end{itemize}

  Combining this lower bound with the lower bound for the case where
  $\Pr[B_{\unsafesc}] > \frac{1}{16}$ (considered just before
  \cref{eq:b-unsafe-bound}), we obtain
  \begin{align*}
    \Mstk\left(n,\rl,\frac12,z\right)
    &\ge \min\left(\frac{\rl}{2^7 n},\, 
      \frac{1}{4} \Mstk\left(w,t,\frac12,z\right) \right) \,.\qedhere
  \end{align*}
\end{proof}

\subsection{Calculating the Lower Bound}

\begin{lemma}\label{lem:mif-rt-lb-induct}
  Let $1 \le \rl < n$. For any integer $k \ge 1$, say that $z$ is an integer satisfying $z \le \frac{1}{256} \rl^{1 / k}$. Then:
  \begin{align}
    \Mstk(n,\rl,0,z) > \min\Big(\frac{\rl}{2^{7} n}, \frac{1}{4^{k}} \indic_{z \le L}\Big) \qquad \text{where} \qquad L = \frac{1}{64} \left(\frac{\rl^{k + 1}}{n}\right)^{\frac{2}{k^2 + 3 k - 2}} \,. \label{eq:mif-rt-lb-multistep-lb}
  \end{align}
  Consequently, algorithms for MIF with $\le \min(\frac{\rl}{2^{7} n}, 4^{-k})$ error require $> L$ bits of space.
\end{lemma}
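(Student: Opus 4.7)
The plan is to induct on $k$: each induction step invokes \Cref{lem:mif-rt-lb-step} to peel off one level of the recursion, and the induction terminates via \Cref{lem:mif-rt-lb-base-case}. The factor $1/4$ picked up from the step lemma, compounded over $k-1$ steps together with the $1/4$ from the base case, yields $1/4^{k}$. I will in fact establish the stronger statement $\Mstk(n,\rl,\tfrac12,z) > \min(\rl/(2^{7}n),\,(1/4^{k})\indic_{z\le L})$; the conclusion for $\Mstk(n,\rl,0,z)$ then follows from monotonicity of $\Mstk$ in $\gamma$.

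For the base case $k=1$, one has $L = \rl^{2}/(64n) < \rl^{2}/(16 n\ln 2)$ since $64 > 16\ln 2$, so whenever $z \le L$ the base-case lemma applies (with the strict inequality $\delta > 1/4$ extracted from its proof). When $z > L$, the right-hand side reduces to $\min(\rl/(2^{7}n),0)=0$, which is handled trivially.

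For the inductive step I first dispose of $z<8$: the hypothesis $z\le\rl^{1/k}/256$ combined with $z\ge 1$ forces $\rl \ge 256^{k} \ge 256$, so $\log(\rl+1) > 8 > z$ and \Cref{lem:rt-triv-logr-lb} gives $\Mstk = 1$. For $z \ge 8$, the parameters $w = 2\floor{32zn/\rl}$ and $t = \floor{\rl/(64z)}$ of \Cref{lem:mif-rt-lb-step} satisfy $t<w$ (using $n>\rl$ and $z\ge 8$), so the step lemma yields $\Mstk(n,\rl,\tfrac12,z) \ge \min(\rl/(2^{7}n),\,\tfrac14 \Mstk(w,t,\tfrac12,z))$. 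I then invoke the inductive hypothesis at $(w,t,z)$ with parameter $k-1$: its prerequisite $z \le t^{1/(k-1)}/256$ reduces (using $t \approx \rl/(64z)$) to $z^{k} \le \rl/(64\cdot 256^{k-1})$, which follows from the given bound $z^{k} \le \rl/256^{k}$ because $256 \ge 64$.

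The main technical obstacle is the algebraic check that the recursion closes. Writing $L' = \tfrac{1}{64}(t^{k}/w)^{2/(k^{2}+k-4)}$ for the inductive threshold and combining the two minima, one needs: (i) $t/(2^{7}w) \ge \rl/(2^{7}n)$, which unpacks to $\rl \ge 4096 z^{2}$ and follows from $z\le\rl^{1/k}/256$ for all $k\ge 2$; and (ii) $L'\ge L$, so that the premise $z \le L$ propagates to the inner call and triggers the nontrivial $1/4^{k-1}$ bound there. For (ii), substituting $t^{k}/w \approx \rl^{k+1}/(64^{k+1}z^{k+1}n)$ turns $L'\ge L$ into the inequality $(64z)^{(k^{2}+3k-2)/2} \le \rl^{k+1}/n$ after rearranging exponents --- and this is precisely equivalent to $z \le L$. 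So (ii) holds exactly in the regime where the indicator $\indic_{z\le L}$ is nonzero; this is where the specific exponent $2/(k^{2}+3k-2)$ in the definition of $L$ is calibrated so that the recursion closes. Checking the $k=2$ case by hand, where $L' = t^{2}/(64w) \approx \rl^{3}/(64^{4}z^{3}n)$ and $L \approx \tfrac{1}{64}(\rl^{3}/n)^{1/4}$, confirms the constants are set up to make $L'/L \ge 1$ whenever $z \le L$. With (i) and (ii) in place, substituting into the two-level recursive bound yields \cref{eq:mif-rt-lb-multistep-lb}, and the ``consequently'' statement is immediate from the definition of $\Mstk$.
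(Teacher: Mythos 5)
Your inductive framing is essentially the same argument the paper carries out by unrolling: the paper applies \Cref{lem:mif-rt-lb-step} explicitly $k-1$ times, tracking sequences $n_i, \rl_i$, and then shows the intermediate terms $\frac{1}{4^{i-1}}\frac{\rl_i}{2^7 n_i}$ are all dominated by $\frac{\rl}{2^7 n}$ (via $\frac{\rl_i}{n_i}\ge 4\frac{\rl_{i-1}}{n_{i-1}}$), whereas you package the same peeling-off into an induction hypothesis. Both routes rest on identical use of \Cref{lem:mif-rt-lb-step}, \Cref{lem:mif-rt-lb-base-case}, and \Cref{lem:rt-triv-logr-lb}, and the algebraic calibration of $L$ is the same.

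Two details in your write-up need tightening. First, when you combine the step lemma with the inductive bound, the middle term you inherit is $\frac{t}{2^9 w}$ (you pick up the extra $\frac14$ from the step lemma), so the condition you need in (i) is $\frac{t}{w}\ge 4\frac{\rl}{n}$, not $\frac{t}{w}\ge\frac{\rl}{n}$; this is the factor of $4$ the paper isolates as $\frac{\rl_i}{n_i}\ge 4\frac{\rl_{i-1}}{n_{i-1}}$, and it still follows from $\rl\ge(256z)^2$, but your stated unpacking $\rl\ge 4096z^2$ undershoots. Second, in (ii) you assert that $L'\ge L$ is ``precisely equivalent'' to $z\le L$, but this is only exact after replacing $t$ and $w$ by $\rl/(64z)$ and $64zn/\rl$; with the actual floor definitions $t=\floor{\rl/(64z)}$ and $w=2\floor{32zn/\rl}$ there is multiplicative slop that must be absorbed, and your hand-check at $k=2$ with $\approx$ does not close this. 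The paper handles this by chaining the contrapositive through the explicit $\rl_i, n_i$ and eating the slack in the final constant (the $64/16$ factor), so you should either quantify the floor loss (using $\rl_i\ge 256z$ so the relative error per level is $O(1/z)$) or mirror the paper's contrapositive chain. Finally, neither you nor the paper explicitly verifies the hypothesis $t<w$ of \Cref{lem:mif-rt-lb-step} at each level; this is a shared gap worth a sentence.
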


\begin{proof}[Proof of \Cref{lem:mif-rt-lb-induct}]
  Let $n_1 = n$ and $\rl_1 = \rl$, and for $i=2,\ldots,k$, set $n_i = 2\floor{32 \frac{z n_{i-1}}{\rl_{i-1}}}$ and $\rl_i = \floor{\frac{\rl_{i-1}}{64 z}}$. This matches the definitions used in \Cref{lem:mif-rt-lb-step}. As we have been promised that $z \le \frac{1}{256} \rl^{1 / k}$, we have in particular that:
  \begin{align*}
    (256 z)^{k} \le \rl \qquad \text{which implies} \qquad \rl_1 \ge \cdots \ge \rl_k \ge 256 z \,.
  \end{align*}
  By \Cref{lem:rt-triv-logr-lb}, we only need to consider the case $z \ge \log(\rl+1)$, as otherwise $\Mstk(n,\rl,0,z) = 1$. Thus, if $k \ge 2$, we have:
  \begin{align*}
    z \ge \log(\rl+1) \ge k \log(256 z) \ge k \log(256) = 16 \ge 8 \,.
  \end{align*}
  (If $k=1$, then \cref{eq:mif-rt-lb-multistep-lb} follows immediately from \Cref{lem:mif-rt-lb-base-case}.)

  Since $\Mstk(n,\rl,0,z) \ge \Mstk(n,\rl,1/2,z)$, it suffices to lower bound the case where algorithms are permitted up to $1/2$ abort probability. We will lower bound $\Mstk(n,\rl,1/2,z)$ by recursively applying
  \Cref{lem:mif-rt-lb-step} $k-1$ times, and then applying
  \Cref{lem:mif-rt-lb-base-case}. This yields:
  \begin{align*}
    \Mstk(n,\rl,1/2,z) &\ge \min\Big(\frac{\rl_1}{2^7 n_1}, \frac{1}{4} \Mstk(n_1,\rl_1,1/2,z)) \\
      &\ge \min\Big(\frac{\rl_1}{2^7 n_1}, \frac{1}{4} \min\Big(\frac{\rl_2}{2^7 n_2}, \ldots \frac{1}{4} \min\Big(\frac{\rl_{k-1}}{2^7 n_{k-1}},  \frac{1}{4} \indic_{z \le \rl_k^2 / (16 n_k \ln 2)} \Big)\ldots\Big)\Big) \\
      &\ge \min\Big(\frac{\rl_1}{2^7 n_1}, \frac{1}{4} \frac{\rl_2}{2^7 n_2},
      \ldots, \frac{1}{4^{k-1}} \frac{\rl_{k-1}}{2^7 n_{k-1}}, \frac{1}{4^k} \indic_{z \le \rl_k^2 / (16 n_k \ln 2)} \Big) \,.
  \end{align*}
  Only the first and last terms of the minimum are significant, because the terms for $i = 2,\ldots,{k-1}$ are all dominated by the first term:
  \begin{align*}
    \frac{\rl_i}{n_i} = \frac{\floor{\frac{\rl_{i-1}}{64 z} }}{2 \floor{32 \frac{z n_{i-1}}{\rl_{i-1}} } } \ge \frac{1}{2} \frac{\rl_{i-1}}{64 z} \frac{\rl_{i-1}}{64 z n_{i-1}} \ge \frac{\rl_{i-1}}{2 (64 z)^2} \frac{\rl_{i-1}}{n_{i-1}} \ge 4 \frac{\rl_{i-1}}{n_{i-1}} \,,
  \end{align*}
  where in the last step, we used the fact that $\rl_{i-1} \ge \rl_{k-2} \ge (256 z)^2$. Thus:
  \begin{align*}
    \Mstk(n,\rl,0,z) \ge \min\left(\frac{\rl}{2^{7} n k}, \frac{1}{4^k} \indic_{z \le \rl_k^2 / (16 n_k \ln 2)} \right) \,.
  \end{align*}
  
  We have almost proven \cref{eq:mif-rt-lb-multistep-lb}. It remains to lower bound
  $\indic_{z \le \rl_k^2 / (16 n_k \ln 2)}$ by $\indic_{z \le L}$ for some $L$. We do so by proving $z > \rl_k^2 / (16 n_k \ln 2)$ implies $z > L$. As a consequence of the definitions, we have:
  \begin{align*}
    \rl_i 
    =  \floor{\frac{\rl_{i-1}}{64 z} } 
    = \floor{\frac{\floor{\frac{\rl_{i-2}}{64 z}}}{64 z} } 
    = \cdots 
    =\floor{\frac{\rl}{(64 z)^{i-1}} } 
    \qquad \text{and} \qquad 
    n_i \le 64 \frac{z n_{i-1}}{\rl_{i-1}} \,,
  \end{align*}
  and thus:
  \begin{align*}
     z &> \frac{\rl_{k}^2}{16 \ln 2} \frac{1}{n_{k}} \ge \frac{\rl_{k}^2}{16} \frac{\prod_{k=1}^{k - 1} \rl_{k} }{(64 z)^{k - 1} n} \\
      & \ge \frac{\rl^{k + 1} }{16 \cdot (64 z)^{2 (k -1) + (k - 2) + (k - 3) + \cdots + 1 + 0}  (64 z)^{k - 1} n} \\
      & = \frac{\rl^{k + 1} }{16 \cdot (64 z)^{(k^2 + 3 k - 4) / 2} n} \,.
  \end{align*}
  Rearranging to put all the $z$ terms on the left gives:
  \begin{align*}
    64 z \cdot (64 z)^{\frac{k^2 + 3k - 4}{2}} \ge 64 \frac{\rl^{k+1}}{16 n} \,,
  \end{align*}
  which implies
  \begin{align*}
    z &> \frac{1}{64} \left(\frac{64 \rl^{k + 1}}{16 n}\right)^{\frac{2}{k^2 + 3 k - 2}} \ge \frac{1}{64} \left(\frac{\rl^{k + 1}}{n}\right)^{\frac{2}{k^2 + 3 k - 2}} \,. \qedhere
  \end{align*}
\end{proof}

Now, say a random tape algorithm has error probability $\le \frac{\rl}{2^{7} n}$ against any adaptive adversary, and uses $z$ bits of space. If $\rl \ge n^{2/3}$, it is easy to show that the optimal value of $k$ with which to apply \Cref{lem:mif-rt-lb-induct} is $1$. Otherwise, for all $k$, we have two cases: if $z \le \frac{1}{256} \rl^{1/k}$, then we can apply \Cref{lem:mif-rt-lb-induct}. By \Cref{lem:rt-triv-logr-lb}, $z \log(\rl + 1) \ge 1$, which implies $(256)^k \le \rl$ and thus $k \le \floor{\frac{1}{8} \log \rl}$. As the algorithm has
\begin{align*}
  \text{error} \le \frac{\rl}{2^{7} n} \le \min(\frac{\rl}{2^{7} n}, \frac{1}{2^{7} \rl^{1/2}} ) \le \min(\frac{\rl}{2^{7} n}, \frac{1}{4^{k}}) \,,
\end{align*}
by \Cref{lem:mif-rt-lb-induct} the algorithm's space usage must satisfy:
\begin{align*}
  z \ge \frac{1}{64} \left(\frac{\rl^{k + 1}}{n}\right)^{\frac{2}{k^2 + 3 k - 2}} \,.
\end{align*}
Since we either have this lower bound on $z$, or $z > \frac{1}{256} \rl^{1/k}$, it follows that for any integer $k \ge 1$:
\begin{align*}
  z \ge \frac{1}{256} \min\left(\rl^{1/k}, \left(\frac{\rl^{k + 1}}{n}\right)^{\frac{2}{k^2 + 3 k - 2}}\right)  \,.
\end{align*}
We can take the maximum over all values of $k$ to obtain:
\begin{align*}
  z \ge \frac{1}{256} \max_{k \in \NN} \min\Big(\rl^{1/k}, \left(\frac{\rl^{k + 1}}{n}\right)^{\frac{2}{k^2 + 3 k - 2}}\Big)  \,.
\end{align*}
The right hand side can be simplified with the following lemma, whose proof is mostly calculation and is deferred to \Cref{subsec:rt-lb-mech}:

\begin{lemma}\label{lem:rt-lb-calc-multiple}
  \begin{align}
    \max_{k \in \NN} \min\left(\rl^{1/k},\left(\frac{\rl^{k + 1}}{n}\right)^{\frac{2}{k^2 + 3 k - 2}}  \right) \ge  \max_{k \in \NN} \left(\frac{\rl^{k + 1}}{n}\right)^{\frac{2}{k^2 + 3 k - 2}}  \ge \rl^{\frac{15 \log \rl}{32 \log n}} \label{eq:rt-lb-max-min-lb}
  \end{align}
\end{lemma}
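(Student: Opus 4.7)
The plan is to establish the two inequalities in \cref{eq:rt-lb-max-min-lb} in turn, each via an explicit choice of $k$. For the first, writing $B_k := (\rl^{k+1}/n)^{2/(k^2+3k-2)}$ for brevity, the bound $\min(a,b) \le b$ gives $\max_k \min(\rl^{1/k}, B_k) \le \max_k B_k$ automatically; so the claim reduces to exhibiting an integer $k^*$ at which both $B_{k^*} = \max_k B_k$ and $\rl^{1/k^*} \ge B_{k^*}$. I would take $k^*$ to be the integer argmax of $B_k$. Setting the continuous derivative $\partial_k \log B_k = 0$ produces the first-order condition $(k^2 + 2k + 5)\log \rl = (2k+3)\log n$, and substituting this back yields the clean identity $\log B_{k_c} = 2\log \rl / (2k_c + 3) < \log \rl / k_c = \log \rl^{1/k_c}$. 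Solving $\rl^{1/k} = B_k$ shows the crossover lies a constant distance to the right of the continuous argmax $k_c$, so the integer $k^*$ closest to $k_c$ sits to the left of the crossover and satisfies $\rl^{1/k^*} \ge B_{k^*}$.

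For the second inequality, I set $\mu := \log n / \log \rl \ge 1$ and rewrite $B_k = \rl^{g(k)}$ with $g(k) := 2(k + 1 - \mu)/(k^2 + 3k - 2)$, so the claim becomes $\max_k g(k)\, \mu \ge 15/32$. The continuous argmax of $g$ lies near $k \approx 2\mu - 1/2$, so I take $k^*$ to be whichever integer minimizes $|k^* - (2\mu - 1)|$, writing $k^* = 2\mu - 1 + \epsilon$ with $|\epsilon| \le 1/2$. Cross-multiplying by the (positive) denominator of $g$, the target $g(k^*)\mu \ge 15/32$ rearranges to the quadratic inequality
\begin{align*}
  4\mu^2 - 2(15 - 2\epsilon)\mu + 15(4 - \epsilon^2 - \epsilon) \ge 0,
\end{align*}
whose discriminant evaluates to $4(64\epsilon^2 - 15)$. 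This is strictly negative whenever $|\epsilon| < \sqrt{15}/8 \approx 0.484$, in which case the quadratic is nonnegative for every real $\mu$ and the target inequality holds unconditionally.

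The main obstacle is the narrow boundary band $|\epsilon| \in [\sqrt{15}/8,\, 1/2]$ (corresponding to $2\mu - 1$ sitting extremely close to a half-integer) together with small-$\mu$ edge cases where the continuous approximation is least accurate. In those regimes, shifting $k^*$ by $\pm 1$ (which brings $|\epsilon|$ back under $\sqrt{15}/8$) or directly checking $g(k)$ at two or three nearby integer candidates and keeping the best resolves the issue; only a finite number of sub-regimes need explicit verification, and the asymptotic margin $g(k^*)\mu \to 1/2 > 15/32$ as $\mu \to \infty$ confirms ample slack. The entire proof is thus a concrete polynomial case analysis once the clean identity and the discriminant formula are in place.
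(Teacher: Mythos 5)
Your overall strategy parallels the paper's: for the first inequality, exhibit an argmax $k^*$ of $B_k := (\rl^{k+1}/n)^{2/(k^2+3k-2)}$ at which $\rl^{1/k^*} \ge B_{k^*}$; for the second, verify the $15/32$ constant by analyzing the argmax location as a function of $\mu = \log n/\log\rl$. Your clean identity $\log B_{k_c} = \frac{2\log\rl}{2k_c+3}$ at the continuous critical point, and the discriminant computation $4(64\epsilon^2 - 15)$ after cross-multiplying, are both correct (and the latter recovers the same critical case $k=7$ that the paper finds by minimizing $(k^2+k+4)/(k+1)^2$). But there are two genuine gaps.

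First, for the opening inequality, you claim the integer argmax of $B_k$ is ``the integer closest to $k_c$'' and that the crossover of $\rl^{1/k}$ with $B_k$ sits ``a constant distance to the right'' of $k_c$; neither is established. A unimodal sequence's argmax is one of $\lfloor k_c\rfloor, \lceil k_c\rceil$, but not necessarily the nearer of the two. The paper sidesteps this by working with the \emph{discrete} first-order condition $B_\lambda \ge B_{\lambda-1}$ directly: this yields $\rl \le n^{(2\lambda+2)/(\lambda^2+\lambda+4)}$, while $\rl^{1/\lambda} \ge B_\lambda$ holds iff $\rl \le n^{2\lambda/(\lambda^2-\lambda+2)}$, and the elementary comparison $\frac{2\lambda+2}{\lambda^2+\lambda+4} \le \frac{2\lambda}{\lambda^2-\lambda+2}$ finishes the argument without any discretization error to control.

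Second, and more seriously, your proposed repair of the boundary band $|\epsilon| \in [\sqrt{15}/8, 1/2]$ does not work. Shifting $k^*$ by $\pm 1$ changes $\epsilon$ to $\epsilon \pm 1$; since $|\epsilon| \ge \sqrt{15}/8 \approx 0.484$, the shifted value satisfies $|\epsilon\pm 1| \ge 1 - 1/2 = 1/2 > \sqrt{15}/8$, so the discriminant is \emph{still} nonnegative. The boundary band therefore remains unresolved, and it is not a measure-zero edge case: $\epsilon$ sweeps the full interval $[-1/2,1/2]$ as $\mu$ varies within each half-integer window, so a positive fraction of the parameter range falls into the band. The paper avoids this entirely by exploiting that $\max_k \log B_k$ is a convex, piecewise-linear function of $\log\rl$ and then only checking the inequality at corner points $\rl = n^{(2k+2)/(k^2+k+4)}$, where convexity of the target $C(\log\rl)^2/\log n$ propagates the bound to the interior of each linear segment automatically; this is exact in the discrete variable and requires no approximation argument.
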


Summarizing, we obtain the following theorem.

\begin{theorem} 
\label{thm:rt-lb-intro}
  Random tape $\delta$-error adversarially robust algorithms for $\mif(n,\rl)$ require
  \begin{align*}
    \Omega\left(\max_{k \in \NN} \left(\frac{\rl^{k + 1}}{n}\right)^{\frac{2}{k^2 + 3 k - 2}}\right) 
    = \Omega\left(\rl^{\frac{15}{32} \log_{n} \rl}\right)
  \end{align*}
  bits of space, for $\delta \le \rl/(2^{7} n)$. \qed
\end{theorem}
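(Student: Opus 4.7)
The plan is to combine the inductive lower bound of \Cref{lem:mif-rt-lb-induct} with the simplification in \Cref{lem:rt-lb-calc-multiple}, using a case split over the space parameter $z$ at each choice of recursion depth $k$. Given a random-tape adversarially robust $\mif(n,\rl)$ algorithm using $z$ bits of space with error $\delta \le \rl/(2^7 n)$, I fix an arbitrary integer $k \ge 1$ and consider two cases: either $z > \frac{1}{256}\rl^{1/k}$, in which case we already have a lower bound on $z$ of the desired form, or $z \le \frac{1}{256}\rl^{1/k}$, in which case the hypothesis of \Cref{lem:mif-rt-lb-induct} is met.

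In the second case, the error bound $\delta \le \rl/(2^7 n)$ is, via the trivial inequality $\delta \le 1/(2^7 \rl^{1/2}) \le 4^{-k}$ (using $k \le \frac{1}{8}\log\rl$, which itself follows from \Cref{lem:rt-triv-logr-lb}'s trivial bound $z \ge \log(\rl+1)$ combined with $(256)^k \le \rl$), below both thresholds in the minimum appearing in \Cref{lem:mif-rt-lb-induct}. So the lemma forces $z \ge \frac{1}{64}(\rl^{k+1}/n)^{2/(k^2+3k-2)}$. Combining the two cases, for every $k \in \NN$,
\begin{align*}
  z \,\ge\, \frac{1}{256}\,\min\!\Bigl(\rl^{1/k},\,\bigl(\rl^{k+1}/n\bigr)^{2/(k^2+3k-2)}\Bigr)\,.
\end{align*}
Taking the maximum over $k$ and then applying \Cref{lem:rt-lb-calc-multiple} to lower bound this max-min expression by $\rl^{(15/32)\log_n \rl}$ concludes the derivation.

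I do not expect any serious obstacle here: the hard work has already been done in proving \Cref{lem:mif-rt-lb-induct} (the recursive-adversary construction) and in the purely arithmetic \Cref{lem:rt-lb-calc-multiple}. The only subtlety is verifying that the error-level assumption $\delta \le \rl/(2^7 n)$ from the theorem statement is tight enough to absorb both sides of the minimum in \Cref{lem:mif-rt-lb-induct}, i.e., that $\rl/(2^7 n) \le 4^{-k}$ for every relevant $k$. That uses the trivial-range restriction $k \le \tfrac{1}{8}\log\rl$ coming from \Cref{lem:rt-triv-logr-lb}, which is precisely the observation already made in the paragraph preceding the theorem. Writing the proof out is then just assembling these pieces and invoking \Cref{lem:rt-lb-calc-multiple} for the closed-form bound.
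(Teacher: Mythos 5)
Your outline tracks the paper's own derivation almost line for line: fix $k$, split on whether $z$ exceeds $\frac{1}{256}\rl^{1/k}$, feed \Cref{lem:mif-rt-lb-induct} in the second case using the promise $\delta \le \rl/(2^7 n)$, combine into the max--min expression, and then invoke \Cref{lem:rt-lb-calc-multiple}. The pieces you cite (\Cref{lem:rt-triv-logr-lb} for $z\ge\log(\rl+1)$, the bound $k\le\frac18\log\rl$, the two-sided control of the minimum in \Cref{lem:mif-rt-lb-induct}) are the same the paper uses, so the overall route is the one the authors took.

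One step you label ``trivial'' is not: the chain $\delta \le \rl/(2^7 n) \le 1/(2^7\rl^{1/2}) \le 4^{-k}$ has a middle inequality equivalent to $\rl^{3/2}\le n$, i.e.\ $\rl\le n^{2/3}$, which is an extra assumption not in the theorem hypothesis, and it does \emph{not} follow from $k\le\frac18\log\rl$ alone (for $\rl$ near $n$ and large $n$ one can have $k\le\frac18\log\rl$ yet $\rl/(2^7 n) > 4^{-k}$). The paper's text handles this by first disposing of the regime $\rl\ge n^{2/3}$ with the remark that there only $k=1$ is relevant, for which $\rl/(2^7 n)<1/4$ holds unconditionally because $\rl<n$, and then running the chain you quote only for $\rl<n^{2/3}$. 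Without that case split, your intermediate claim ``for every $k\in\NN$, $z\ge\frac1{256}\min(\rl^{1/k},(\rl^{k+1}/n)^{2/(k^2+3k-2)})$'' is not established in full generality, so strictly speaking the step of ``taking the max over all $k$'' outruns what you've shown. It is an easily repaired gap (add the $\rl\ge n^{2/3}$ branch, where the $k=1$ bound suffices and equals the maximizer by the calculation inside \Cref{lem:rt-lb-calc-multiple}), but it should be stated rather than glossed as trivial.
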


\subsection{Remarks on the Lower Bound}

To prove the above lower bound, we required $\delta \le \frac{\rl}{2^{7} n}$.
For larger values of $\delta$, random tape algorithms can be much more
efficient. For example, there is a $(\log t)$-space algorithm for
$\mif(n,\rl)$ with $O(\rl^2/t)$ error probability, which on every input
randomly picks a new state (and output value) from $[t]$.

That being said, if a random tape algorithm $\cA$ with error $\le \delta$
provides the additional guarantee  that it never makes a mistake (i.e, either
produces a correct output or aborts),\footnote{This guarantee essentially
rules out the possibility of algorithms that randomly and blindly guess
outputs. Most of the algorithms for $\mif$ in this paper and in
\cite{Stoeckl23} provide this ``zero-mistake'' guarantee.} one can construct a
new algorithm $\cB$ with error $O(\frac{1}{n})$ by running $\Theta(\frac{\log
n}{\log 1/\delta})$ parallel copies of $\cA$ and reporting outputs from any
copy that has not yet aborted. Proving a space lower bound for $\cB$ then
implies a slightly weaker one for $\cA$.

The lower bound of \Cref{thm:rt-lb-intro} is not particularly tight, and we
suspect it can be improved to match the upper bound within $\polylog(\rl, n)$
factors. There are at least two scenarios that we suspect algorithms must
behave similarly to, in which we could do better than the current adversary's
reduction to $\mif(w, \Theta(\rl/z))$

\begin{itemize}

  \item If a constant fraction of the next $\rl/2$ outputs are contained in
  $W$, we could (essentially) reduce to $\mif(w, \rl/2)$.

  \item If on each search step of length $\Theta(\rl/z)$, the outputs of the
  algorithm are concentrated in a new set of size $\Theta(w/z)$, then we could
  (essentially) reduce to $\mif(\Theta(w/z), \Theta(\rl/z))$.

\end{itemize}

The adversary of \Cref{alg:rt-adversary-step} runs in doubly exponential time,
and requires knowledge of the algorithm. The former condition cannot be
improved by too much: if one-way functions exist, one could implement the
random oracle algorithm for $\mif(n,\rl)$ from \cite{Stoeckl23} using a
pseudo-random generator that fools all polynomial-time adversaries.  One can
also prove by minimax theorem that universal adversaries for (random tape or
otherwise) $\mif(n,\rl)$ algorithms can not be used to prove any stronger
lower bounds than the one for random oracle algorithms.

\section{The Random Tape Upper Bound}\label{sec:rt-ub}

In this section, we describe an adversarially robust random tape algorithm for $\mif(n,\rl)$ which obtains error $\le \delta$. See \Cref{subsec:rt-ub-overview} for a high level overview. The algorithm, shown in \Cref{alg:rt}, can be implemented for almost all pairs $\rl < n$, requiring only $\rl \le n / 64$ and $\rl \ge 4$ for its parameters to be meaningful. It can be seen as a multi-level generalization of the random oracle algorithm from \cite{Stoeckl23}.

\begin{figure}[ht]
  \centering
  \includegraphics[width=15cm]{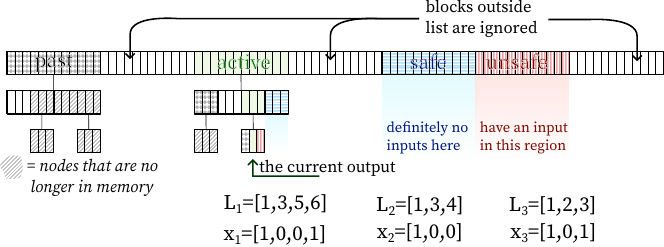}
  \caption{Diagram showing the state of the algorithm in \Cref{alg:rt} and how it relates to the parts of the implicit random tree that the algorithm traverses. Positions on the horizontal axis correspond to different integers in $[n]$. To keep the example legible, we set parameters $d = 3$,  $w_1 = 7, w_2 = 4, w_3 = 3$, and $b_1 = 4, b_2 = 3, b_3 = 3$. \label{fig:rt-ub-intuition}}
\end{figure}

In order to prove that the algorithm in \Cref{alg:rt} is correct, we will need some additional notation. Let $d,\alpha,b_1,\ldots,b_d,w_1,\ldots,w_d$ be
as defined in \Cref{alg:rt}. It is helpful to view this algorithm as traversing over the leaves of a random tree of height $d$, in which:
\begin{itemize}
  \item Every node $v$ in the tree is associated with a subset $S_v$ of $[n]$. We say a node is at level $i$ if it is at depth $i - 1$. All nodes at a given level have disjoint associated subsets.
  \item The ``root'' $\rho$ of the tree has $S_\rho$ of size $\prod_{i=1}^{d} w_i$, and is at level $1$
  \item Each node $v$ at level $i$ (depth $i-1$) has $b_{i+1}$ children; the set $S_v$ is partitioned into $w_{i+1}$ parts of equal size, and each child of $v$ is associated with a random and unique one of these parts
  \item Each leaf node $u$, at depth $d$, is associated with a set of size 1, i.e., a single and unique integer in $[n]$. There are $\prod_{i=1}^{d} b_i$ leaf nodes in total.
\end{itemize}
See for example \Cref{fig:rt-ub-intuition}. The algorithm maintains a view of just the branch of the tree from the root to the current leaf node. Its output will be the number associated to this leaf. For each node $v$ on this branch, at level $i \in [d]$ (depth $i-1$), it keeps a record of the positions $L_i \in [w_i]^{b_i}$ of its children, and a record $x_i \in \{0,1\}^{b_i}$ indicating their status. There are four categories for child nodes:
\begin{itemize}
\item A node is \textsc{past} if the traversal over the tree passed through and leaf the node; past nodes are marked with a $1$ in $x_i$.
\item A node is \textsc{active} if it is on the branch to the current leaf node; this is the node with the lowest index which is marked with a $0$ in $x_i$.
\item A node $v$ is \textsc{safe} if it comes after the active node, and the adversary has never sent an input in $S_v$; safe nodes are marked with a $0$ in $x_i$.
\item A node $v$ is \textsc{unsafe} if it comes after the active node, and the adversary did send an input in $S_v$; unsafe nodes are marked with a $1$ in $x_i$.
\end{itemize}
The algorithm maintains these records as the adversary sends new inputs,
marking safe child nodes $v$ as unsafe if an element in $S_v$ is received. The
current leaf node is found by, from the root, following the chain of active
nodes. If the adversary sends the value of the current leaf node, the
algorithm will mark it by setting the corresponding entry in $x_d$ to $1$,
thereby changing the value of the current active node. If every element of
$x_d$ is a now $1$, this means that the adversary has sent an input for every
child of the level $d$ node on the current branch, so the algorithm marks the
current active child of the level $d - 1$ node with a $1$, thereby moving the
current branch to use a new level $d$ node, $u$, which is
\emph{safe}.\footnote{If the level $d - 1$ node has no children marked with a
$0$ after this, we repeat the process at level $d -2$, and so on.} It the
``loads the positions of the children of $u$''---the tree being randomly
generated, this is implemented by $L_d$ being randomly sampled and $x_d$ being
reset to be all zeros---and proceeds.

While there are $\prod_{i=1}^{d} b_i$ leaf nodes in the ideal random tree, the algorithm's traversal of them may skip a fraction, because they (or one of their ancestors) were marked as unsafe. We say that such leaf nodes, along with those which were once active, have been \textsc{killed}.

\begin{algorithm}
  \caption{Adversarially robust random tape algorithm for $\mif(n,\rl)$ with error $\le \delta$}
  \label[listing]{alg:rt}

  \begin{algorithmic}[1]
  \Statex Requirements: $\rl \le n / 64$ and $\rl \ge 4$.
  \Statex Parameters: $d = \min(\ceil{\log \rl}, \floor{2 \frac{\log(n/4)}{\log 16 \rl}})$. 
  \Statex $\alpha = \begin{cases}2 & \text{if $\ceil{\log \rl} <  \floor{2 \frac{\log(n/4)}{\log 16 \rl}}$} \\ \frac{(4 \rl)^{2/(d-1)}}{(n/4)^{2/(d(d-1))}} & \text{otherwise}\end{cases}$
    \Statex Let $u$ be chosen via \Cref{lem:fract-power-round}, so that $\alpha^{d-2} \le \prod_{i=2}^{d-1} b_i \le 2 \alpha^{d-2}$
    \Statex $b_2 = \ldots = b_u = \ceil{\alpha}$, and $b_{u+1} = \ldots = b_{d-1} = \floor{\alpha}$
    \Statex $b_1 = \min(\rl + 1, \ceil{8 \alpha} + \ceil{3 \log 1/\delta})$; and $b_d = \ceil{ \frac{\rl}{\alpha^{d-1}} }$
    \Statex $w_1 = 16 \rl$; and for each $i \in \{2, \ldots, d\}$, $w_i = \prod_{j=i}^{d} b_j$.
  
  \Statex Let $\iota : [w_1]\times[w_2]\times \cdots \times [w_d] \rightarrow [n]$ be an arbitrary injective function
  
  \Statex
  \Statex \ul{\textbf{Initialization}}:
    \For{$i \in [d]$}
      \State $L_i \gets $ random sequence without repetition in $[w_i]^{b_i}$
      \State $x_i \gets (0,\ldots,0) \in \{0,1\}^{b_i}$
    \EndFor
    
  \Statex
  \Statex \ul{\textbf{Update}($a \in [n]$)}: 
    \If{ $a \notin \iota^{-1}[n]$ } \textbf{return} \Comment{Any integer not in $\iota^{-1}[n]$ can never be an output}\EndIf
    \State $v_1,\ldots,v_d = \iota^{-1}(a)$ \Comment{Map input into $[w_1]\times\ldots\times[w_d]$}
    \State For $i \in [d]$, define $c_i = \min |\{j : x_i[j] = 0\}|$
    
    \If{for all $i\in[d]$, $v_i = L_i[c_i]$}
        \LineComment{Move to the next leaf node, sampling new child node positions as necessary}
        \For{$i = d,\ldots,1$}
          \State $x_i[c_i] \gets 1$
          \If{$x_i$ is the all-1s vector}
            \If{$i = 1$} \textbf{abort}  \Comment{If we reach $i=1$, then even the root node is full}\EndIf
            \State $L_i \gets $ random sequence without repetition in $[w_i]^{b_i}$
            \State $x_i \gets (0,\ldots,0)$
          \Else \ \textbf{break}
          \EndIf
        \EndFor
    \Else
        \LineComment{Mark a branch as unsafe, if there was a hit}
        \State Let $j$ be the smallest integer in $[d]$ for which $v_j \ne L_{j}[c_j]$. 
        \If{$\exists y \in [b_j]$ for which $L_j[y] = v_j$}
            \State $x_j[y] \gets 1$
        \EndIf
    \EndIf
    
  \Statex
  \Statex \ul{\textbf{Output} $\rightarrow [n]$}:
    \State For $i \in [d]$, define $c_i = \min |\{j : x_i[j] = 0\}|$
    \State \textbf{return} $\iota[ (L_1[c_1],L_2[c_2],\ldots,L_d[c_d]) ]$
  \end{algorithmic}
  
\end{algorithm}

\Cref{alg:rt} uses the following lemma to set some of its parameters; the specific rounding scheme for the values $b_2,\ldots,b_{d-1}$ ensures that $b_1$ can
decrease relatively smoothly as $\rl$ decreases. (Setting all $b_2=\ldots=b_{d -1}$ to $\floor{\alpha}$ can lead to having $b_1$ be significantly larger than
necessary (by up to a factor $(3/2)^d = \rl^{O(1)}$); setting all $b_2=\ldots=b_{d -1}$ to $\ceil{\alpha}$ would violate the $\prod_{i=1}^{d} w_i \le n$ constraint.)

\begin{lemma}\label{lem:fract-power-round}
  Let $\alpha \ge 1$. Then for all $k \ge 0$, there exists an integer $u$ depending
  on $\alpha$ and $k$ so that 
  \begin{align*}
    \alpha^k \le \ceil{\alpha}^u \floor{\alpha}^{k - u} \le 2 \alpha^k \,.
  \end{align*}
\end{lemma}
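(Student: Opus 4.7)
The plan is to view $f(u) := \ceil{\alpha}^u \floor{\alpha}^{k-u}$ as a function of an integer variable $u$ and argue that the ratio between consecutive values $f(u+1)/f(u)$ is small enough that a suitable $u$ must land in the window $[\alpha^k, 2\alpha^k]$. The key quantitative observation is that for $\alpha \ge 1$ we have $\floor{\alpha} \ge 1$ and $\ceil{\alpha} \le \floor{\alpha}+1$, so
\begin{align*}
  r \;:=\; \frac{\ceil{\alpha}}{\floor{\alpha}} \;\le\; 1 + \frac{1}{\floor{\alpha}} \;\le\; 2 \,.
\end{align*}
Thus $f$ satisfies $f(u+1) = r\, f(u)$ with $1 \le r \le 2$: it is non-decreasing and increases by at most a factor of $2$ at each step.

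I would then bracket $\alpha^k$ between the endpoints $f(0) = \floor{\alpha}^k \le \alpha^k$ and $f(k) = \ceil{\alpha}^k \ge \alpha^k$, and define $u$ to be the smallest nonnegative integer with $f(u) \ge \alpha^k$. The lower bound $\alpha^k \le \ceil{\alpha}^u \floor{\alpha}^{k-u}$ is then immediate. For the upper bound, split into two cases: if $u = 0$ then $\floor{\alpha}^k \ge \alpha^k$, which together with $\floor{\alpha}^k \le \alpha^k$ gives equality and hence $f(u) = \alpha^k \le 2\alpha^k$; if $u \ge 1$ then minimality gives $f(u-1) < \alpha^k$, and multiplying by $r \le 2$ yields $f(u) = r\, f(u-1) < 2\alpha^k$.

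There is essentially no obstacle here; the only subtlety is making sure the trivial corner cases (an integer $\alpha$, or $k = 0$, or $\alpha = 1$) are covered, but in each of these $\ceil{\alpha} = \floor{\alpha}$ or $f$ is constantly $1$, and any $u \in \{0,\ldots,k\}$ works. So the proof is just the three-line argument above, and no case analysis beyond distinguishing $u = 0$ from $u \ge 1$ is needed.
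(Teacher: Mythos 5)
Your proof is correct and is essentially the same argument as the paper's: the paper defines $u = \bigl\lceil k \log(\alpha/\lfloor\alpha\rfloor) / \log(\lceil\alpha\rceil/\lfloor\alpha\rfloor) \bigr\rceil$, which is precisely the smallest nonnegative integer with $f(u) \ge \alpha^k$, and then uses $\lceil\alpha\rceil/\lfloor\alpha\rfloor \le 2$ for the upper bound just as you do. Your presentation via the discrete intermediate-value argument avoids writing out the logarithmic formula and handles the integer-$\alpha$ corner case uniformly rather than by a separate remark, but the underlying idea is identical.
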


\begin{proof}[Proof of \Cref{lem:fract-power-round}]
  If $\alpha$ is an integer, we are done. Otherwise, with
  \begin{align*}
    u = \ceil{\frac{k \log(\alpha/\floor{\alpha})}{\log(\ceil{\alpha} /\floor{\alpha})}} \,, \quad \text{we have} \quad \ceil{\alpha}^u \floor{\alpha}^{k - u} = \floor{\alpha}^{k} (\ceil{\alpha}/\floor{\alpha})^u \ge \floor{\alpha}^{k} (\ceil{\alpha/\floor{\alpha}})^k = \alpha^k \,.
  \end{align*}
  Similarly, $\ceil{\alpha}^u \floor{\alpha}^{k - u} \le \alpha^k \ceil{\alpha}/\floor{\alpha}$, which is $\le 2\alpha^k$ because $\alpha \ge 1$ implies $\ceil{\alpha}/\floor{\alpha} \le 2$.
\end{proof}

The following lemma is straightforward but tedious, and we defer its proof to  \Cref{subsec:rt-ub-mech}.

\begin{lemma}\label{lem:rt-alg-params}
  The parameters of \Cref{alg:rt} satisfy the following conditions:
  \begin{align}
      \prod_{i=2}^{d} b_i &\ge \frac{\rl}{\alpha} \,, \label{eq:rt-ub-param-ge-l} \\
      \prod_{i=2}^{d} b_i &\le \frac{4 \rl}{\alpha} \,, \label{eq:rt-ub-param-le-2l}\\
      \prod_{i \in [d]} w_i &\le n \,. \label{eq:rt-ub-param-le-n}
  \end{align}
\end{lemma}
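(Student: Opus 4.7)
The plan is to prove each of the three bounds separately, case-splitting on how the algorithm defines $(d,\alpha)$: Case (A), in which $\alpha=2$ and $d=\ceil{\log\rl}$, and Case (B), in which $d=\floor{2\log(n/4)/\log(16\rl)}$ and $\alpha=(4\rl)^{2/(d-1)}/(n/4)^{2/(d(d-1))}$. Throughout, the crucial reorganization for the third inequality will be
\begin{align*}
  \prod_{i=1}^{d} w_i \;=\; 16\rl \cdot \prod_{i=2}^{d} \prod_{j=i}^{d} b_j
  \;=\; 16\rl \cdot \prod_{j=2}^{d} b_j^{\,j-1}\,,
\end{align*}
since each $b_j$ (for $j\geq2$) appears once in every $w_i$ with $i\leq j$.

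First I would dispatch Case (A). Since $\alpha=2$ is an integer, \Cref{lem:fract-power-round} forces $b_2=\cdots=b_{d-1}=2$, and from $d=\ceil{\log\rl}$ we have $2^{d-1}\leq\rl\leq 2^d$, so $b_d=\ceil{\rl/2^{d-1}}\in\{1,2\}$. Then $\prod_{i=2}^{d}b_i = 2^{d-2}\cdot b_d$, and the lower bound follows from $b_d\geq\rl/2^{d-1}$ while the upper bound follows from $b_d\leq 2$ and $2^{d-1}\leq\rl$. For (\ref{eq:rt-ub-param-le-n}) in this case, $\prod_{j=2}^d b_j^{j-1}\leq 2^{d(d-1)/2}$, and the defining inequality $\ceil{\log\rl}<\floor{2\log(n/4)/\log(16\rl)}$ gives $d\log(16\rl)\leq 2\log(n/4)$; it is then a direct check that $16\rl\cdot 2^{d(d-1)/2}\leq n$.

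Second, for Case (B), the bounds (\ref{eq:rt-ub-param-ge-l}) and (\ref{eq:rt-ub-param-le-2l}) reduce to applying \Cref{lem:fract-power-round} to $\prod_{i=2}^{d-1}b_i\in[\alpha^{d-2},2\alpha^{d-2}]$ and then multiplying by $b_d=\ceil{\rl/\alpha^{d-1}}$. The lower bound is immediate. For the upper bound, I would first note that $d\leq 2\log(n/4)/\log(16\rl)$ together with the closed form for $\alpha$ gives $\alpha^{d-1}\leq\rl$ (so that $\ceil{\rl/\alpha^{d-1}}\leq 2\rl/\alpha^{d-1}$), yielding $\prod_{i=2}^{d}b_i\leq 2\alpha^{d-2}\cdot 2\rl/\alpha^{d-1}=4\rl/\alpha$.

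The main obstacle is (\ref{eq:rt-ub-param-le-n}) in Case (B), because the exponents $j-1$ in $\prod_{j=2}^{d}b_j^{\,j-1}$ are non-uniform, so one cannot just quote the product bound from \Cref{lem:fract-power-round}. The trick is to exploit the algorithm's specific ordering $b_2=\cdots=b_u=\ceil\alpha$ and $b_{u+1}=\cdots=b_{d-1}=\floor\alpha$, writing
\begin{align*}
  \prod_{j=2}^{d-1}b_j^{\,j-1} \;=\; \ceil\alpha^{u(u-1)/2}\,\floor\alpha^{(d-2)(d-1)/2 - u(u-1)/2}\,,
\end{align*}
and combining with $b_d^{\,d-1}\leq(2\rl/\alpha^{d-1})^{d-1}$ to obtain an overall bound of the form $(C/\alpha)^{d(d-1)/2}\rl^{d-1}$ for a small constant $C$. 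The key algebraic simplification is that the definition of $\alpha$ makes $\alpha^{d(d-1)/2}=(4\rl)^d/(n/4)$ exactly, so this telescopes to $\prod_i w_i\leq C'\cdot n/4^{d-1}\leq n$ after plugging in the constants $16\rl$ and $4^{d+1}$. The delicate point is that any slack in bounding $\ceil\alpha$ by $\alpha+1$ must be absorbed by the fact that \Cref{lem:fract-power-round} restricts $\prod_{i=2}^{d-1}b_i$ to within a factor of $2$ of $\alpha^{d-2}$; carefully tracking this ensures the constant $C$ stays $O(1)$ uniformly in $d$.
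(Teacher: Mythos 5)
Your overall plan is right: case-split on whether $d=\ceil{\log\rl}$ (Case A, $\alpha=2$) or $d=\floor{2\log(n/4)/\log(16\rl)}$ (Case B), rewrite $\prod_{i\in[d]}w_i = 16\rl\prod_{j=2}^d b_j^{\,j-1}$, and handle \cref{eq:rt-ub-param-ge-l} and \cref{eq:rt-ub-param-le-2l} directly from \Cref{lem:fract-power-round}. All of that matches the paper's proof, and your Case~(A) and your treatment of \cref{eq:rt-ub-param-ge-l}/\cref{eq:rt-ub-param-le-2l} in Case~(B) are correct.

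The gap is in \cref{eq:rt-ub-param-le-n} for Case~(B). You correctly identify that the nonuniform exponents $j-1$ are the obstacle, but the way you propose to close it---explicitly splitting $\prod_{j=2}^{d-1}b_j^{\,j-1}$ into $\ceil\alpha^{u(u-1)/2}\floor\alpha^{\dots}$ and claiming the outcome has the form $(C/\alpha)^{d(d-1)/2}\rl^{d-1}$ ``for a small constant $C$''---is not actually established, and the claimed form is misleading. A direct calculation shows the honest bound on the $\ceil\alpha/\floor\alpha$ mismatch contributes a factor of roughly $2^{\Theta(d)}$, not a fixed constant raised to $d(d-1)/2$; a fixed $C>1$ in $(C/\alpha)^{d(d-1)/2}$ would produce $C^{\Theta(d^2)}$, which cannot be absorbed by the final $4^{-d}$ slack for large $d$. (Your approach \emph{can} be completed, since $16\rl\cdot 2^{O(d)}\rl^{d-1}/\alpha^{d(d-1)/2}\le n$ does check out after substituting $\alpha^{d(d-1)/2}=(4\rl)^d/(n/4)$, but ``carefully tracking this'' is the missing content, not a footnote.)

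The paper avoids the explicit $\ceil\alpha/\floor\alpha$ bookkeeping entirely. It bounds $b_d\le 4\rl/(\alpha\prod_{j=2}^{d-1}b_j)$, rewrites $\prod w_i \le \frac{4(4\rl)^d}{\alpha^{d-1}}\cdot\frac{1}{\prod_{j=2}^{d-1}b_j^{\,d-j}}$, and then lower-bounds the denominator via a Chebyshev-type argument: since $b_2\ge\cdots\ge b_{d-1}$ and $\prod_{j=2}^{d-1}b_j\ge\alpha^{d-2}$, the weighted product with the larger exponents $d-j$ on the larger $b_j$ satisfies $\prod_{j=2}^{d-1}b_j^{\,d-j}\ge\alpha^{(d-1)(d-2)/2}$. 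This yields $\prod w_i\le 4(4\rl)^d/\alpha^{d(d-1)/2}=n$ with no slack at all. I'd recommend supplying this monotonicity argument (or an equivalently rigorous version of your $\ceil\alpha/\floor\alpha$ computation) to complete the proof.
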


We now prove the main lemma:

\begin{lemma}\label{lem:rt-ub-error-limit}
  \Cref{alg:rt} has error $\le \delta$ in the adversarial setting.
\end{lemma}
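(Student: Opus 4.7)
My plan is to show that Algorithm~\ref{alg:rt}'s total failure probability --- either the algorithm aborts (its root-level bitmap $x_1$ reaches all-ones) or it outputs a value already in the stream (a ``mistake'') --- is at most $\delta$. I will introduce a single bad event $\cE$ that upper-bounds both failure modes, then show $\Pr[\cE] \le \delta$ by combining a deterministic structural argument with a multiplicative concentration bound.

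The structural part has two pieces. First, a ``no-mistake unless random collision'' claim: for any past input $a_j$ with decomposition $\iota^{-1}(a_j) = (v_1,\ldots,v_d)$ to equal the current output $\iota[(L_1[c_1],\ldots,L_d[c_d])]$ at a later time $t$, one walks through the three branches the algorithm could have taken at time $j$ (echo-advance, unsafe-mark at some level $j^*$, or no-op) and verifies that each branch makes the coincidence impossible unless some $L_i$ was re-sampled between times $j$ and $t$ --- in the unsafe-mark branch the persistent flag $\tilde x_{j^*}[y]=1$ forces the current $c_{j^*}\ne y$, and in the no-op branch $v_{j^*}\notin \tilde L_{j^*}$ excludes it directly. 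Second, a ``bounded fills'' count: letting $U_i$ be the total number of unsafe fills at level $i$ across the run and $A_i$ the number of cascading advances reaching level $i$, every advance at level $i{-}1$ requires filling all $b_i$ slots of the active level-$i$ bitmap, giving $A_{i-1}\le \lfloor (A_i+U_i)/b_i\rfloor$ with $A_d\le\rl$ at the base; unrolling, the abort event $A_1+U_1 \ge b_1$ becomes an explicit linear inequality in $\rl$ and the $U_i$'s, while a mistake requires a collision at a re-sampling event with probability $\le b_i/w_i$ per re-sample per prior input.

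The concentration part observes that each $U_i$ is a sum over the $\rl$ inputs of Bernoulli-like indicators, each with conditional success probability $\le b_i/w_i$, because the adversary cannot predict the un-revealed entries of $L_i$. Applying the multiplicative Azuma bound (\Cref{lem:azumanoff}) yields sharp concentration of $U_i$ around $\rl\, b_i/w_i$, and the algorithm's parameter calibration --- in particular the additive $\ceil{3\log (1/\delta)}$ slack in $b_1$ --- absorbs a $\delta/(d+1)$ deviation-probability per level; a union bound over the $d$ levels plus the $O(\rl)$ possible re-sampling collision events yields $\Pr[\cE] \le \delta$. The main obstacle I anticipate is setting up the right conditional-probability structure: because the adversary is adaptive and each output reveals the currently-active positions $L_i[c_i]$, the events ``input $t$ causes a level-$i$ unsafe fill'' are correlated, but at the moment input $t$ is chosen the un-revealed entries of $L_i$ are uniform over $[w_i]$ minus the at most $b_1+\cdots+b_d$ revealed values, which keeps each event probability at most $b_i/w_i$ conditional on the history; making this filtration argument rigorous alongside the recursive bound on $A_{i-1}$ in terms of $A_i$ and $U_i$ (taking care with the resamples of $L_i$ that reset $x_i$) is the bulk of the technical work.
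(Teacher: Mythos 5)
Your overall skeleton --- track unsafe fills $U_i$ and cascading advances $A_i$ per level, use the recursion $A_{i-1}\le\lfloor (A_i+U_i)/b_i\rfloor$ with base $A_d\le\rl$ to express the abort event as a linear inequality, and apply \Cref{lem:azumanoff} --- is aligned with the paper's proof. The genuine gap is the conditional probability bound you assert for the $U_i$'s. You claim each unsafe-fill indicator has conditional success probability $\le b_i/w_i$ because the unrevealed entries of $L_i$ are uniform over $[w_i]$ minus the $\le b_1+\cdots+b_d$ revealed positions. But in any filtration fine enough to make these indicators a supermartingale, past adversary \emph{probes} (including misses) also shrink the conditioning space: the next-query hit probability is $\hb/\hw$, where $\hw$ counts positions neither revealed \emph{nor previously queried}, and this ratio can substantially exceed $b_i/w_i$ when the adversary concentrates queries at one level (and for deep levels $w_i$ is small, e.g.\ $w_d=b_d$, so there is little slack). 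This is exactly what the paper's charging scheme is for: each query at level $i$ deposits one unit of charge on the active node, each hit's kill count is offset by the accumulated charge, and \Cref{lem:charge-adjusted-gain} proves that the \emph{charge-adjusted} quantity $K_t$ satisfies $\EE[K_t\mid K_1,\ldots,K_{t-1}]\le 1$ no matter how $\hb/\hw$ has drifted. The charge also lets the paper run one \emph{global} Azuma bound on $D_t=K_t/\prod_{j\ge 2}b_j\in[0,1]$, whose total expected mass is $\Omega(b_1)=\Omega(\log(1/\delta))$; a per-level union bound at failure budget $\delta/(d+1)$ would need enough mass in every level, which levels with few queries may not have. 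Without a replacement for the charging device, the Azuma step you propose does not have a valid fixed sequence $(p_i)$.

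A second issue: you allocate probability for ``mistakes'' occurring at re-sampling events, at $\le b_i/w_i$ per re-sample per prior input. That budget does not close: for instance at the deepest level $b_d=w_d$, so the ratio is $1$, and the number of (re-sample, prior-input) pairs can be $\Omega(\rl^2)$. The fix is that mistakes are deterministically impossible, including across re-samples. When $L_i$ is re-sampled, the cascade has just advanced $c_{i-1}$, and since $L_{i-1}$ is without repetition, $L_{i-1}[c_{i-1}]$ moves permanently to a new value; pushing this up to level~$1$, which never re-samples, shows that the ancestor prefix $(L_1[c_1],\ldots,L_{i-1}[c_{i-1}])$ can never again equal the prefix of any input processed before that re-sample. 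The paper's proof implicitly relies on this structural fact and consequently bounds only the abort event.
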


\begin{proof}[Proof of \Cref{lem:rt-ub-error-limit}]
  We prove, using a charging scheme, that the probability of all leaf nodes
  in the random tree traversed by the algorithm being killed is $\le \delta$.

  The input of the adversary at any step falls into one of $d + 2$ categories.
  For each $i \in [d]$, it could add an input which intersects the list
  of unrevealed child positions of the level $i$ node, possibly killing
  $\prod_{j=i + 1}^{d} b_i$ leaf nodes if it guesses correctly. It could
  also send the value of the current leaf 
  node, thereby killing it (and only it).
  Finally, the adversary's input could be entirely wasted (outside $\iota([w_1]\times\ldots\times[w_d]$, repeating an input it made before, or in
  the region corresponding to one of the past nodes in the random tree); then
  no leaf nodes would be killed.

  As the algorithm proceeds, for each node in the random tree (other than
  the root), we accumulate charge. When the algorithm's current branch changes
  to use new nodes, the charge on the old nodes is kept, and the new nodes
  start at charge $0$.
  
  When the adversary makes an input that is handled by level $i$ ("query" at level $i$),
  for $i \in \{2,\ldots,d\}$, it
  \emph{first} deposits one unit of charge at the active level $i$ node.
  Then, if the query was a hit (i.e, ruled out some future subtree and made
  a child of a node in the current branch change from ``safe'' to ``unsafe''), increase
  the number of killed nodes by the number of leaves for the subtree
  (namely, $\prod_{j={i+1}}^{d} b_j$), and remove up to that amount 
  of charge from the node. The definitions of $(w_j)_{j=2,\ldots,d}$ ensure
  that $\prod_{j={i+1}}^{d} b_j = w_j / b_j$.
  
  For the $t$th query, let $K_t$ be the number of killed leaf nodes on the query,
  minus any accumulated charge on the node. Say the adversary picks a node at level $i$ for $i \in \{2,\ldots,d-1\}$, and that
  node has $\hw$ unexplored subtree regions (i.e, neither revealed because
  the algorithm produced outputs in them, nor because there was there a query at that subtree region
  in the past), and $\hb$ gives the number of subtrees within this unexplored
  region. If $\hb = 0$, $\EE[K_t | K_1,\ldots,K_{t-1}] = 0$. Otherwise, let $u$ be the number of subtrees which were revealed by the algorithm so far; we have $\hw \le w_j - u$ and $\hb \le b_j - u$. Then when we condition on the past increases in charge, the subtree regions within the unexplored region are still uniformly random; hence
  the probability of hitting a subtree is $\hb / \hw$. The number of leaf nodes
  killed by a hit is $w_j / b_j$.  The total charge currently at the node must be
  $\ge (w_j - u - \hw) - (\frac{w_j}{b_j}) (b_j - u - \hb) = \hb \frac{w_j}{b_j} - \hw + (\frac{w_j}{b_j} - 1) u \ge \hb \frac{w_j}{b_j} - \hw$, since each removed node consumes at most $\frac{w_j}{b_j}$ of the existing charge.
  Consequently, the increase in killed leaf nodes if we hit is $\max(0, \frac{w_j}{b_j} - \max(0, \hb \frac{w_j}{b_j} - \hw))$, so the expected payoff is\footnote{When the level is $d$ and $b_j=w_j$, we in fact we have $K_t = 1$ always; but we do not need this stronger fact.}:
  \begin{align*}
    \EE[K_t \mid K_1,\ldots,K_{t-1}] = \frac{\hb}{\hw} \max\left(0,\, \frac{w_j}{b_j} - \max\left(0,\, \hb \frac{w_j}{b_j} - \hw\right)\right) \underset{\text{by \Cref{lem:charge-adjusted-gain}}}{\le} 1 \,.
  \end{align*}
  
  If the level is $1$, then let $J \subseteq [16 \rl]$ give the set of probed subtree positions,
  and $H$ give the set of revealed subtree positions; since there are $\le \rl$ queries,
  $|J|,|H|$ are both $\le \rl$, and the probability of a query in an unexplored
  region to hit is $\le \frac{b_1}{16 \rl - |J \cup H|} \le \frac{b_1}{14 \rl}$. The charging scheme does not apply, so
  \begin{align*}
    \EE[K_t \mid K_1,\ldots,K_{t-1}] \le \frac{b_1}{14 \rl} \cdot \prod_{j=2}^{d} b_j \,.
  \end{align*}
  
  Thus in all cases, $\EE[K_t \mid K_1,\ldots,K_{t-1}] \le \max(1,\, \prod_{j=1}^{d} b_j / 14 \rl)$. 
  
  The total charge deposited on mid-level nodes is $\le \rl$. The algorithm is guaranteed to succeed if the total number of leaves killed is less than the total number of leaves; i.e, if $\sum_{i \in [\rl]} K_t + \rl \le \prod_{j=1}^{d} b_j$. Note that by \Cref{lem:rt-alg-params} and the definition of $b_1$, $\prod_{j=1}^{d} b_j \ge b_1 \rl / \alpha \ge 8 \rl$. Consequently,
  \begin{align*}
    \rl + 7 \EE\left[\sum_{t = 1}^{\rl} K_t\right] 
    \le 8 \max\left(\rl,\, \frac{1}{14} \prod_{j=1}^{d} b_j\right)
    \le 8 \max\left(\frac{1}{8},\, \frac{1}{14}\right) \prod_{j=1}^{d} b_j \le  \prod_{j=1}^{d} b_j \,.
  \end{align*}
  Now let $D_t = K_t / \prod_{j=2}^{d} b_j$, so that each $D_t \in [0,1]$. Writing events in terms of $D_t$ lets us use \Cref{lem:azumanoff} to bound the probability that too many leaves are killed:
  \begin{align*}
  \Pr\left[\rl + \sum_{t \in [\rl]} K_t \ge \prod_{i=1}^{d} b_i\right] &\le 
      \Pr\left[\sum_{t \in [\rl]} K_t \ge 7 \max\Big(\rl, \frac{1}{14} \prod_{j=1}^{d} b_j\Big) \right] \\
      &\le \Pr\left[\sum_{t \in [\rl]} D_t \ge 7 \max\Big(\rl / \prod_{j=2}^{d} b_j, \frac{b_1}{14} \Big) \right] \\
      &\le \exp\left( - \frac{6^2}{2+6} \max\Big(\rl / \prod_{j=2}^{d} b_j, \frac{b_1}{14} \Big) \right) \\
      &\le \exp\left( - \frac{9 b_1}{28} \right) \le 2^{b_1 \cdot \frac{9}{28 \ln 2}} \le 2^{\ceil{3 \log 1/\delta} \frac{9}{28 \ln 2}} \le \delta \,. \qedhere
  \end{align*}  
\end{proof}

In the preceding proof, we used the following:

\begin{lemma}\label{lem:charge-adjusted-gain}
  Let $\hb,\hw,b,w$ be positive, and $\hb \ge 1$. Then:
  \begin{align*}
    \frac{\hb}{\hw} \left( \max\left(0,\, \frac{w}{b} - \max\left(\hb \frac{w}{b} - \hw,\, 0\right) \right)\right) \le 1 \,.
  \end{align*}
\end{lemma}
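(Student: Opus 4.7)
The plan is to proceed by a straightforward case analysis on the two nested maxima, reducing the expression to a few algebraic inequalities. Write $r := w/b > 0$ for brevity, so the claim becomes
\[
\frac{\hb}{\hw}\,\max\!\left(0,\; r - \max(\hb r - \hw,\, 0)\right) \le 1.
\]

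First I would split on the inner maximum. If $\hb r \le \hw$, the inner max is $0$ and the left-hand side equals $\hb r/\hw \le 1$ directly from the hypothesis of this sub-case. Otherwise $\hb r > \hw$, and the expression simplifies to
\[
\frac{\hb}{\hw}\,\max\!\left(0,\; r + \hw - \hb r\right).
\]
If $r + \hw - \hb r \le 0$, the outer max kills everything and we get $0 \le 1$. So the only remaining case is the ``both positive'' case, where the left-hand side equals
\[
\frac{\hb}{\hw}\bigl(r + \hw - \hb r\bigr) \;=\; \hb + \frac{\hb r(1-\hb)}{\hw}.
\]

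The final case is where the hypothesis $\hb \ge 1$ is used. Rearranging, the desired inequality is equivalent to $(\hb-1)\hw \le \hb r(\hb-1)$. If $\hb = 1$ both sides are $0$ and we are done; if $\hb > 1$ we divide by $\hb - 1 > 0$ and it reduces to $\hw \le \hb r$, which is exactly the sub-case assumption from the first split. The main obstacle (such as it is) is just being careful that the four sign-cases are correctly combined and that the $\hb \ge 1$ hypothesis is only invoked where it is actually needed; no analytic work beyond elementary algebra is required.
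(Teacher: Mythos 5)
Your proof is correct and takes essentially the same route as the paper: both split on whether $\hb\, w/b - \hw$ is nonpositive (where the bound follows directly) or positive (where the hypothesis $\hb \ge 1$ is applied), with the key algebraic step amounting to the same inequality. If anything, your version is slightly more careful: you explicitly dispatch the sub-case where the outer $\max$ evaluates to $0$, which the paper's displayed chain of inequalities glosses over.
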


\begin{proof}[Proof of \Cref{lem:charge-adjusted-gain}]
  If $\frac{\hb}{\hw} \le \frac{b}{w}$, then:
  \begin{align*}
    \frac{\hb}{\hw} \left( \max(0, \frac{w}{b} - \max(\hb \frac{w}{b} - \hw, 0) )\right) \le \frac{\hb}{\hw} \frac{w}{b} \le 1 \,.
  \end{align*}
  Otherwise, $\frac{\hb}{\hw} \ge \frac{b}{w}$, and:
  \begin{align*}
    \frac{\hb}{\hw} &\left( \max\left(0,\, \frac{w}{b} - \max\left(\hb \frac{w}{b} - \hw,\, 0\right) \right)\right) \\
      &\le \frac{\hb}{\hw} \left(\frac{w}{b} - \left(\hb \frac{w}{b} - \hw\right)\right) 
      = \frac{\hb}{\hw} \left(\frac{w}{b} - \hb \left(\frac{w}{b} - \frac{\hw}{\hb}\right)\right)\\
      &\le \frac{\hb}{\hw} \left(\frac{w}{b} - 1 \left(\frac{w}{b} - \frac{\hw}{\hb}\right)\right) = \frac{\hb}{\hw} \frac{\hw}{\hb} = 1 \,, \qquad\quad \text{since $\hb \ge 1$ and $\frac{w}{b} - \frac{\hw}{\hb} \ge 0$ \,.} \qedhere
  \end{align*}
\end{proof}

The proof of the following lemma is a mostly straightforward calculation, which we defer to \Cref{subsec:rt-ub-mech}.

\begin{lemma}\label{lem:rt-ub-space-usage}
  \Cref{alg:rt} uses 
  \begin{align}
    O\left(\ceil{\frac{(4 \rl)^{2/(d-1)}}{(n/4)^{2/(d(d-1))}}} (\log \rl)^2 + \min(\rl, \log 1/\delta) \log \rl \right) \label{eq:rt-ub-semiprec-space}
  \end{align} bits of space, where $d = \min\left(\ceil{\log \rl}, \floor{2 \frac{\log(n/4)}{\log(16\rl)}}\right)$. A weaker upper bound on this is:
  \begin{align*}
    O\left(\rl^{\frac{\log \rl}{\log n}} (\log \rl)^2 + \min(\rl, \log 1/\delta) \log \rl\right) \,.
  \end{align*}
\end{lemma}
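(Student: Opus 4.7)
The plan is to bound the storage directly. Algorithm~\ref{alg:rt} maintains exactly $d$ pairs $(L_i, x_i)$, where $L_i$ is a non-repeating sequence in $[w_i]^{b_i}$ and $x_i \in \{0,1\}^{b_i}$, so the state occupies at most $O\bigl(\sum_{i=1}^{d} b_i \log w_i\bigr)$ bits (the $+b_i$ term from $x_i$ is absorbed). Since $w_1 = 16\rl$ and, for $i \ge 2$, $w_i \le w_2 = \prod_{j=2}^{d} b_j \le 4\rl/\alpha \le 4\rl$ by \Cref{eq:rt-ub-param-le-2l}, we have $\log w_i = O(\log \rl)$ at every level, reducing the target to $O\bigl(\log \rl \cdot \sum_{i=1}^{d} b_i\bigr)$.

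I would then bound $\sum_i b_i$ by handling $b_1$, the middle $b_i$'s, and $b_d$ separately. From the parameter definitions, $b_1 = O\bigl(\ceil{\alpha} + \min(\rl, \log(1/\delta))\bigr)$, and $\sum_{i=2}^{d-1} b_i \le (d-2)\ceil{\alpha}$ since each such $b_i \in \{\floor{\alpha}, \ceil{\alpha}\}$. The delicate step is to show $b_d = \ceil{\rl/\alpha^{d-1}} = O(\ceil{\alpha})$. In the $\alpha = 2$ case (when $d = \ceil{\log \rl}$), this is immediate since $\rl/2^{d-1} \le 2$. In the other case, the choice $d = \floor{2\log(n/4)/\log(16\rl)}$ implies $(16\rl)^{d+1} \ge (n/4)^2$; combining this with $\alpha^{d-1} = (4\rl)^{2}/(n/4)^{2/d}$ yields $\alpha^d = \Omega(\rl)$, hence $b_d = O(\alpha)$. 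With $d = O(\log \rl)$, we obtain $\sum_i b_i = O\bigl(\ceil{\alpha}\log \rl + \min(\rl, \log(1/\delta))\bigr)$, so the total space is $O\bigl(\ceil{\alpha}(\log \rl)^2 + \min(\rl, \log(1/\delta))\log \rl\bigr)$. The match to \Cref{eq:rt-ub-semiprec-space} is automatic in the $\alpha = 2$ branch (the formula's ceiling is always at least $1$ and the resulting space is only $O((\log \rl)^2)$ anyway) and is direct in the formula-defined branch.

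For the weaker bound, I would plug the largest admissible value $d \approx 2\log n/\log(16\rl)$ into $(4\rl)^{2/(d-1)}/(n/4)^{2/(d(d-1))}$, which simplifies to $\alpha = O\bigl(\rl^{\log \rl/\log n}\bigr)$, and substitute into the first bound. The main obstacle will be the $b_d$ calculation: because $d$ is chosen exactly so that $\alpha^{d-1}$ is only barely smaller than $\rl$, the constant factor between $\rl/\alpha^{d-1}$ and $\alpha$ must be tracked carefully through the interacting ceilings and floors in the definitions of $d$, $\alpha$, and $u$ (via \Cref{lem:fract-power-round}). Everything else is a sequence of routine substitutions.
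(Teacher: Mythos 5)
Your proposal follows essentially the same route as the paper's proof: bound the space by $\sum_i b_i \log(2w_i)$, note $\log w_i = O(\log\rl)$ via \Cref{eq:rt-ub-param-le-2l}, handle $b_1$ and the middle $b_i$'s directly, and argue $b_d = O(\ceil{\alpha})$ from $\alpha^d = \Omega(\rl)$ (which holds since $(16\rl)^{d+1} \ge (n/4)^2$ by the choice of $d$). The one framing I'd tighten is the phrase ``plug the largest admissible value of $d$'': the weaker bound requires upper-bounding $\alpha$ at the specific $d$ the algorithm uses, which the paper does by maximizing the expression over all $\lambda$ (a monotonicity/calculus argument with $\gamma = \log(n/4)/\log(4\rl)$); your shortcut happens to land near the maximizer, but you should justify that the expression is non-decreasing in $d$ on the relevant range rather than implicitly assume it.
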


If $\rl \ge 4$ and $\rl \le n/64$, then \Cref{lem:rt-ub-space-usage} and \Cref{lem:rt-ub-error-limit} together show that \Cref{alg:rt} has error $\le \delta$ and space
usage as bounded by \cref{eq:rt-ub-semiprec-space}. To handle the cases where
$\rl < 4$ and $\rl > n/64$, one can instead use the simple deterministic algorithm
for $\mif(n,\rl)$ from \cite{Stoeckl23}, using only $\rl$ bits of space. As this is in
fact less than the space upper bound from \cref{eq:rt-ub-semiprec-space}, it follows
that \cref{eq:rt-ub-semiprec-space} gives an upper bound on the space
needed for a random tape, adversarially robust $\mif(n,\rl)$ algorithm for any
setting of parameters. Formally:

\begin{theorem} 
\label{thm:rt-ub-intro}
  There is a family of adversarially robust random tape algorithms, where
  for $\mif(n,\rl)$ the corresponding algorithm has $\le \delta$ error and uses
  \begin{align*}
    O\left(\ceil{ 
    \frac{(4 \rl)^{\frac{2}{d - 1}}}{(n/4)^{\frac{2}{d(d-1)}}}
    } (\log \rl)^2 + \min(\rl,\, \log \tfrac{1}{\delta}) \log \rl \right)
  \end{align*}
  bits of space, where $d = \max\left(2, \min\left(\ceil{\log \rl}, \floor{2 \frac{\log(n/4)}{\log(16\rl)}}\right)\right)$. When $\delta = 1/\poly(n)$ a (weakened) space bound is $O\left(\rl^{\log_n \rl} (\log \rl)^2 + \log \rl\log n\right)$.
\end{theorem}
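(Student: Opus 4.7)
The plan is to prove \Cref{thm:rt-ub-intro} by analyzing \Cref{alg:rt}, which maintains a single root-to-leaf path in an implicit random tree of depth $d$ with branching factors $b_1,\ldots,b_d$ and per-level universe sizes $w_1,\ldots,w_d$. The parameter $\alpha$ is chosen to balance the per-level costs, the number of leaves $\prod_i b_i$ must be at least $\rl$ so the algorithm can answer every query, and $\prod_i w_i \le n$ so the leaves embed into $[n]$. All of these consistency conditions are asserted by \Cref{lem:rt-alg-params}, which reduces to arithmetic on the choices of $b_i$ and $w_i$.

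Given that setup, the heart of the proof is the error analysis of \Cref{lem:rt-ub-error-limit}. I would classify each adversary input into $d+2$ types: a ``hit'' at some level $i \in [d]$, a hit on the current leaf, or a wasted input. To control the expected total number of killed leaves, I would use a charging scheme that deposits one unit of charge at the active level-$i$ node each time the adversary queries at level $i \ge 2$, and cancels up to $w_i/b_i$ units against the $w_i/b_i$ leaves killed by each successful hit. The key per-step bound is that the conditional expected increase in \emph{charge-adjusted} kills $K_t$ is at most $1$ at intermediate levels (\Cref{lem:charge-adjusted-gain}) and at most $\prod_{j \ge 2} b_j / (14\rl)$ at the root, since $w_1 = 16\rl$ buffers the root against the $\le \rl$ total queries. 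Combining these bounds gives $\EE[K_t \mid K_1,\ldots,K_{t-1}] \le \max(1,\prod_j b_j / (14\rl))$, which together with the bound $\rl + 7\EE[\sum_t K_t] \le \prod_j b_j$ derived from \Cref{lem:rt-alg-params} sets up an application of the multiplicative Azuma inequality (\Cref{lem:azumanoff}) to the normalized variables $D_t = K_t/\prod_{j \ge 2} b_j \in [0,1]$. The resulting tail bound is $2^{-\Omega(b_1)}$, which by the choice $b_1 = \Theta(\alpha + \log 1/\delta)$ is at most $\delta$.

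For the space bound in \Cref{lem:rt-ub-space-usage}, each level $i$ requires $O(b_i \log w_i)$ bits to store the random list $L_i$ and the marker vector $x_i$, and summing across levels gives the first term of the stated bound after plugging in $\alpha \approx (4\rl)^{2/(d-1)}/(n/4)^{2/(d(d-1))}$ and the optimal $d$; the second term comes from the extra $b_1 \log w_1$ contribution controlled by $\min(\rl,\log 1/\delta)$. Finally, the boundary cases $\rl < 4$ or $\rl > n/64$, where \Cref{alg:rt}'s parameters are not meaningful, are handled by falling back to the $O(\rl)$-space deterministic algorithm from \cite{Stoeckl23}; in those regimes the target bound is already $\Omega(\rl)$ bits, so nothing is lost.

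The main obstacle I expect is the charging analysis: making sure that the deposited charge is \emph{actually present} at the node when a hit occurs (including after the branch moves and new $L_i$ are sampled), and that the subtraction in \Cref{lem:charge-adjusted-gain} correctly cancels the damage regardless of how many of the $b_i$ subtrees have been revealed versus probed. The subtle point is that the ratio $\hat b / \hat w$ can be arbitrarily close to $1$, so the naive bound ``hit probability times subtree size'' is too weak; only after subtracting prior charge does the expected incremental kill count drop to $1$. Once this per-query invariant is established, the rest of the argument is a concentration bound plus a routine calculation to extract the final space expression.
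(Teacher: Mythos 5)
Your proposal is correct and takes essentially the same approach as the paper: analyze \Cref{alg:rt} via the tree/charging scheme of \Cref{lem:rt-ub-error-limit}, apply \Cref{lem:charge-adjusted-gain} and the multiplicative Azuma bound (\Cref{lem:azumanoff}), combine with the space calculation of \Cref{lem:rt-ub-space-usage}, and handle the boundary parameter regimes by falling back to the deterministic algorithm. The only slip is a minor one in your sketch of the level-$1$ bound, where the hit probability $\le b_1/(14\rl)$ multiplied by the $\prod_{j\ge 2} b_j$ leaves killed per hit yields $\prod_{j\ge 1} b_j/(14\rl)$ rather than $\prod_{j\ge 2} b_j/(14\rl)$; your subsequent combined bound $\max(1,\prod_j b_j/(14\rl))$ is consistent with the paper if $\prod_j$ is read as $\prod_{j=1}^d$.
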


\begin{remark}
  \Cref{alg:rt} does not use the most optimal assignment of the parameters $b_d,\ldots,b_2$; constant-factor improvements in space usage are possible if one sets $b_d,\ldots,b_2$ to be roughly in an increasing arithmetic sequence, but this would make the analysis more painful.
\end{remark}

\begin{remark}
  In exchange for a constant factor space increase, one can adapt \Cref{alg:rt} to produce an increasing sequence of output values. Similar adjustments can be performed for other $\mif$ algorithms.
\end{remark}

\section{The Pseudo-Deterministic and Random Seed Lower Bounds}\label{sec:pd-lb}

In this section, we prove a space lower bound for pseudo-deterministic
streaming algorithms; in particular, for the most general (random oracle) type
of them. See \Cref{subsec:pd-lb-overview} for a high-level plan of the proof.

Let $\cA$ be a random-oracle pseudo-deterministic algorithm for $\mif(n, \rl)$
using $z$ bits of state, which has worst case failure probability $\delta \le
\frac{1}{3}$. Let $\Pi \colon [n]^\star \to [n]$ be the function giving the
{\em canonical output} of $\cA$ after processing a stream (as was defined in
\Cref{subsec:models}), and let $S = \Pi([n]^\rl)$ be the set of all canonical
outputs at time $\rl$ (for this proof we can ignore outputs at times $<\rl$).
Clearly $|S| \le n$, and we will also prove (see \Cref{lem:pd-s-le-2zp1}) that
$|S| \le 2^{z+1}$. It is easy to see that $|S| \ge \rl + 1$ and that, since
$\mif(n,\rl)$ is nontrivial, correct algorithms must have $z \ge 1$.

The main proof in this section only applies to algorithms with very low error
(potentially as small as $1/n^{\Omega(\log n)}$). To ensure that we are
working with an algorithm with error this small, we will first apply
\Cref{lem:error-reduction-by-vote}, using $p$ independent instances of $\cA$,
where $p \ge 1$ is an integer chosen later. This will produce an $\epsilon$-error
algorithm $\cB$ that uses $z p$ bits of space, where $\epsilon \le
(2\delta)^{p/30}$. Moreover, the canonical outputs of $\cB$ will still be
given by $\Pi$.

We may view $\cB$ as a distribution over deterministic streaming algorithms;
each such algorithm, obtained by fixing $\cB$'s oracle random string, produces
outputs according to a function of the form $A \colon [n]^\star \to [n]$.
Since $\cB$ is pseudo-deterministic, we have the following property:
\begin{align}
  \forall x \in [n]^{\le \rl}: \Pr_{A \sim \cB} [A(x) \ne \Pi(x)] \le \epsilon \,. 
  \label{eq:pd-fn-prop}
\end{align}

It is time to give the details of the key procedure \textsc{FindCommonOutputs}
(abbreviated as $\fco$), outlined earlier in \Cref{subsec:pd-lb-overview},
that underpins our proof.  The input stream positions, from $1$ to $\rl$, are
split into $d = \Theta(\rl / (zp))$ consecutive intervals, of lengths $t_d,
t_{d-1}, \ldots, t_1$, in order, where $t_d = t_{d-1} = \cdots = t_2 =
\Theta(zp)$ and $t_1 = \rl - \sum_{k=2}^{d} t_k$; the constants are chosen
such that $t_1 \ge \rl/2$. A call to \fco takes the form $\fco(B, C, x,
k)$, where $B \colon [n]^\rl \to [n]$ is an output function, $C$ is a random
collection (matrix) of thresholds, $k \in [d]$, and $x \in [n]^{t_d + \cdots +
t_{k+1}}$ is a stream prefix. (Notice that when $k = d$, we must have $x =
\emptystream$.) By design \fco is recursive, bottoming out at $k = 1$, and
always returns a subset of $S$ of cardinality $w_k$, where
\begin{align}
  w_k := 2^{k-1} (t_1 + 1) \,. \label{eq:wk-def}
\end{align}
The precise logic of \fco is given in \Cref{alg:pd-set-finding}. Throughout
this section, we use $\SSD{Z}{t}$ to denote the set of all length-$t$ {\em
sorted sequences} of {\em distinct} elements of $Z$. Note that
\[
  |\SSD{Z}{t}| = \binom{|Z|}{t} \,.
\]

\begin{algorithm}[htb]
  \caption{The procedure to compute a set for \Cref{lem:pd-main-recursion}}\label[listing]{alg:pd-set-finding}

  \begin{algorithmic}[1]
    \Statex Let $t_1,\ldots,t_d$, $w_1,\ldots,w_d$ be integer parameters, and $S$ the set of valid outputs
    \Statex
    \Statex \underline{\textsc{FindCommonOutputs($B$, $C$, $x$, $k$)}} \Comment{abbreviated as $\fco(B,C,x,k)$}
    \LineComment{Inputs: function $B \colon [n]^\rl \to [n]$, 
      matrix $C \in [1,2)^{d \times \NN}$, stream prefix $x \in [n]^{t_k + \cdots + t_1}$}
    \LineComment{Output: a subset of $S$ of size $w_k$}
    
    \If{$k = 1$} 
        \State $e_0 \gets B(x \concat \langle 1,1,\ldots,1\rangle)$\label{step:pd-base-case-start}
        \For{$i$ in $1,\ldots,t_1$} 
          \State $e_i \gets B(x \concat \langle e_0,\ldots,e_{i-1},1,\ldots,1\rangle)$ \label{step:pd-base-case-mid}
        \EndFor
        \If{$e_0, \ldots,e_{t_1}$ are all distinct}
          \State \textbf{return} $\{e_0,e_1,\ldots,e_{t_1}\}$
          \Comment{identify $w_1$ distinct possible outputs}
        \EndIf
        \State \textbf{return} arbitrary subset of $S$ of size $w_1$ (failure)
    \Else
        \For{each $y \in \SSD{[n]}{t_k}$}
            \State $T_{y} \gets \textsc{FindCommonOutputs}(B, C, x \concat y, k-1)$ \label{step:pd-compute-t}
            \Comment{note $|T_{y}| = w_{k-1}$}
        \EndFor
    
        \State $Q_0 \gets T_{\langle 1,2,\ldots,t_k \rangle}$
        \For{$h$ in $1,2,3,4$}\label{step:pd-for-loop-start}
            \LineComment{gather statistics and find common elements among the sets $T_{y}$}
            \For{each $j\in S$}
                \State $f_{j}^{(h)} \gets \left| \left\{ y \in \SSD{Q_{h-1}}{t_k} :
                  j \in T_{y} \right\} \right|$ \label{step:pd-def-f}
                  \Comment{count frequencies}
            \EndFor
            \State $\theta \gets C_{k,h} w_{k-1} / (16 |S|)$ \Comment{set random threshold}
            \State $P_h \gets \left\{j \in S : f_{j}^{(h)} \ge 
              \theta \binom{|Q_{h-1}|}{t_k} \right\}$ \label{step:pd-comp-ph}
              \Comment{identify ``sufficiently common'' elements}
            \State $Q_h \gets Q_{h-1} \cup P_h$
            \If{$|Q_h| \ge w_k$}
              \State \textbf{return} the $w_k$ smallest elements in $Q_h$ \label{step:pd-good-exit}
            \EndIf
        \EndFor \label{step:pd-for-loop-end}
        \State \textbf{return} arbitrary subset of $S$ of size $w_k$ (failure)\label{step:pd-bad-exit}
    \EndIf
  \end{algorithmic}
  
\end{algorithm}

A key property of \fco is that for all $x \in [n]^{t_d + \cdots + t_{k+1}}$,
if $C$ is chosen uniformly at random from $[1,2)^{d \times \NN}$, and $A \sim
\cB$, then w.h.p. $\fco(A, C, x, k)$ produces the same set as $\fco(\Pi, C, x,
k)$. Another important property is that $\fco(\Pi, C, x, k)$ always produces
$w_k$ distinct elements in $S$ without a failure (i.e., returns using
\Cref{step:pd-good-exit}, not \Cref{step:pd-bad-exit}). These properties
are formally proved in \Cref{lem:pd-main-recursion}, to follow.  In
particular, $\fco(\Pi, C, \emptystream, d)$ produces $w_d$ distinct elements
in $S$, showing that $|S| \ge w_d$. Since $w_d$ is a function of $z$,
combining this with upper bounds on $|S|$ lets us solve for a lower bound on
$z$.

We elaborate on how we achieve the two key properties of $\fco$. First, to
ensure $\fco(\Pi,C,x,k)$ equals $\fco(A,C,x,k)$ for random $A \sim \cB$ and $C
\in_R [1,2)^{d \times \NN}$, we use a standard random threshold trick when
computing the set $P_h$ on \Cref{step:pd-comp-ph}. The recursive calls to
$\fco(A, C, x \concat y, k-1)$ on \Cref{step:pd-comp-ph} do not always
match the outputs of $\fco(\Pi, C, x \concat y, k-1)$; as a result, the
element frequency vector $f^{(h)}$ from \Cref{step:pd-def-f} may have
random noise when computed using $A$ instead of using $\Pi$. If
\Cref{step:pd-comp-ph} used a fixed threshold value, then there would exist
pseudo-deterministic $\mif$ algorithms yielding element frequencies close to
this threshold, where even low-magnitude noise could affect which elements are
included in $P_h$.  Using $C$ to choose random threshold values that are
independent of the choice of $A$ ensures that most of the time, the noise has
no influence on $P_h$; ultimately, ensuring that $\fco(\Pi,C,x,k)$ and
$\fco(A,C,x,k)$ most likely produce the same output. More detail is given in
the proof of \Cref{lem:pd-stability}.

Second, the design of \Cref{alg:pd-set-finding} ensures that
$\fco(\Pi,C,\emptystream,d)$
actually produces a set of $w_d$ possible outputs. Recall the $\avoid(m,a,b)$
communication problem described in \Cref{subsec:rt-lb-overview}. The output
sets $T_y$ of size $w_{d-1}$ computed on \Cref{step:pd-compute-t}, for
each $y \in \SSD{S}{t_d}$, will have a similar structure to the input
and output sets for an $\avoid$ protocol, in that the set $T_y$ is typically
disjoint from $y$, and in that the distribution of possible $T_y$ values is
limited (through the requirement that $\Pi$ agrees with a mixture of functions
corresponding to deterministic streaming algorithms). Note that we
cannot just extract the ``most common values'' that occur in the sets $T_y$
for $y \in \SSD{S}{t_d}$, because if $t_d w_{d-1} \ll |S|$, it is
possible that these are very concentrated; for example there could exist a set
$V$ of size $w_{d-1}$ so that if $y \cap V = \emptyset$, then $T_y = V$, in
which case an algorithm satisfying the first property can only reliably
identify $w_{d-1}$ outputs. Instead, we iteratively build up a set of common
outputs, using the following observation. If the current set of common
algorithm outputs $Q_{h-1}$ is smaller than $w_d$, then the set $P_h$
containing the most common elements of $T_y$ for $y \in
\SSD{Q_{h-1}}{t_d}$ can not be entirely contained by $Q_{h-1}$: if we
did have $P_h \subseteq Q_{h-1}$, then one can show it is possible to
construct an impossibly efficient protocol for
$\avoid(|Q_{h-1}|,t_d,w_{d-1})$. Consequently, until for some value of $h$ we
have $|Q_{h}| \ge w_d$, it is possible to find a slightly larger $Q_{h+1}$.
More detail is given in the proof of \Cref{lem:pd-set-growth}.

\medskip Before beginning the formal proof, let us set the various parameters
precisely. Concretely, we set
\begin{align}
  p &= \ceil{\max\left(\sqrt{\frac{10 \rl \log (64|S|)}{3 z \log \frac{1}{2\delta}}}, 
    \frac{30 \log(64|S|)}{\log \frac{1}{2\delta}}\right) } \,.  \label{eq:pd-p-choice} \\
  d &= 1 + \floor{\frac{\rl}{18 zp}} \,, \label{eq:d-def} \\
  t_d = t_{d-1} = \cdots = t_2 &= \ceil{4 \ln 2 (zp+2)} \,. \\
\intertext{Recall also that}
  \epsilon &\le (2\delta)^{p/30} \,. \\
\intertext{We further define}
  \epsilon_k &:= w_k (64 |S|)^{k-1} \epsilon \,, \label{eq:eps-k-def}
\end{align}
which will upper-bound the probability that $\fco(A, C, x, k) \ne \fco(\Pi, C, x, k)$.
We record some useful estimations in the next lemma.

\begin{lemma}\label{lem:pd-p-properties}
  With the above choices, $t_d = \cdots = t_2 \le 9zp$,\, $t_1 \ge \rl / 2$,\, and
  $\epsilon \le 1 / (64 |S|)^d$.
\end{lemma}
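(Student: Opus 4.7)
My plan is to verify each of the three claims in turn, since they are essentially arithmetic bookkeeping on the parameter choices in \cref{eq:pd-p-choice} and \cref{eq:d-def}, and the only real subtlety is that the two terms in the $\max$ defining $p$ each play a role in a different regime of the third inequality.

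For $t_2 = \cdots = t_d \le 9zp$, I would start from $4\ln 2 < 3$, giving $4\ln 2\,(zp+2) < 3zp+6$. For $zp \ge 2$ this yields $\ceil{4\ln 2(zp+2)} \le 3zp + 7 \le 9zp$. The only tight case is $zp = 1$ (which requires $z = p = 1$), where $\ceil{12\ln 2} = \ceil{8.318\ldots} = 9 = 9zp$; I would just check this by hand. For $t_1 \ge \rl/2$, observe that $t_1 = \rl - (d-1)t_2$, and $(d-1)t_2 \le \floor{\rl/(18zp)} \cdot 9zp \le (\rl/(18zp))(9zp) = \rl/2$.

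For the bound $\epsilon \le 1/(64|S|)^d$, taking logs this is equivalent to showing $\tfrac{p}{30}\log\tfrac{1}{2\delta} \ge d\log(64|S|)$, using $\epsilon \le (2\delta)^{p/30}$. I would split into two cases according to which term in the $\max$ defining $p$ dominates. If $d = 1$, only the second lower bound in \cref{eq:pd-p-choice}, namely $p \ge 30\log(64|S|)/\log\tfrac{1}{2\delta}$, is needed, and it gives the inequality directly. If $d \ge 2$, then $\rl/(18zp) \ge 1$, and hence $d = 1 + \floor{\rl/(18zp)} \le 2\cdot \rl/(18zp) = \rl/(9zp)$. The desired inequality then becomes $\tfrac{p^2}{30\rl}z\log\tfrac{1}{2\delta} \ge \tfrac{1}{9}\log(64|S|)$, i.e., $\tfrac{3p^2 z}{10\rl}\log\tfrac{1}{2\delta} \ge \log(64|S|)$, and rearranging for $p^2$ this is precisely the first lower bound in \cref{eq:pd-p-choice}.

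The main (minor) obstacle is keeping the two regimes of the third inequality organized, since one needs to notice that the two lower bounds on $p$ in \cref{eq:pd-p-choice} exactly correspond to the two cases $d = 1$ and $d \ge 2$; once this decomposition is made, each case is a one-line manipulation. The rest of the lemma is purely computational, so I would keep the proof short and just present the three estimates in sequence.
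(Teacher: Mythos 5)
Your proof is correct and follows essentially the same route as the paper's: the $t_i$ bounds are simple arithmetic, and the key observation — that the two lower bounds on $p$ in \cref{eq:pd-p-choice} correspond respectively to the $d \ge 2$ and $d = 1$ regimes of the $\epsilon$-bound — is exactly what the paper exploits by noting $d \le \max(1, \rl/(9zp))$ and checking $\epsilon \le 1/(64|S|)$ and $\epsilon \le 1/(64|S|)^{\rl/(9zp)}$ separately. (One minor tidying: from $4\ln 2\,(zp+2) < 3zp+6$ and integrality of $3zp+6$, one gets $\ceil{4\ln 2\,(zp+2)} \le 3zp + 6 \le 9zp$ for all $zp \ge 1$, so the $zp=1$ special case you checked by hand is already covered.)
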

\begin{proof}
  The claim about the $t_i$s holds since $z \ge 1$ and $p \ge 1$. Using this
  in \cref{eq:d-def} gives $t_1 \ge \rl - (d - 1) 9zp \ge \rl / 2$. We turn to
  proving the claim about $\epsilon$. Recall that, by design, $\epsilon \le
  (2\delta)^{p/30}$. 

  The two branches of the maximum in \cref{eq:pd-p-choice} ensure that:
  \begin{align}
    p^2 \ge \frac{10 \rl \log (64|S|)}{3 z \log \frac{1}{2\delta}} \qquad \text{and} \qquad p \ge \frac{30 \log(64|S|)}{\log \frac{1}{2\delta}}\,. \label{eq:pd-def-p}
  \end{align}
  Because $d \le \max(1, \frac{\rl}{9 z p})$, it suffices to prove that
  $\epsilon \le 1 / (64 |S|)$ and that $\epsilon \le 1 / (64 |S|)^{\rl / (9 z
  p)}$. Now,
  \begin{align*}
    \log \frac{1}{\epsilon}
    &\underset{\text{\Cref{lem:error-reduction-by-vote}}}{\ge} \frac{p}{30}
      \log \frac{1}{2\delta} \underset{\text{\cref{eq:pd-def-p}}}{\ge}
      \frac{1}{30 p} \frac{10 \rl \log(64 |S|)}{3z} = \frac{\rl}{9zp}
      \log(64|S|) \,; \\ 
    \log \frac{1}{\epsilon} 
    &\underset{\text{\Cref{lem:error-reduction-by-vote}}}{\ge} \frac{p}{30} 
      \log \frac{1}{2\delta} \underset{\text{\cref{eq:pd-def-p}}}{\ge} \log(64|S|) \,. \qedhere
  \end{align*}
\end{proof}

\subsection{Common Outputs Behave Canonically}

We now come to the central lemma in the proof, which asserts that the set of
common outputs is likely the same for the canonical function $\Pi$ as it is
for a random draw $A \sim \cB$. It also asserts two other key properties of
\fco. The lemma can be thought of as a ``proof of correctness'' of \fco.

\begin{lemma}\label{lem:pd-main-recursion}
  Let $k \in [d]$ and $x \in [n]^{t_d + \cdots + t_{k+1}}$. Then \fco
  satisfies the following properties.
  \begin{enumerate}
    \item $\Pr_{A \sim \cB, C \in_R [1,2)^{d \times \NN}}[\fco(A, C, x, k) =
    \fco(\Pi, C, x, k)] \ge 1 - \epsilon_k$.
    \item For all $C \in [1,2)^d$, the set $\fco(\Pi, C, x, k)$ is disjoint from $x$ and a subset of $S$. 
    \item For all $A \colon [n]^\rl \to [n]$ and $C \in [1,2)^d$,
    $\fco(A,C,x,k)$ outputs a set of size $w_k$.
  \end{enumerate}
\end{lemma}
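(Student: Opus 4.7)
The plan is to prove all three properties simultaneously by induction on $k \in [d]$.

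For the base case $k = 1$, Property~(3) is immediate from the construction (the returned set has size $w_1 = t_1+1$, well-defined since $|S| \ge \rl + 1 = w_1$). For Property~(2) with $B=\Pi$, the canonical-output requirement that $\Pi(\tau)$ be missing from $\tau$ forces the $e_i$ computed on Lines~\ref{step:pd-base-case-start}--\ref{step:pd-base-case-mid} to be pairwise distinct, disjoint from $x$, and contained in $S$, so the algorithm exits through its first \textbf{return}. For Property~(1), I would condition inductively: whenever $A$ and $\Pi$ agree on the first $i$ evaluations, the $(i{+}1)$-st input is identical under both functions and disagrees with probability $\le \epsilon$ by \cref{eq:pd-fn-prop}; a union bound over the $w_1$ evaluations gives total failure probability $\le w_1 \epsilon = \epsilon_1$.

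For the inductive step at level $k \ge 2$, Property~(3) is immediate from the code, and the set-membership parts of Property~(2) follow from the inductive Property~(2) applied to the $T_y$. What requires real work is showing that, when $B = \Pi$, the for-loop on Lines~\ref{step:pd-for-loop-start}--\ref{step:pd-for-loop-end} reaches the good exit at Line~\ref{step:pd-good-exit}. Since $|Q_0| = w_{k-1} = w_k/2$, it suffices to show that whenever $|Q_{h-1}| < w_k$, the set $P_h$ contains an element outside $Q_{h-1}$. I would prove this by contradiction via \Cref{lem:avoid-lb}. Suppose $P_h \subseteq Q_{h-1}$ with $N := |Q_{h-1}| < 2 w_{k-1}$. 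The threshold definition of $P_h$ forces $\mathbb{E}_y[|T_y \cap (S \setminus Q_{h-1})|] < w_{k-1}/8$ for $y$ uniform in $\SSD{Q_{h-1}}{t_k}$, so by Markov at least $3/4$ of such $y$ satisfy $|T_y \cap Q_{h-1}| \ge w_{k-1}/2$. This yields a $zp$-bit one-way $\avoid$ protocol on universe $Q_{h-1}$: Alice runs $\cB$ on $x \concat y$ and sends its state, Bob resimulates to recompute $\fco(\cB, C, x \concat y, k-1)$---which by inductive Property~(1) agrees with $T_y$ with probability $\ge 1-\epsilon_{k-1}$---and outputs its intersection with $Q_{h-1}$ padded to size $\lceil w_{k-1}/2 \rceil$. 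The choice $t_k = \lceil 4 \ln 2\,(zp+2)\rceil$ makes \Cref{lem:avoid-lb} give a bound strictly above $zp$, contradicting the assumption. As a byproduct, $|S| \ge w_k$, retroactively validating the ``arbitrary subset of $S$ of size $w_k$'' in Property~(3).

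For Property~(1) at level $k$, I would compare runs of \fco on $A \sim \cB$ and on $\Pi$ using the same threshold matrix $C$. Conditioning on $Q_0^A = Q_0^\Pi$ (failure probability $\le \epsilon_{k-1}$) and on no ``threshold flip'' in any of the four loop iterations, all $Q_h$ match and the outputs agree. A flip at coordinate $j$ in iteration $h$ requires the uniformly distributed value $C_{k,h}\binom{N}{t_k} w_{k-1}/(16|S|)$ to lie between $f_j^{(h),A}$ and $f_j^{(h),\Pi}$, which given $A$ has probability $\le 16|S|\,|f_j^{(h),A}-f_j^{(h),\Pi}|/(w_{k-1}\binom{N}{t_k})$. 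Summing over $j$, using $\sum_j |f_j^A-f_j^\Pi| \le \sum_y |T_y^A \triangle T_y^\Pi|$ whose mean is $\le 2 w_{k-1}\binom{N}{t_k}\epsilon_{k-1}$, and then over the four loop iterations and the $Q_0$ recursion, yields total failure $\le (1 + 128|S|)\epsilon_{k-1} \le \epsilon_k$, matching the recursive definition $\epsilon_k = 128|S|\cdot\epsilon_{k-1}$. The main obstacle will be calibrating this threshold-smoothing argument so the $128|S|$ per-level inflation of $\epsilon_k$ survives across all $d$ levels. The multiplicative random threshold $C_{k,h} \in [1,2)$ (creating a uniform window of width $\binom{N}{t_k}w_{k-1}/(16|S|)$) and the pre-amplification $\epsilon \le 1/(64|S|)^d$ from \Cref{lem:pd-p-properties} are exactly tuned so that $\epsilon_d$ remains a small constant; then at the outermost level $\fco(\Pi, C, \emptystream, d)$ returns $w_d$ distinct elements of $S$, yielding the eventual consequence $|S| \ge w_d$.
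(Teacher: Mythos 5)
Your overall plan mirrors the paper's proof: induction on $k$, with the base case handled by the canonical-output property and a union bound; the growth of $Q_h$ established by a reduction to \avoid via \Cref{lem:avoid-lb}; and the robustness of the output to replacing $\Pi$ by a random $A\sim\cB$ established by the random-threshold trick. But there is a genuine gap in the growth step.

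You write that ``it suffices to show that whenever $|Q_{h-1}|<w_k$, the set $P_h$ contains an element outside $Q_{h-1}$,'' and then you prove exactly this (that $P_h\not\subseteq Q_{h-1}$) by contradiction. This is far too weak. The loop on Lines~\ref{step:pd-for-loop-start}--\ref{step:pd-for-loop-end} runs only $4$ iterations, and $|Q_0|=w_{k-1}=w_k/2$, so to reach $|Q_h|\ge w_k=2w_{k-1}$ you need $Q_h$ to grow by $\Theta(w_{k-1})$ per iteration, not by one element. The paper's \Cref{lem:pd-set-growth} accordingly proves the quantitative bound $|P_h\setminus Q_{h-1}|\ge\lceil\tfrac14 w_{k-1}\rceil$. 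To obtain this you must assume, for contradiction, the weaker statement $|P_h\setminus Q_{h-1}|\le\lfloor\tfrac14 w_{k-1}\rfloor$ rather than $P_h\subseteq Q_{h-1}$. This changes the intermediate Markov step: the ``missing mass'' that lands in $T_y\setminus Q_{h-1}$ can now also come from $P_h\setminus Q_{h-1}$, so one only gets $\Pr_y\bigl[|T_y\cap Q_{h-1}|<\lceil\tfrac12 w_{k-1}\rceil\bigr]\le\tfrac12$ instead of $\le\tfrac14$, and consequently Bob's \avoid protocol has error up to $\tfrac34$ rather than $\tfrac12$. The choice $t_k=\lceil 4\ln 2\,(zp+2)\rceil$ is calibrated with enough slack that \Cref{lem:avoid-lb} still yields $|Q_{h-1}|\ge w_k$, giving the contradiction, but your version as stated does not establish termination through \Cref{step:pd-good-exit}.

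There is also a minor arithmetic issue in the stability step: you correctly bound $\sum_j|f_j^A-f_j^\Pi|\le\sum_y|T_y^A\triangle T_y^\Pi|\le 2w_{k-1}\cdot\bigl|\{y:T_y^A\neq T_y^\Pi\}\bigr|$, which leads to a per-iteration flip probability of $\le 32|S|\epsilon_{k-1}$ and a total of $(1+128|S|)\epsilon_{k-1}$, which strictly exceeds $\epsilon_k=128|S|\,\epsilon_{k-1}$. This is only a constant-factor discrepancy fixable by redefining $\epsilon_k$ with a larger constant, but you should note it rather than claim a clean match.
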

\begin{proof}
  The proof is by induction on $k$, spread over the next few lemmas. The case
  $k = 1$ is handled in \Cref{lem:pd-base-case}. For the induction step, the
  heart of the argument, which invokes a communication lower bound for \avoid, is
  given in \Cref{lem:pd-set-growth}. Following this,
  \Cref{lem:pd-step-case-aux} establishes the latter two claims in the lemma
  and \Cref{lem:pd-stability} establishes the first claim.
\end{proof}

\begin{lemma}\label{lem:pd-base-case} 
  \Cref{lem:pd-main-recursion} holds for $k = 1$.
\end{lemma}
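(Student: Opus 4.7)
My plan is to verify the three claims of \Cref{lem:pd-main-recursion} for $k=1$ by direct inspection of the base case of \Cref{alg:pd-set-finding}. Claim~3 is immediate: when $e_0,\ldots,e_{t_1}$ happen to be pairwise distinct, the returned set $\{e_0,\ldots,e_{t_1}\}$ has cardinality $t_1 + 1 = w_1$; otherwise the algorithm falls back to an arbitrary subset of $S$ of size $w_1$, so in either case the output has size exactly $w_1$. For claim~2, specialize to $B = \Pi$. Each $e_i$ is by definition the canonical \mif-output on the stream $x \concat \langle e_0,\ldots,e_{i-1},1,\ldots,1\rangle$, so it must be a value in $[n]$ missing from that stream. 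By induction on $i$, this forces $e_i \ne 1$, $e_i \notin x$, and $e_i \notin \{e_0,\ldots,e_{i-1}\}$. Hence the $e_i$ are pairwise distinct, the returned set equals $\{e_0,\ldots,e_{t_1}\}$, is disjoint from $x$, and lies in $S = \Pi([n]^\rl)$ by definition.

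Claim~1 is the only substantive part. First note that \Cref{alg:pd-set-finding} does not consult $C$ when $k=1$, so the probability is effectively over $A \sim \cB$ alone. Let $e_i^\Pi$ and $e_i^A$ denote the values computed in the base case using $\Pi$ and $A$ respectively, and define the $w_1$ streams
\[
  \sigma_i := x \concat \langle e_0^\Pi,\ldots,e_{i-1}^\Pi, 1,\ldots,1\rangle \qquad \text{for } i = 0,1,\ldots,t_1.
\]
Each $\sigma_i$ depends only on $\Pi$ and $x$, not on $A$, so by \cref{eq:pd-fn-prop} we have $\Pr_{A \sim \cB}[A(\sigma_i) \ne \Pi(\sigma_i)] \le \epsilon$ for each $i$. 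A union bound over the $w_1$ streams gives probability at least $1 - w_1 \epsilon = 1 - \epsilon_1$ that $A$ agrees with $\Pi$ on every $\sigma_i$. On this event, a simple induction on $i$ shows $e_i^A = e_i^\Pi$ throughout, and therefore $\fco(A,C,x,1) = \fco(\Pi,C,x,1)$.

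The only subtle point is recognizing that the streams $\sigma_i$ used in the union bound must be defined using the $\Pi$-values $e_i^\Pi$, so that they are determined before $A$ is sampled; if we instead described them via the $A$-dependent values $e_i^A$, the independence required to invoke \cref{eq:pd-fn-prop} separately for each stream would be lost, and the union bound would be unjustified. With the streams correctly fixed in advance, the remainder is a routine reading of \Cref{alg:pd-set-finding}, so no further ingredients beyond \cref{eq:pd-fn-prop} and the correctness of $\Pi$ are needed.
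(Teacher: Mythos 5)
Your proof is correct and follows the same route as the paper's: define the streams $s_i$ from the $\Pi$-values so they are fixed in advance, union-bound $\Pr_{A\sim\cB}[A(x\concat s_i)\ne\Pi(x\concat s_i)]\le\epsilon$ over the $t_1+1=w_1$ streams to get failure probability $\le\epsilon_1$, and note that canonical outputs are missing items to get disjointness from $x$ and pairwise distinctness. (One tiny imprecision: $e_{t_1}\ne 1$ need not hold since $s_{t_1}$ has no padding ones, but you never actually use it.)
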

\begin{proof}
  Let $e_0,\ldots,e_{t_1}$ be the values of the variables on Lines
  \ref{step:pd-base-case-start} to \ref{step:pd-base-case-mid} of
  \Cref{alg:pd-set-finding} when $\fco(\Pi, C, x, 1)$ is called; note that
  these do not depend on $C$. For $i \in \{0,\ldots,t_1\}$, define sequences
  $s_i = \langle e_0,\ldots,e_{i-1},1,\ldots,1 \rangle$, so that $s_0 =
  \langle 1,1,\ldots,1 \rangle$, and $s_{t_1} = \langle e_0,\ldots,e_{t_1-1}
  \rangle$. If, for all $i \in \{0,\ldots,t_1\}$, we have $A(x \concat s_i) =
  \Pi(x \concat s_i)$, the value of $\fco(A, C, x, 1)$ will exactly match
  $\fco(\Pi,C,x,1)$. By a union bound,
  \begin{align*}
    \Pr_{A \sim \cB, C}[\fco(A, C, x, k) \ne \fco(\Pi, C, x, k)] 
    \le \sum_{i = 0}^{t_1} \Pr_{A \sim \cB}[A(x \concat s_i) \ne \Pi(x \concat s_i)] 
    \underset{\text{\cref{eq:pd-fn-prop}}}{\le} (t_1 + 1) \epsilon 
    = \epsilon_1 \,.
  \end{align*}
  
  Because $\Pi$ is the canonical output function for a protocol for $\mif$,
  for any $z \in [n]^\rl$, we have $\Pi(z) \notin z$. Consequently, each $e_i
  = \Pi(x \concat \langle e_0, \ldots, e_{i-1}, 1, \ldots, 1\rangle)$ is
  neither contained in $x$ nor by $\{e_0,\ldots,e_{i-1}\}$; thus
  $\{e_0,\ldots,e_{t_1}\}$ has size $t_1 + 1 = w_1$ and is disjoint from $x$.
\end{proof}

\begin{lemma}\label{lem:pd-set-growth}
  Let $x \in [n]^{t_d+\cdots+t_{k+1}}$. 
  When computing $\fco(\Pi,C,x,k)$, in the $h$th loop iteration, if $|Q_{h -
  1}| < w_k$, then $|P_h \setminus Q_{h-1}| \ge \ceil{\frac{1}{4} w_{k-1}}$.
  Consequently, the algorithm will return using \Cref{step:pd-good-exit},
  not \Cref{step:pd-bad-exit}.
\end{lemma}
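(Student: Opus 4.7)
The plan is to argue by contradiction, reducing to the \avoid communication lower bound of \Cref{lem:avoid-lb}. Assume $|P_h \setminus Q_{h-1}| < \ceil{w_{k-1}/4}$; set $Q_h := Q_{h-1} \cup P_h$ (matching \Cref{alg:pd-set-finding}); and throughout let $y$ be uniform in $\SSD{Q_{h-1}}{t_k}$.

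\emph{Mass concentration.} By the defining condition of $P_h$, every $j \in S \setminus Q_h$ has $f_j^{(h)} < \theta \binom{|Q_{h-1}|}{t_k}$; since $\theta \le w_{k-1}/(8|S|)$, the total mass on $S \setminus Q_h$ is at most $(w_{k-1}/8)\binom{|Q_{h-1}|}{t_k}$. Each $y$ contributes $|T_y| = w_{k-1}$ (inductive part~3), so the overall mass equals $w_{k-1}\binom{|Q_{h-1}|}{t_k}$ and hence $\EE|T_y \cap Q_h| \ge 7w_{k-1}/8$. Applying Markov to the non-negative integer $w_{k-1} - |T_y \cap Q_h|$ (mean at most $w_{k-1}/8$) yields $\Pr[|T_y \cap Q_h| \ge b] \ge 3/4$ for $b := \floor{w_{k-1}/2} + 1 \ge w_{k-1}/2$.

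\emph{The reduction.} Design a one-way randomized protocol for \avoid with Alice's input $y$ drawn uniformly from $\SSD{Q_{h-1}}{t_k}$, universe $Q_h$, and output size $b$. Alice and Bob share the oracle string $R$ driving $\cB$ together with the threshold matrix $C$. Given $y$, Alice feeds $x \concat y$ to $\cB$ with oracle $R$ and transmits the resulting $zp$-bit state $\sigma$ to Bob. Continuing $\cB$'s simulation from $\sigma$ and $R$, Bob gains black-box access to $A := \cB(\cdot;R)$ on any continuation of $x \concat y$ and computes $T' := \fco(A, C, x \concat y, k-1)$. By the inductive hypothesis at level $k-1$ (part~1 of \Cref{lem:pd-main-recursion}), $T' = T_y$ with probability $\ge 1 - \epsilon_{k-1}$; by parts~2 and~3, $T_y \subseteq S$ is disjoint from $y$ and has size $w_{k-1}$. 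Bob outputs the $b$ smallest elements of $T' \cap Q_h$, padding arbitrarily from $Q_h$ if needed (possible since $|Q_h| \ge |Q_{h-1}| \ge w_{k-1} \ge b$).

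\emph{Contradiction and the corollary.} A union bound gives protocol error at most $1/4 + \epsilon_{k-1} < 1/2$, using $\epsilon_{k-1} \le 1/(64|S|) \le 1/4$ from \Cref{lem:pd-p-properties}. Plugging $a = t_k$, $b \ge w_{k-1}/2$, input universe size $|Q_{h-1}| < 2w_{k-1}$, $K \le zp$, and $t_k \ge 4\ln 2\,(zp+2)$ into \Cref{lem:avoid-lb} gives $K \ge t_k/(4\ln 2) + \log(3/4 - \epsilon_{k-1}) \ge (zp+2) - 1 = zp+1$, contradicting $K \le zp$. The ``consequently'' clause then follows by iteration: $|Q_0| = w_{k-1}$, and whenever $|Q_{h-1}| < w_k$ the set grows by at least $\ceil{w_{k-1}/4}$ new elements, so $|Q_h|$ reaches $w_k = 2w_{k-1}$ within $4$ iterations and the algorithm returns at \Cref{step:pd-good-exit}.

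\emph{Main obstacle.} The delicate choice is Bob's output universe. Routing his output through $Q_{h-1}$ alone only guarantees $\EE|T_y \cap Q_{h-1}| \ge 5w_{k-1}/8$, because the mass deposited on $P_h \setminus Q_{h-1}$ is no longer bounded by the threshold, and the resulting Markov-based bound on $K$ falls a constant short of $zp+1$. Enlarging the output universe to $Q_h$ absorbs this extra mass, lifts the expectation to $7w_{k-1}/8$, and supplies the positive margin that beats $zp$.
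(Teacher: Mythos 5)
Your proposal follows the same broad architecture as the paper's proof (contradiction hypothesis, reduction to an \avoid protocol built from $\fco$, invoking the inductive hypothesis for protocol correctness, and applying \Cref{lem:avoid-lb}), but it has a genuine gap in the application of the \avoid lower bound.

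You construct a protocol in which Alice's input $y$ is uniform over $t_k$-subsets of $Q_{h-1}$, but Bob outputs a $b$-subset of the \emph{strictly larger} set $Q_h = Q_{h-1}\cup P_h$, and then you plug ``input universe size $|Q_{h-1}|$'' into \Cref{lem:avoid-lb}. This is not a valid instance of the lemma: the $\avoid(m,a,b)$ problem requires Alice's set and Bob's set to live in the same universe $[m]$, and the counting argument behind the lower bound relies on Bob's output potentially colliding with Alice's set. Any element of $Q_h\setminus Q_{h-1}$ is automatically disjoint from $y$, so such elements are ``free'' for Bob; if $|Q_h\setminus Q_{h-1}|\ge b$, the protocol needs zero communication, and the inequality $K\ge t_k b/(|Q_{h-1}|\ln 2)+\log(1-\delta)$ simply fails. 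Tellingly, your proof never actually uses the contradiction hypothesis $|P_h\setminus Q_{h-1}|<\ceil{w_{k-1}/4}$ --- the very quantity that bounds the number of free elements --- which is a sign the step is unjustified. If one \emph{does} use it to correct the reduction (Bob's output must contain at least $b-|P_h\setminus Q_{h-1}|\gtrsim w_{k-1}/4$ elements of $Q_{h-1}$, yielding an honest $\avoid(|Q_{h-1}|,t_k,\approx w_{k-1}/4)$ protocol), the resulting bound is only $K\gtrsim t_k/(8\ln 2)-O(1)\approx (zp+2)/2$, which is not a contradiction --- this is precisely the ``falls a constant short'' shortfall you flagged in your own ``Main obstacle'' paragraph, and enlarging the universe to $Q_h$ does not legitimately fix it.

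The paper closes this gap differently: it keeps Bob's output inside $Q_{h-1}$ with the larger output size $\ceil{w_{k-1}/2}$. The probability bound $\Pr[|T_y\cap Q_{h-1}|<\ceil{w_{k-1}/2}]\le 1/2$ is obtained not by applying Markov directly to $|T_y\setminus Q_{h-1}|$ (which, as you observed, only gives roughly $3/4$), but by deterministically peeling off the at most $\floor{w_{k-1}/4}$ elements of $P_h\setminus Q_{h-1}$ --- this is exactly where the contradiction hypothesis enters --- and applying Markov to the residual $|T_y\setminus P_h|$, whose expectation is below $w_{k-1}/8$ by the threshold condition. The worse error ($3/4$ versus your $1/4$) is more than offset by the larger output size ($\ceil{w_{k-1}/2}$ inside $Q_{h-1}$ versus the effective $\approx w_{k-1}/4$ your protocol yields inside $Q_{h-1}$), giving $|Q_{h-1}|\ge 2w_{k-1}=w_k$ and the desired contradiction.
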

\begin{proof}
  Assume for sake of contradiction that $|Q_{h - 1}| < w_k$ and $|P_h
  \setminus Q_{h-1}| \le \floor{\frac{1}{4} w_{k-1}}$. Then we can use the
  algorithm $\cA$ to implement a protocol for the one-way communication
  problem $\avoid(|Q_{h-1}|, t_k, \ceil{\frac{1}{2} w_{k-1}})$, with $\le
  \frac{1}{2}$ probability of error.
  
  We assume without loss of generality that $Q_{h-1} = [|Q_{h-1}|]$; if not,
  relabel coordinates so that this holds. In the protocol, after Alice is
  given a subset $W \subseteq Q_{h-1}$ with $|W| = t_k$, they construct a
  sequence $v = x \concat \sort(W)$ in $[n]^{t_d + \cdots + t_k}$.  Then Alice
  uses public randomness to instantiate an instance $E$ of $\cA$;  inputs the
  sequence $v$ to $E$; and sends the new state of $E$ to Bob, using a $z
  p$-bit message. As Bob shares the public randomness,  they can use this
  state to evaluate the output of the algorithm on any continuation of the
  stream. In particular, Bob can evaluate the algorithm for any possible
  suffix, to produce a function $\tilde{A}_{x \concat \sort(W)} :
  [n]^{t_{k-1}+\cdots+t_1} \rightarrow [n]$; Bob then samples a random $C \in
  [1,2)^{d \times \NN}$, and computes $V = \fco(\tilde{A}_{x \concat
  \sort(W)}, C, k-1)$, which is a subset of $S$. If $|V \cap Q_{h-1}| \ge
  \ceil{\frac{1}{2} w_{k-1}}$, Bob outputs the smallest $\ceil{\frac{1}{2}
  w_{k-1}}$ entries of $V \cap Q_{h-1}$. Otherwise, Bob outputs an arbitrary
  set of size $\ceil{\frac{1}{2} w_{k-1}}$.
  
  First, we observe that for any value of $\sort(W)$, the distribution of
  $\tilde{A}_{x \concat \sort(W)}$
  is exactly the same as the distribution of $A_{x \concat \sort(W)}$, when $A$ is drawn from $\cB$;
  this follows because for a fixed setting of the oracle random string of the algorithm,
  it behaves deterministically.
  
  Applying \Cref{lem:pd-main-recursion} at $k-1$, we observe that for any $W \in \binom{Q_{h-1}}{t_k}$,
  \begin{align*}
    \Pr[\fco(\tilde{A}_{x \concat \sort(W)}, C, k-1) = \fco(\Pi, C, x \concat \sort(W), k-1)] 
    \ge 1 - \epsilon_{k-1} \ge \frac{3}{4} \,.
  \end{align*}
  Furthermore, we are guaranteed that $\fco(\Pi, C, x \concat \sort(W), k-1)$
  has size $w_{k-1}$ and is disjoint from $W$.
  
  We now bound the probability, over a uniformly random $y \in
  \SSD{Q_{h-1}}{t_k}$, that $|\fco(\Pi, C, x \concat y, k-1) \cap
  Q_{h-1}| < \ceil{\frac{1}{2} w_{k-1}}$. Define $T_y = \fco(\Pi, C, x \concat y, k-1)$ and, for each $j\in S$,
  $f_j^{(h)}$, as in \Cref{alg:pd-set-finding}. In particular, we have:
  \begin{align}
    \Pr_{y,C}\left[|T_y \cap Q_{h-1}| < \ceil{\frac{1}{2} w_{k-1}} \right]
      &= \Pr_{y,C}\left[|T_y \setminus Q_{h-1}| > \floor{\frac{1}{2} w_{k-1}} \right] \nonumber \\
      &\le \Pr_{y,C}\left[|T_y \setminus P_h \setminus Q_{h-1}| > \floor{\frac{1}{2} w_{k-1}} - \floor{\frac{1}{4} w_{k-1}} \right] \label{eq:pd-lb-minor-set-drop} \\
      &\le \Pr_{y,C}\left[|T_y \setminus P_h| \ge \frac{1}{4} w_{k-1} \right] \,. \nonumber
  \end{align}
  (The inequality on \cref{eq:pd-lb-minor-set-drop} follows since we
  assumed $|P_h \setminus Q_{h-1}| \le \floor{\frac{1}{4} w_{k-1}}$.) Note that:
  \begin{align}
    \sum_{j \notin P_h} f_j^{(h)} = \sum_{y \in \SSD{Q_{h-1}}{t_k}} |T_y \setminus P_h| \ge \frac{1}{4} w_{k-1} \left|\left\{y \in \SSD{Q_{h-1}}{t_k} : |T_y \setminus P_h| \ge \frac{1}{4} w_{k-1}\right\}\right| \label{eq:pd-lb-count-part} \,.
  \end{align}
  Using the fact that $y$ is uniformly distributed over $\SSD{Q_{h-1}}{t_k}$, gives:
  \begin{align*}
    \Pr_{y,C}\left[|T_y \setminus P_h| \ge \frac{1}{4} w_{k-1} \right]
      &= \EE_C \frac{\left|\left\{y \in \SSD{Q_{h-1}}{t_k} : |T_y \setminus P_h| \ge \frac{1}{4} w_{k-1}\right\}\right|}{\binom{|Q_{h-1}|}{t_k}} \\ 
      &\le \EE_C \frac{\sum_{j \notin P_h} f_j^{(h)}}{\frac{1}{4} w_{k-1} \binom{|Q_{h-1}|}{t_k}} && \text{by \cref{eq:pd-lb-count-part}}\\ 
      &\le \EE_C \frac{(|S|-|P_h|) \frac{C_{k,h} w_{k-1}}{16 |S|} \binom{|Q_{h-1}|}{t_k} }{\frac{1}{4} w_{k-1} \binom{|Q_{h-1}|}{t_k}} &&\text{by definition of $P_h$} \\
      &= \EE_C \frac{C_{k,h}}{4} \frac{|S|-|P_h|}{|S|} \le \frac{1}{2} \,. &&\text{since $|P_h|\ge 0$, $C_{k,h} \le 2$} 
  \end{align*}
  Thus the probability that $|T_y \cap Q_{h-1}| < \ceil{\frac{1}{2} w_{k-1}}$
  holds is $\le 1/2$. Since Bob only gives an incorrect output when this
  happens or when $\fco(\tilde{A}_{x \concat \sort(W)}, C, k-1) \ne
  \fco(\Pi, C, x \concat \sort(W), k-1)$, it follows by a union bound that
  the total failure probability is $\le \frac{1}{2} + \frac{1}{4} \le
  \frac{3}{4}$.
  
  Consequently, the protocol implementation has $\le \frac{3}{4}$ error when inputs are drawn from the uniform distribution over $\binom{Q_{h-1}}{t_k}$; by \Cref{lem:avoid-lb}, we 
  obtain a lower bound on the required message length, giving
  \begin{align*}
    z p > \frac{t_k \ceil{\frac{1}{2} w_{k-1}}}{|Q_{h-1}| \ln 2} + \log(1 - 3/4) \ge \frac{t_k w_{k-1}}{|Q_{h-1}| \cdot 2 \ln  2} - 2 \,.
  \end{align*}
  Rearranging this slightly and using integrality of $|Q_{h-1}|$ gives:
  \begin{align*}
    |Q_{h-1}| \ge \ceil{\frac{t_k w_{k-1}}{2 \ln  2 (zp+2) }} = \ceil{\frac{\ceil{4 \ln 2 (zp + 2)}}{2 \ln  2 (zp+2) } w_{k-1}} \ge 2 w_{k-1} = w_k \,,
  \end{align*}
  but as $|Q_{h-1}| < w_k$, this implies $w_k < w_k$, which is a
  contradiction; this proves that the assumption $|P_h \setminus Q_{h-1}| \le
  \frac{1}{4} w_{k-1}$ must have been invalid.
  
  Finally, we observe that since, in each iteration of the loop on Lines \ref{step:pd-for-loop-start} to \ref{step:pd-for-loop-end},
  $|Q_h| = |Q_{h-1} \cup P_{h}| = |Q_{h-1}| + |P_{h} \setminus Q_{h-1}| \ge |Q_{h-1}| + \ceil{\frac{1}{4} w_{k-1}}$,
  and we initially have $|Q_0| = w_{k-1}$, the size of $Q_h$ (assuming we haven't returned yet)
  must be $\ge w_{k-1} (1 + h / 4)$. By the last loop iteration (with $h=4$), we will have $|Q_h| \ge 2 w_{k-1} = w_k$.
\end{proof}

\begin{lemma}\label{lem:pd-step-case-aux}
  For $k > 1$, $x \in [n]^{t_d+\cdots+t_{k+1}}$, $\fco(\Pi, C, x, k)$ is
  disjoint from $x$ and a subset of $S$; and for all $A, C, k$,
  $\fco(A,C,x,k)$ outputs a set of size $w_k$.
\end{lemma}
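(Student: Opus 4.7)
My plan is to prove \Cref{lem:pd-step-case-aux} by induction on $k$, using the base case $k = 1$ already established in \Cref{lem:pd-base-case}, and splitting the induction step $k > 1$ into the three sub-claims stated: (A)~the size-$w_k$ statement for every $A$ and $C$, and (B,C)~the subset-of-$S$ and disjoint-from-$x$ statements, which are only claimed for $B = \Pi$.

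Claim~(A) is purely structural and does not need the induction hypothesis. Inspecting \Cref{alg:pd-set-finding} for $k > 1$, there are exactly two return statements. \Cref{step:pd-good-exit} returns the $w_k$ smallest elements of $Q_h$, which is well-defined because the guarding \textbf{if} ensures $|Q_h| \ge w_k$. \Cref{step:pd-bad-exit} explicitly returns a $w_k$-subset of $S$. So regardless of $A$, $C$, or whether the call succeeds, the output is a set of cardinality $w_k$.

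For~(B) and~(C), I first invoke \Cref{lem:pd-set-growth} (already proved at level $k$) to conclude that $\fco(\Pi,C,x,k)$ always returns via \Cref{step:pd-good-exit}, so its output is the $w_k$ smallest elements of some $Q_h$ built during the loop. It then suffices to show $Q_h \subseteq S$ and $Q_h \cap x = \emptyset$ for every $h$, which I handle by a short inner induction on $h$. For $h = 0$, $Q_0 = T_{\langle 1,2,\ldots,t_k\rangle} = \fco(\Pi,C,x\concat\langle 1,2,\ldots,t_k\rangle,k-1)$, so the outer induction hypothesis (applied at $k-1$ via \Cref{lem:pd-main-recursion}) yields $Q_0 \subseteq S$ and $Q_0 \cap (x \concat \langle 1,2,\ldots,t_k\rangle) = \emptyset$, which in particular forces $Q_0 \cap x = \emptyset$. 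For the step, write $Q_h = Q_{h-1} \cup P_h$. By construction in \Cref{step:pd-comp-ph}, $P_h \subseteq S$. Any $j \in P_h$ satisfies $f_j^{(h)} \ge \theta \binom{|Q_{h-1}|}{t_k} > 0$, so there exists some $y \in \SSD{Q_{h-1}}{t_k}$ with $j \in T_y = \fco(\Pi,C,x\concat y,k-1)$; by the outer induction hypothesis, $T_y$ is disjoint from $x\concat y$, hence $j \notin x$. Combined with $Q_{h-1}\cap x = \emptyset$, this gives $Q_h \cap x = \emptyset$, completing the inner induction.

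There is no real obstacle here; the proof is bookkeeping once the two external facts are in hand. The only subtlety is being precise about which statements hold for arbitrary $A$ versus only for $\Pi$ (so that (A) is handled without any induction hypothesis, while (B,C) are specialized to $\Pi$), and checking that no circularity arises: \Cref{lem:pd-set-growth} at level $k$ rests on \Cref{lem:pd-main-recursion} at level $k-1$, not on the step case being proved now.
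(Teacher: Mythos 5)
Your treatment of claims (B) and (C) — the disjoint-from-$x$ and subset-of-$S$ properties for $B=\Pi$ — matches the paper's argument in substance: invoke \Cref{lem:pd-main-recursion} at $k-1$ to conclude that each $T_y$ is a subset of $S$ disjoint from $x\concat y$ (hence from $x$), invoke \Cref{lem:pd-set-growth} to ensure return via \Cref{step:pd-good-exit}, then trace elements of $Q_h$ back to the $T_y$ sets. Phrasing this as an inner induction on $h$ is a fine way to organize what the paper states more loosely.

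The gap is your assertion that claim~(A) is ``purely structural and does not need the induction hypothesis.'' It does. The failure path \Cref{step:pd-bad-exit} reads ``return arbitrary subset of $S$ of size $w_k$,'' which is only a well-defined instruction when $|S| \ge w_k$. This is not a syntactic fact: $w_k = 2^{k-1}(t_1+1)$ grows with $k$, and the inequality $|S|\ge w_k$ is established precisely by the argument you give for (B,C) at level $k$ — namely, $\fco(\Pi,C,x,k)$ is a size-$w_k$ subset of $S$. The paper makes this dependency explicit (``where a subset of size $w_k$ must exist, because we know $|S| \ge w_k$''). So you must prove (B,C) for $B=\Pi$ first, extract $|S|\ge w_k$ from that, and only then conclude (A). Your ordering, proving (A) first by a claimed structural argument, leaves the failure exit unjustified. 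This is fixable by reversing the order, but as written the argument has a real circularity-shaped hole: (A) for arbitrary $A$ rests on $|S|\ge w_k$, which rests on (B,C).
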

\begin{proof}
  By \Cref{lem:pd-main-recursion} at $k - 1$, the sets $T_{A,x \concat y}$
  chosen on \Cref{step:pd-compute-t} are always subsets of $S$ and
  disjoint from $x \concat y$, and hence disjoint from $x$. Per
  \Cref{lem:pd-set-growth}, \textsc{FindCommonOutputs} will return a subset of
  $Q_h$ using \Cref{step:pd-good-exit}, where $h$ is the last loop
  iteration number. Each element of $Q_h$ was either in $T_{A,x \circ \langle
  1,2,\ldots,t_k\rangle}$ (and hence also in $S$) or was in $P_{h'}$ for some
  $h' \le h$. Note that $P_{h'}$ only contains integers $j$ for which
  $f_j^{(h')} > 0$; i.e., which were contained in one of the sets $(T_{A,x
  \concat y})_{y \in \SSD{Q_{h'-1}}{t_k}}$, and are thereby also in $S$.
  Consequently, the set returned is contained in $S$, which implies $|S| \ge
  w_k$.

  Calls to $\fco(A,C,x,k)$ will either output through
  \Cref{step:pd-good-exit} (where the size of the set has been checked by
  the pseudocode) or through \Cref{step:pd-bad-exit} (where a subset of
  size $w_k$ must exist, because we know $|S| \ge w_k$).
\end{proof}

\begin{lemma}\label{lem:pd-stability}
  For $k > 1$, and all $x \in [n]^{t_d + \cdots + t_{k+1}}$, 
  \begin{align*}
    \Pr_{A \sim \cD, C}[\fco(A, C, x, k) \ne \fco(\Pi, C, x, k)] \le \epsilon_k \,.
  \end{align*}
\end{lemma}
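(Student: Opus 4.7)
The plan is to prove this by induction on $k$; the base case $k=1$ is already Lemma~\ref{lem:pd-base-case}. For the step, I would couple two parallel runs of $\fco(\cdot,C,x,k)$, one with $B=\Pi$ and one with $B=A\sim\cB$, sharing the same threshold matrix $C$. Let $G_h$ denote the event that both runs agree through iteration $h$ of the loop on Lines~\ref{step:pd-for-loop-start}--\ref{step:pd-for-loop-end}; equivalently, $Q_0^A=Q_0^\Pi$ and $P_{h'}^A=P_{h'}^\Pi$ for all $h'\le h$. The strategy is to bound $\Pr[G_{h}^c\mid G_{h-1}]\le 32|S|\epsilon_{k-1}$ for each $h\in\{1,2,3,4\}$, and combine with $\Pr[G_0^c]\le\epsilon_{k-1}$ (which follows by applying the induction hypothesis to the single call $T_{\langle 1,\ldots,t_k\rangle}$).

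The heart of the induction step is the random threshold trick. Condition on $G_{h-1}$ so that $Q_{h-1}$ takes a common value $Q$, and condition also on all the $T_y^\Pi$ and $T_y^A$ for $y\in\SSD{Q}{t_k}$. Then the counts $f_j^{(h),\Pi}, f_j^{(h),A}$ are determined, and for each $j\in S$ the two runs disagree on whether $j\in P_h$ only if the threshold value $\theta\binom{|Q|}{t_k}=C_{k,h}\,w_{k-1}\binom{|Q|}{t_k}/(16|S|)$ lies strictly between $f_j^{(h),\Pi}$ and $f_j^{(h),A}$. Since $C_{k,h}$ is uniform on $[1,2)$ and independent of everything else, the conditional probability of this for a fixed $j$ is at most $16|S|\cdot|f_j^{(h),\Pi}-f_j^{(h),A}|/(w_{k-1}\binom{|Q|}{t_k})$. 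Summing over $j$, using the bound $\sum_j|f_j^{(h),\Pi}-f_j^{(h),A}|\le 2w_{k-1}\cdot|\{y\in\SSD{Q}{t_k}:T_y^A\ne T_y^\Pi\}|$ (each disagreeing $y$ contributes at most $2w_{k-1}$ since $|T_y|=w_{k-1}$), and then taking expectation over the $T_y$'s and using the inductive bound $\Pr[T_y^A\ne T_y^\Pi]\le\epsilon_{k-1}$ marginally for each $y$, gives
\[
  \Pr[G_h^c\mid G_{h-1}]\le \frac{32|S|}{\binom{|Q|}{t_k}}\cdot\binom{|Q|}{t_k}\epsilon_{k-1}=32|S|\epsilon_{k-1}.
\]

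Chaining over the four iterations yields $\Pr[G_4^c]\le\epsilon_{k-1}+4\cdot 32|S|\epsilon_{k-1}\le 128|S|\epsilon_{k-1}=\epsilon_k$, where the small additive $\epsilon_{k-1}$ is absorbed using $|S|\ge 2$ (or by tightening the constant $16$ in the threshold definition). The main obstacle I anticipate is the conditioning issue: the inductive hypothesis is a bound on $\Pr[T_y^A\ne T_y^\Pi]$ marginally, but we invoke it after conditioning on $G_{h-1}$, which is itself an event depending on other recursive-call outputs and on previous $C_{k,h'}$. The cleanest workaround is to avoid the conditioning by taking a union bound over all possible values of $Q_{h-1}$ (each a subset of $S$ of size $\le 2w_{k-1}$, determined by the growth of the $Q$'s), and applying the threshold argument inside each branch; alternatively, one verifies that $G_{h-1}$ depends only on $T_y$'s for $y\in\SSD{Q_{h-2}}{t_k}$ and on $C_{k,1},\ldots,C_{k,h-1}$, which are jointly independent of the $T_y$'s for the \emph{new} $y$'s appearing in iteration $h$, letting the marginal bound be applied cleanly to each relevant $y$.
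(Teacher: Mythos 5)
Your proposal has the right structural skeleton (induction on $k$, and the random-threshold argument at the heart of the step case), and you correctly identify where the difficulty lies. But the ``main obstacle'' you flag at the end -- how to apply the inductive hypothesis, which is a marginal bound on $\Pr[T_y^A \ne T_y^\Pi]$, after conditioning on the agreement event $G_{h-1}$ -- is not adequately resolved by either of your two suggested workarounds, and this is where the argument as written would stall.

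The first workaround (union bound over all possible values of $Q_{h-1}$) doesn't survive a count: $Q_{h-1}$ can be any subset of $S$ of size roughly $w_{k-1}$ to $2w_{k-1}$, so there are $\binom{|S|}{\Theta(w_{k-1})}$ possibilities, and a naive union bound is exponentially worse than the $O(|S|)\cdot\epsilon_{k-1}$ budget. The second workaround (independence of the ``new'' $y$'s from $G_{h-1}$) also has a gap: the $T^A_y$ values for new $y \in \SSD{Q_{h-1}}{t_k}\setminus\SSD{Q_{h-2}}{t_k}$ are functions of the \emph{same} random $A$ and the same $C_{k',\cdot}$ (for $k'<k$) as the old $T^A_y$'s, so they are not independent of $G_{h-1}$. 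Moreover, $G_{h-1}$ does not imply agreement of all old $T^A_y$ with $T^\Pi_y$ -- it only implies agreement \emph{after} thresholding -- so the old $y$'s (which are also summed in iteration $h$) remain a source of discrepancy in $\hf_j^{(h)}$ that you can't condition away.

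The paper sidesteps conditioning entirely by introducing auxiliary quantities: $\hat{f}_j^{(h)}$ and $\hat{P}_h$ are defined as the values iteration $h$ of $\fco(A,C,x,k)$ \emph{would} compute if it were fed the $\Pi$-run's $Q_{h-1}$ rather than its own $\tilde{Q}_{h-1}$. Since $Q_{h-1}$ is determined by $C$ alone (not $A$), the event $\{\hat{P}_h \neq P_h\}$ is unconditional, and one can take a plain union bound over $\{Q_0 \neq \tilde{Q}_0\}$ and $\{\hat{P}_h \neq P_h\}_{h\le 4}$. The only independence used is that $T_{A,x\concat y}$, which depends on $A$ and $(C_{k',\cdot})_{k'<k}$, is independent of $C_{k,h}$ -- exactly the clean independence you wanted but couldn't arrange under conditioning. (One then checks that $Q_0 = \tilde Q_0$ and $P_h = \hat{P}_h$ for all $h$ forces $\tilde Q_h = Q_h$ and $\tilde{P}_h = P_h$, iteratively.) This is the ingredient your proposal is missing.

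Your per-$y$ bound of $2w_{k-1}$ on $\sum_j |f_j^{(h)} - \hat{f}_j^{(h)}|$ (via the symmetric difference $|T^A_y\triangle T^\Pi_y|$) is actually the careful one; the paper uses $w_{k-1}$, which is off by a factor of two. As you observe, with the factor $2$ the arithmetic $(1+4\cdot 32|S|)\epsilon_{k-1} \le 128|S|\epsilon_{k-1} = \epsilon_k$ fails by a whisker; ``$|S|\ge 2$'' does not save it ($257 > 256$). Your alternative fix -- adjust the threshold constant $16$ downward in \Cref{alg:pd-set-finding} so the measure of the bad set $F$ shrinks -- is a legitimate way to recover the stated $\epsilon_k$.
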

\begin{proof}
  The proof of the lemma follows from the observation that, when computing
  $\fco(A,C,x,k)$, even if a fraction of the recursive calls to $\fco(A,C,x
  \concat y,k-1)$ produced incorrect outputs, the values for $Q_0$ and $(P_h)_{h \ge 1}$ will likely match those computed when $\fco(\Pi,C,x,k)$ is called.
  
  Henceforth, we indicate variables from the computation of $\fco(\Pi,C,x,k)$ without a tilde,
  and variables from the computation of $\fco(A,C,x,k)$ with a tilde. For example, $f_j^{(h)}$
  is computed using $B = \Pi$, while $\tilde{f}_j^{(h)}$ is computed using $B = A$. We also define
  \begin{align*}
    \hf_j^{(h)} &= \left|\left\{ y \in \SSD{Q_{h-1}}{t_k} : j \in T_{A,y} \right\}\right| \\
    \hP_h &= \left\{j \in S : \hf_j^{(h)} \ge \frac{C_{k,h} w_{k-1}}{16 |S|} \left|\binom{Q_{h-1}}{t_k}\right| \right\} \,;
  \end{align*}
  that is, $\hf_j^{(h)}$ and $\hP_h$ are the values that would be computed by $\fco(A,C,x,k)$ if the set $Q_{h-1}$ was used instead of the set $\tQ_{h-1}$.
  
  Say $\fco(\Pi, C, x, k)$ returns from the loop at iteration $h^\star$. The output of
  $\fco(A, C, x, k)$ will equal $\fco(\Pi, C, x, k)$ if $Q_0 = \tQ_0$ and for all $h \in [h^\star]$,
  we have $P_h = \hP_h$. (If this occurs, then as $Q_0 = \tQ_0$, $\hP_1 = \tP_1$, so
  $Q_1 = Q_0 \cup P_1 = \tQ_0 \cup \tP_1 = \tQ_1$, and as $Q_1 = \tQ_1$, $\hP_2 = \tP_2$, and so on.)  By \Cref{lem:pd-main-recursion} at $k-1$, the probability that
  $Q_0 \ne \tQ_0$ is $\le \epsilon_{k-1}$. Consider a specific $h \in [h^\star]$;
  the only way in which $\hP_h \ne P_h$ is if there is some $j \in S$ for which $f_j^{(h)}$ and $\hf_j^{(h)}$
  are on opposite sides of the threshold $\frac{C_{k,h} w_{k-1}}{16 |S|} |\binom{Q_{h-1}}{t_k}|$.
  
  Let $\lambda_{h}$ be the random variable indicating the fraction of $y \in \SSD{Q_{h-1}}{t_k}$
  for which $T_{A,x \concat y} \ne T_{\Pi,x \concat y}$. Note that the values $T_{A,x \concat y}$ are functions of
  the random variable $A$ and of $C_{k',h}$ for $k' < k, h \in \NN$; in particular $T_{A,x\circ y}$ is independent of $(C_{k,h})_{h \in \NN}$. By \Cref{lem:pd-main-recursion} at $k-1$,
  $\Pr[T_{A,x\circ y} \ne T_{\Pi,x\circ y}] \le \epsilon_{k-1}$, which implies $\EE \lambda_h \le \epsilon_{k-1}$.
  
  Fix a particular setting of $A$ and $(C_{k',h})_{k' < k, h \in \NN}$. Since each set $T_{A,x\circ y}$
  contributes $1$ unit to each of $w_{k-1}$ variables $\hf_j^{(h)}$:
  \begin{align*}
    \sum_{j \in S} \left|f_j^{(h)} - \hf_j^{(h)}\right| \le w_{k-1} \left|\left\{y \in \SSD{Q_{h-1}}{t_k} : T_{A,x \concat y} \ne T_{\Pi,x\circ y}\right\}\right| = w_{k-1} \lambda_h \binom{Q_{h-1}}{t_k} \,.
  \end{align*}
  Let $F$ be the set of possible values in $[1,2)$ for $C_{k,h}$ for which $P_h \ne \hP_h$;
  this is a union of intervals corresponding to each pair $\left(f_j^{(h)}, \hf_j^{(h)}\right)$,
  for $j \in S$. A given value $c$ is bad for $j$ if 
  \begin{align*}
    f_j^{(h)} < \frac{c w_{k-1}}{16 |S|} \binom{|Q_{h-1}|}{t_k} \le \hf_j^{(h)} \,; \qquad \text{equivalently:} \qquad c \in \left(\frac{16 |S| f_j^{(h)} }{w_{k-1} \binom{|Q_{h-1}|}{t_k}}, \frac{16 |S| \hf_j^{(h)}}{w_{k-1} \binom{|Q_{h-1}|}{t_k}}\right] \,,
  \end{align*}
  and similarly in the case where $\hf_j^{(h)} < f_j^{(h)}$. The measure of $F$ is:
  \begin{align*}
    \le \sum_{j \in S} \frac{16 |S|}{w_{k-1} \binom{|Q_{h-1}|}{t_k}} |\hf_j^{(h)} - f_j^{(h)}| \le \frac{16 |S|}{w_{k-1}} w_{k-1} \lambda_h = 16 |S| \lambda_h \,.
  \end{align*}
  This upper bounds the probability that $C_{k,h} \in F$ and $P_h \ne \hP_h$. We then have:
  \begin{align*}
    \Pr[P_h \ne \hP_h] &= \EE_{A, (C_{k',h})_{k' < k}} \Pr[ C_{k,h} \in F ] 
      \le \EE_{A, (C_{k',h})_{k' < k}} (16 |S| \lambda_h) = 16 |S| \epsilon_{k-1} \,.
  \end{align*}
  
  By a union bound, the probability that $Q_0 \ne \tQ_0$ or $P_h \ne \hP_h$ for any $h \le h^\star$ is
  \begin{align*}
    \le \epsilon_{k-1} + h^\star 16 |S| \epsilon_{k-1} \le (1 + 4 \cdot 16 |S|) \epsilon_{k-1} \le \frac{64 |S| w_k}{w_{k-1}} \epsilon_{k-1} \,.
  \end{align*}
  Thus $\Pr[\fco(A, C, x, k) \ne \fco(\Pi, C, x, k)] \le \frac{64 |S| w_k}{w_{k-1}} \epsilon_{k-1} = \epsilon_k$.
\end{proof}

We have thus completed the proof of \Cref{lem:pd-main-recursion}, establishing
the key properties of \fco. It is time to use them to derive our lower bound.

\subsection{Obtaining a Pseudo-Deterministic Lower Bound}

\begin{lemma}\label{lem:pd-z-lb} 
  We have:
  \begin{align*}
    z \ge \frac{\rl}{\log\frac{2|S|}{\rl}} \min\left(\frac{1}{36},
    \frac{\log\frac{1}{2\delta}}{17280 \log(64|S|) \log\frac{2|S|}{\rl}} \right) \,.
  \end{align*}
\end{lemma}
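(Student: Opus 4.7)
The plan is to use Lemma~\ref{lem:pd-main-recursion} at the top of the recursion to force $|S|$ to be at least $w_d$, and then invert the parameter formulas $d = 1 + \floor{\rl/(18zp)}$ and \eqref{eq:pd-p-choice} to isolate $z$. Concretely, I would first invoke Lemma~\ref{lem:pd-main-recursion} with $k = d$ and $x = \emptystream$: its third conclusion guarantees that $\fco(\Pi, C, \emptystream, d)$ returns a subset of $S$ of size $w_d = 2^{d-1}(t_1+1)$, so $|S| \ge 2^{d-1}(t_1+1)$. Combined with $t_1 \ge \rl/2$ from Lemma~\ref{lem:pd-p-properties}, taking base-$2$ logarithms gives $d - 1 \le \log(2|S|/\rl)$.

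Write $L := \log(2|S|/\rl)$. Since $d = 1 + \floor{\rl/(18zp)} \ge \rl/(18zp)$, the previous step becomes $\rl/(18zp) \le 1 + L$, and using $|S| \ge \rl + 1$ (which was noted before the lemma statement and implies $L \ge 1$) this simplifies to $zp \ge \rl/(36L)$. This is the central inequality that ties $z$ and $p$ to the parameters of the problem; everything after this point is a matter of eliminating $p$.

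Finally, I would unpack $p$ from \eqref{eq:pd-p-choice}. Let $A = \sqrt{10\rl\log(64|S|)/(3z\log(1/(2\delta)))}$ and $B = 30\log(64|S|)/\log(1/(2\delta))$, so $p = \ceil{\max(A, B)}$. If $\max(A, B) \le 1$, then $p = 1$ and $zp \ge \rl/(36L)$ directly gives $z \ge \rl/(36L)$, matching the first branch of the stated min. Otherwise $p \le 2\max(A, B)$; in the subcase $A \ge B$, substituting $p \le 2A$ into $zp \ge \rl/(36L)$, squaring, and rearranging yields $z \ge \rl \log(1/(2\delta))/(17280\, L^2 \log(64|S|))$, matching the second branch, while the subcase $B \ge A$ yields a strictly stronger bound (roughly $\rl \log(1/(2\delta))/(2160\, L \log(64|S|))$, which dominates the second branch since $L \ge 1$). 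The only obstacle is pure bookkeeping — tracking constants through two layers of ceiling/floor rounding and through the square-root manipulation — since all the conceptual content already lives in Lemma~\ref{lem:pd-main-recursion}.
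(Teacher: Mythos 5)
Your argument reproduces the paper's proof essentially step for step: invoking \Cref{lem:pd-main-recursion} to get $|S| \ge w_d = 2^{d-1}(t_1+1)$, inverting the parameter formulas to obtain $zp \ge \rl/(36\log\frac{2|S|}{\rl})$, and then a case analysis on the ceiling and max in \cref{eq:pd-p-choice} with the same square-root manipulation yielding the $17280$ constant and the observation that the third ($B\ge A$) case is dominated using $\log\frac{2|S|}{\rl}\ge 1$. One small slip: the fact that $\fco(\Pi, C, \emptystream, d)$ is a subset of $S$ is the \emph{second} conclusion of \Cref{lem:pd-main-recursion}, not the third (which only gives the cardinality); you need both.
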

\begin{proof}
  A consequence of \Cref{lem:pd-main-recursion} is that $\fco(\Pi, C, d)$ will
  output a set of size $w_d$ which is a subset of $S$. This shows that $|S|
  \ge w_d$.  Now, from the definition of $w_d$, it follows that
  \begin{align*}
    |S| \ge w_d = 2^{d-1} (t_1 + 1) > 2^{d-1} t_1 \ge 2^{d-1} \frac{\rl}{2} \qquad \implies \qquad \log \frac{2 |S|}{\rl} \ge d - 1 = \floor{\frac{\rl}{18 z p}} \,.
  \end{align*}
  Since $|S| \ge \rl + 1$, the left hand side $\log \frac{2 |S|}{\rl} > 1$, so using the inequality $x/2 \le \max(\floor{x}, 1)$ gives:
  \begin{align*}
    \frac{\rl}{36 z p} \le \log \frac{2 |S|}{\rl} \qquad \implies \qquad z \ge \frac{\rl}{36 \log \frac{2|S|}{\rl}} \cdot \frac{1}{p} \,.
  \end{align*}
  Next, we expand the definition of $p$ (see \cref{eq:pd-p-choice}), eliminating the ceiling using the inequality $\ceil{x} \le \max(1, 2 x)$:
  \begin{align*}
    z &\ge \frac{\rl}{36 \log \frac{2|S|}{\rl}} \min\left(1, \frac{1}{2} \sqrt{ \frac{3 z \log \frac{1}{2\delta}}{10 \rl \log(64 |S|) } }, \frac{1}{2} \cdot \frac{\log \frac{1}{2\delta}}{30 \log(64 |S|) }\right) \,.
  \end{align*}
  We have two cases: if the left or right side of the minimum is smallest, then:
  \begin{align}
    z \ge \min\left( \frac{\rl}{36 \log \frac{2|S|}{\rl}}, \frac{\rl \log \frac{1}{2\delta}}{2160 \log \frac{2|S|}{\rl} \log(64|S|)} \right) \,, \label{eq:pd-min-lr}
  \end{align}
  while otherwise, if the center is smallest, we get:
  \begin{align*}
    z^2 \ge  \frac{1}{4} \frac{\rl^2}{(36 \log \frac{2|S|}{\rl})^2} \cdot \frac{3 z \log \frac{1}{2\delta}}{10 \rl \log(64 |S|) } \qquad \implies \qquad z \ge \frac{\rl \log \frac{1}{2 \delta}}{17280 (\log \frac{2|S|}{\rl})^2 \log(64|S|)} \,.
  \end{align*}
  As $\log\frac{2|S|}{\rl} \ge 1$, this is smaller than the right minimum
  branch of \cref{eq:pd-min-lr}, so the common lower bound for all cases is:
  \begin{align*}
    z &\ge \min\left( \frac{\rl}{36 \log \frac{2|S|}{\rl}}, \frac{\rl \log \frac{1}{2 \delta}}{17280 (\log \frac{2|S|}{\rl})^2 \log(64|S|)} \right) \,. \qedhere
  \end{align*}
\end{proof}

\begin{lemma}\label{lem:pd-s-le-2zp1}
  We have $|S| < 2^{z+1}$.
\end{lemma}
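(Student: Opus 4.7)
The plan is a simple averaging-and-counting argument exploiting the fact that once the oracle is fixed, $\cA$ becomes a $z$-bit deterministic state machine with at most $2^z$ distinct outputs. For each $v \in S$, fix a witness stream $x_v \in [n]^\rl$ with $\Pi(x_v) = v$; such a stream exists because $S = \Pi([n]^\rl)$ by definition. View $\cA$ as a distribution over deterministic algorithms indexed by the oracle string $R$. The pseudo-determinism guarantee then gives $\Pr_{A \sim \cA}[A(x_v) = v] \ge 1 - \delta$ for every $v \in S$.

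Summing over $v$ and swapping the expectation with the sum yields
\begin{align*}
\EE_{A \sim \cA}\bigl|\{v \in S : A(x_v) = v\}\bigr|
= \sum_{v \in S} \Pr_{A \sim \cA}[A(x_v) = v]
\ge (1-\delta)|S|.
\end{align*}
By an averaging argument, there is some specific $A^*$ in the support of $\cA$ (i.e., some fixed oracle string $R^*$) such that $|\{v \in S : A^*(x_v) = v\}| \ge (1-\delta)|S|$.

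The key observation is that $A^*$ is a deterministic $z$-bit streaming algorithm: with $R^*$ fixed, the output at any reachable state $\sigma$ is the single value $\gamma(\sigma, R^*)$, so the range of $A^*$ across all input streams has cardinality at most $|\Sigma| = 2^z$. Each $v$ counted above lies in this range, so $(1-\delta)|S| \le 2^z$. Using $\delta \le 1/3$ gives $|S| \le 2^z/(1-\delta) \le \tfrac{3}{2} \cdot 2^z < 2^{z+1}$, as desired.

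I do not expect any real obstacle in executing this plan; the argument is essentially an encoding/counting observation. The only subtle point is recognizing that fixing the oracle string produces a genuine deterministic finite-state machine with a single output attached to each state, which is immediate from the random-oracle model definition given in Section~\ref{subsec:models}. Note also that this lemma is proved directly about $\cA$ and makes no use of the amplified algorithm $\cB$ or the machinery of \fco, so it can be invoked independently of the rest of Section~\ref{sec:pd-lb}.
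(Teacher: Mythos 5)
Your proof is correct, and it reaches the paper's bound by the same underlying counting idea: each state of $\cA$ carries a single output (once the oracle string is fixed), so the number of distinct canonical outputs cannot exceed $2^z$ up to a $(1-\delta)^{-1}$ slack. The difference is purely one of packaging. The paper phrases the argument as a randomized one-shot encoding of $S$ by $z$-bit states, with the encoder and decoder sharing the oracle string, and then invokes the minimax principle to assert that any $\delta$-error randomized encoding of a set $S$ requires $\ge \log((1-\delta)|S|)$ bits. Your version replaces that appeal to minimax with an explicit linearity-of-expectation and averaging step: you show $\EE_{A}\,|\{v : A(x_v)=v\}| \ge (1-\delta)|S|$, fix an $A^*$ achieving this, and note that $A^*$ has at most $2^z$ possible outputs. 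That is a fully self-contained derandomization, avoiding any reference to communication or encoding machinery, and it is arguably easier to verify. Both routes yield $(1-\delta)|S| \le 2^z$ and hence $|S| < 2^{z+1}$ for $\delta \le 1/3$. One small point to keep precise: in ``$\Pr_{A\sim\cA}[A(x_v)=v]$'' you should explicitly mean the output of $A$ at time $\rl$ after processing the length-$\rl$ stream $x_v$, which is what the tracking pseudo-determinism guarantee gives you; as written the intent is clear, but a reader might momentarily wonder which position's output is meant.
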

\begin{proof}
  For each $a \in S$, let $x_a \in \Pi^{-1}(a)$. One can use $\cA$ to provide a randomized $\le\delta$-error, $z$-bit encoding of the elements in $S$. Using public randomness, encoder and decoder choose the oracle random string for $\cA$. Each $a \in S$ is encoded by sending $x_a$ to $\cA$ and outputting the algorithm state $\sigma$. To decode, given a state $\sigma$, one evaluates the output of $\cA$ at state $\sigma$. Using the minimax principle, one can prove that the randomized encoding requires $\ge \log( (1-\delta) |S| )$ bits of space, which implies $2^z \ge (1-\delta) |S|$. Since $\delta \le \frac{1}{3}$, it follows $s \le \frac{3}{2} 2^{z} < 2^{z+1}$.
\end{proof}

We now establish the main result.

\begin{theorem} 
\label{thm:pd-lb-intro}
  Pseudo-deterministic $\delta$-error random oracle algorithms for $\mif(n,\rl)$ require
  \begin{align*}
    \Omega\left(\min\left(\frac{\rl}{\log \frac{2n}{\rl}} + \sqrt{\rl}, \frac{\rl \log \frac{1}{2\delta}}{(\log \frac{2n}{\rl})^2 \log n} + \left(\rl \log \frac{1}{2\delta}\right)^{1/4}\right)\right)
  \end{align*}
  bits of space when $\delta \le \frac{1}{3}$. In particular, when $\delta = 1/\poly(n)$ and $\rl = \Omega(\log n)$, this is:
  \begin{align*}
    \Omega\left(\frac{\rl}{(\log\frac{2 n}{\rl})^2} + \left(\rl \log n\right)^{1/4}\right) \,.
  \end{align*}
\end{theorem}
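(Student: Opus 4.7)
The plan is to combine the conditional lower bound in \Cref{lem:pd-z-lb} with the two available upper bounds on the canonical output set: the trivial $|S| \le n$ and the encoding bound $|S| < 2^{z+1}$ from \Cref{lem:pd-s-le-2zp1}. Writing $L := \log \frac{2|S|}{\rl}$, \Cref{lem:pd-z-lb} is equivalent to
\begin{align*}
  z \ge \min\left(\frac{\rl}{36\, L},\;\; \frac{\rl \log \frac{1}{2\delta}}{17280\, \log(64|S|)\, L^2}\right),
\end{align*}
so at least one of the two terms inside the $\min$ must lower-bound $z$. I will case-split on which term is the smaller and, within each case, apply \emph{both} upper bounds on $|S|$ separately, adding the resulting independent lower bounds on $z$.

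In the first case, $z \ge \rl/(36 L)$. Plugging in $|S| \le n$ gives $L \le \log \frac{2n}{\rl}$, hence $z = \Omega\left(\rl/\log(2n/\rl)\right)$. Plugging in $|S| \le 2^{z+1}$ gives $L \le z + 2$; combined with $z \ge 1$ (trivial nontriviality), the inequality $z \ge \rl/(36(z+2))$ rearranges to $z = \Omega(\sqrt{\rl})$. Adding these two independent bounds yields $z = \Omega\left(\rl/\log(2n/\rl) + \sqrt{\rl}\right)$, the first argument of the outer minimum.

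In the second case, $z \ge \frac{\rl \log(1/(2\delta))}{17280\, \log(64|S|)\, L^2}$. Plugging in $|S| \le n$, together with $\log(64n) = O(\log n)$, gives $z = \Omega\left(\rl \log(1/(2\delta))/((\log(2n/\rl))^2 \log n)\right)$. Plugging in $|S| \le 2^{z+1}$ bounds $L = O(z)$ and $\log(64|S|) = O(z)$, reducing the inequality to $z = \Omega(\rl \log(1/\delta)/z^3)$, i.e.\ $z^4 = \Omega(\rl \log(1/\delta))$, hence $z = \Omega((\rl \log(1/(2\delta)))^{1/4})$. Adding these two bounds yields the second argument of the outer minimum. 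Combining the two cases produces the main statement, and the ``in particular'' corollary follows on substituting $\log(1/(2\delta)) = \Theta(\log n)$ and noting that the two summands of the second expression then dominate the corresponding summands of the first under $\rl = \Omega(\log n)$.

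The main delicate step is the self-referential inequality in the second case, where the $O(z)$ estimates for $L$ and $\log(64|S|)$ rely on $z = \Omega(1)$; this is immediate from the nontriviality of $\mif(n,\rl)$, which forces $z \ge 1$. With that observation in hand, the remainder is an elementary combination of \Cref{lem:pd-z-lb} and \Cref{lem:pd-s-le-2zp1}.
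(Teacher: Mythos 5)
Your proposal is correct and follows essentially the same route as the paper's own proof in Appendix~\ref{subsec:pd-lb-mech}: case-split on which branch of the $\min$ in \Cref{lem:pd-z-lb} holds, then in each case instantiate both $|S|\le n$ and $|S|\le 2^{z+1}$ (the paper uses $|S|\le 4^z$, a trivially equivalent relaxation) and combine the resulting bounds. The only cosmetic difference is that you sum the two bounds where the paper takes a max — equivalent up to constant factors — and the word ``dominate'' at the end should read ``are dominated by'' (the second expression is the smaller one, hence the value of the $\min$).
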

\begin{proof}[Proof sketch]
  Using \Cref{lem:pd-s-le-2zp1} and the fact that $S \subseteq [n]$, we obtain
  $|S| \le \min(n, 2^{z+1})$. The theorem follows by combining this bound with
  the inequality of \Cref{lem:pd-z-lb}, and for each of four cases
  corresponding to different branches of $\min$ and $\max$, solving to find a
  lower bound on $z$. The full proof with calculations is given in
  \Cref{subsec:pd-lb-mech}.
\end{proof}

\begin{remark}
  For $\delta \le 2^{-\rl}$, \Cref{thm:pd-lb-intro} reproduces the
  deterministic algorithm space lower bound for $\mif(n,\rl)$ from
  \cite{Stoeckl23} within a constant factor.
\end{remark}

\subsection{Implications for Adversarially Robust Random Seed Algorithms}

The following result, paraphrased from \cite{Stoeckl23} relates the random
seed adversarially robust space complexity with the pseudo-deterministic space
complexity.
\begin{theorem}[\cite{Stoeckl23}]\label{thm:ext-rs-to-pd}
  Let $S^{\text{PD}}_{1/3}(n, \rl)$ give a space lower bound for a
  pseudo-deterministic algorithm for $\mif(n,\rl)$ with error $\le 1/3$. Then
  an adversarially robust random seed algorithm with error $\delta \le
  \frac{1}{6}$, if it uses $z$ bits of space, must have $z \ge
  S^{\text{PD}}_{1/3}(n, \floor{\frac{\rl}{2 z + 2}} )$.
\end{theorem}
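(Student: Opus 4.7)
The plan is to extract from $\cA$ a pseudo-deterministic $\mif(n,t)$ algorithm that uses the same $\le z$ bits of state and has error $\le 1/3$, where $t = \floor{\rl/(2z+2)}$. The argument follows the phase-by-phase win-win already sketched in \Cref{subsec:rs-lb-overview}.

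First, I would partition the $\rl$-long stream into $2z+2$ consecutive phases of length $t$. Since $\cA$ is a random seed algorithm, its state throughout execution is a deterministic function of the initial seed drawn from a distribution $\mu_0$ with support of size $\le 2^z$. For each $k$, let $\mu_k$ be the posterior distribution on the seed conditioned on the transcript through the first $k$ phases. Call phase $k+1$ \emph{informative} if there exists a length-$t$ adversarial input $\rho$ such that the distribution of $\cA$'s output sequence on this phase, when started from seed $\sim\mu_k$ and fed $\rho$, has no single value of probability $> 5/6$; otherwise call it \emph{pseudo-deterministic}.

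Next, I would analyze the two cases. If phase $k+1$ is informative, the adversary sends the witness $\rho$ and observes $\cA$'s output; since no output value carries more than $5/6$ of the posterior mass, with probability $\ge 1/6$ the realized output is consistent with at most half of the current posterior mass on seeds, so the adversary can update $\mu_{k+1}$ to concentrate on at most half the remaining seed mass. Because the initial posterior has total mass $1$ on a support of size $\le 2^z$, a standard Chernoff/martingale argument (using multiplicative Azuma, \Cref{lem:azumanoff}, with $2z+2$ trials each having success probability $\ge 1/6$) shows that with probability $> 1/2$ either the seed becomes pinned down (degenerate posterior) or at least one of the $2z+2$ phases must have been pseudo-deterministic. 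In either subcase, fix such a phase $k^\star$.

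Finally, consider the algorithm $\cB$ obtained by running $\cA$ starting from seed $\sim \mu_{k^\star-1}$, using the advanced state reached by $\cA$ just before phase $k^\star$ as $\cB$'s initial state and processing a fresh stream of length $t$. Then $\cB$ uses $\le z$ bits of state, and by pseudo-determinism of phase $k^\star$, for every input $x\in[n]^{\le t}$ there is a modal output sequence $\Pi(x)$ reached with probability $\ge 5/6$. Because $\cA$ is $1/6$-error adversarially robust, for each $x$ the canonical outputs $\Pi(x)$ must be valid missing items except on an event of probability at most $1/6$; a union bound with the $1/6$ non-modal slack gives an error of at most $1/3$ for $\cB$ viewed as a pseudo-deterministic $\mif(n,t)$ algorithm. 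Hence $z \ge S^{\mathrm{PD}}_{1/3}(n,t) = S^{\mathrm{PD}}_{1/3}(n,\floor{\rl/(2z+2)})$. The main obstacle is calibrating the constants so that the ``modal probability $> 5/6$'' threshold, the adversarial error bound $1/6$, and the failure probability of the halving martingale fit together cleanly to produce a $1/3$-error pseudo-deterministic algorithm; the rest is structural bookkeeping about adaptively constructed adversaries and posteriors.
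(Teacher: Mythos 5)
The high-level plan (phase decomposition, win-win between ``informative'' and ``pseudo-deterministic'' phases, extraction of a pseudo-deterministic sub-algorithm) matches the sketch in \Cref{subsec:rs-lb-overview}, but the concrete halving/martingale step has a genuine gap.

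You declare a phase informative when some adversary forces no output sequence to carry more than $5/6$ of the posterior mass, and note that then, with probability $\ge 1/6$, the realized output halves the surviving mass. Applying multiplicative Azuma over $2z+2$ phases cannot give the claimed conclusion: the expected number of halving events is only about $(2z+2)/6\approx z/3$, whereas roughly $z$ halvings are needed to collapse a support of size $\le 2^z$, so ``with probability $>1/2$ the seed is pinned or some phase is pseudo-deterministic'' does not follow. In fact the $5/6$ threshold is too loose even for a deterministic accounting: in an informative phase the observed output always has probability $\le 5/6$, so an all-informative $2z+2$-phase transcript has probability $\le (5/6)^{2z+2}$; since a deterministic adversary interacting with a random-seed algorithm generates at most $2^z$ distinct transcripts, the resulting bound $2^z(5/6)^{2z+2}$ on the probability of the all-informative event already exceeds $1$ at $z=2$.

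The repair is to tighten the threshold (say, to $2/3$) and drop the martingale entirely in favor of the deterministic shrinking that the overview describes (``a constant fraction of the $\le 2^z$ seed values are ruled out''). With threshold $2/3$, an informative phase shrinks the transcript probability by a factor $<2/3$ no matter which output is realized, so the probability of an all-informative run is at most $2^z(2/3)^{2z+2}<1/2$ for every $z\ge 1$; thus with probability $>1/2$ the adversary reaches a phase whose modal output sequence has probability $\ge 2/3$, yielding a $1/3$-error pseudo-deterministic algorithm as required. Your closing union bound also needs this averaging framing: the $\delta\le 1/6$ bound on $\cA$ is unconditional, while a particular transcript reaching the pseudo-deterministic phase may itself have small probability, so you cannot apply the error bound conditionally on that transcript. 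Instead, bound the total mass of transcripts whose modal/canonical output is invalid for some continuation by $\delta/(2/3)=1/4$, and subtract this from the $>1/2$ mass of transcripts that reach a pseudo-deterministic phase, leaving a positive-mass set of transcripts for which the extracted $z$-bit algorithm is a correct $1/3$-error pseudo-deterministic solver of $\mif(n,\floor{\rl/(2z+2)})$.
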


\begin{theorem} 
\label{cor:rs-lb-intro}
  Adversarially robust random seed algorithms for $\mif(n,\rl)$ with error $\le \frac16$ require space:
  \begin{align*}
    \Omega\left(\frac{\rl^2}{n} + \sqrt{{\rl}/{(\log n)^3}} + \rl^{1/5}\right) \,.
  \end{align*}
\end{theorem}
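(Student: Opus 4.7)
The bound splits into three terms, which I will handle separately. The $\Omega(\rl^2/n)$ contribution is essentially free: every adversarially robust random seed algorithm is in particular an adversarially robust random oracle algorithm, so \Cref{lem:ext-robust-lb} applied at error $\delta = \frac{1}{6}$ gives $z \ge \rl^2/(4n\ln 2) + \log(5/6) = \Omega(\rl^2/n)$. For the remaining two terms, the plan is to combine \Cref{thm:ext-rs-to-pd} with the new pseudo-deterministic bound \Cref{thm:pd-lb-intro}. Writing $\rl' := \floor{\rl/(2z+2)}$ and specializing \Cref{thm:pd-lb-intro} to error $\frac{1}{3}$ (at which $\log\frac{1}{2\delta} = \Theta(1)$), I get
\begin{align*}
  z \;\ge\; S^{\mathrm{PD}}_{1/3}(n, \rl')
  \;=\; \Omega\!\left(\min\!\left(\tfrac{\rl'}{\log(2n/\rl')} + \sqrt{\rl'},\ \tfrac{\rl'}{(\log(2n/\rl'))^2 \log n} + (\rl')^{1/4}\right)\right).
\end{align*}
Without loss of generality $z \le \rl/8$ (otherwise the claimed bound is trivial), so $\rl' = \Omega(\rl/z)$.

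To extract the $\rl^{1/5}$ term I use the weakest part of each branch of the min: since $\sqrt{\rl'} \ge (\rl')^{1/4}$ for $\rl' \ge 1$, branch $A$ is $\ge (\rl')^{1/4}$, and branch $B$ contains this term explicitly, so the minimum is $\Omega((\rl')^{1/4})$. Substituting $\rl' = \Omega(\rl/z)$ gives $z = \Omega((\rl/z)^{1/4})$, which rearranges to $z^5 = \Omega(\rl)$, i.e., $z = \Omega(\rl^{1/5})$.

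To extract the $\sqrt{\rl/(\log n)^3}$ term I instead use the larger, ``main'' first terms of each branch. Since $\rl' \le n$, we get $\log(2n/\rl') \le 2\log n$ for free, so branch $A$ is $\ge \rl'/(2\log n)$ and branch $B$ is $\ge \rl'/(4(\log n)^3)$. The minimum is therefore $\ge \rl'/(4(\log n)^3) = \Omega(\rl/(z(\log n)^3))$, and solving $z \ge c\,\rl/(z(\log n)^3)$ for $z$ yields $z = \Omega(\sqrt{\rl/(\log n)^3})$.

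The only delicate point is that $\rl'$ itself depends on $z$, so each of the last two extractions is really a fixed-point argument: assume for contradiction that $z$ falls below the target, substitute into the lower bound on $\rl'$, and observe that the resulting lower bound on $z$ already exceeds the assumed value. No new ideas beyond \Cref{thm:pd-lb-intro} and \Cref{thm:ext-rs-to-pd} are needed; the main (mild) obstacle is just bookkeeping the constants cleanly through the floor function and the ``$+2$'' offset in $\rl/(2z+2)$, which a one-line case split on whether $z \le 1$ disposes of.
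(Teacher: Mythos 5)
Your proposal is correct and follows essentially the same route as the paper's own proof: invoke \Cref{lem:ext-robust-lb} for the $\Omega(\rl^2/n)$ term, then combine \Cref{thm:ext-rs-to-pd} with \Cref{thm:pd-lb-intro} at $\delta = 1/3$ and substitute $\rl' = \Theta(\rl/z)$ to solve the resulting fixed-point inequalities for the $\sqrt{\rl/(\log n)^3}$ and $\rl^{1/5}$ terms. The only cosmetic difference is that the paper works directly from the explicit $\max$ expression produced inside the proof of \Cref{thm:pd-lb-intro}, whereas you lower-bound each branch of the stated $\min$ separately; both give the same conclusion, and your handling of the floor and the $z \le \rl/8$ case split mirrors the paper's.
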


This follows by combining \Cref{thm:ext-rs-to-pd}, \Cref{thm:pd-lb-intro}, \Cref{lem:ext-robust-lb}, and performing some algebra; a proof is given in \Cref{subsec:pd-lb-mech}.

\section*{Acknowledgements}

We thank Jonathan Conroy for helpful feedback on a earlier draft of this paper.
We also thank Omri Ben-Eliezer for the sampling versus sketching
interpretation of a portion of our results (noted after \Cref{res:rt-ub}).

\bibliographystyle{alpha}
\bibliography{refs}

\appendix
\section{Appendix}\label{sec:appendix}

\subsection{Proofs of Useful Lemmas}\label{subsec:useful-proofs}

Here we provide proofs of the results in \Cref{subsec:useful-lemmas} for which we haven't found an external source.

\begin{proof}[Proof of \Cref{lem:azumanoff}]
  The proof is modeled off that in \cite{Stoeckl23}, which only addresses
  one direction. It is a straightforward blend of standard proofs of the Chernoff bound and of Azuma's inequality.

  First, the $\ge$ direction. Choose, with foresight, $z = \ln(1 + \alpha)$.
  \begin{align*}
    \Pr&\left[ \sum_{i=1}^{t} X_i \ge (1 + \alpha) \sum_{i =1}^{t} p_i \right] \\
      &= \Pr\left[ \exp( z \sum_{i=1}^{t} X_i) \ge \exp(z (1 + \alpha) \sum_{i =1}^{t} p_i) \right] \\
      &\le \frac{\EE \exp( z \sum_{i=1}^{t} X_i)}{\exp(z (1 + \alpha) \sum_{i =1}^{t} p_i)} \\
      &\le \frac{\EE[e^{z X_1} \EE[e^{z X_2} \ldots \EE[e^{z X_t} | X_1,\ldots,X_{t-1}] \ldots | X_1] ]}{\exp(z (1 + \alpha) \sum_{i =1}^{t} p_i)} \,.
  \end{align*}
  The innermost term $\EE[e^{z X_t} | X_1,\ldots,X_{t-1}]$ is, by convexity of $e^z$, $\le p_t e^z + (1-p_t) \le e^{p_t (e^z - 1)}$; after applying this upper bound, we can factor it out and bound the $X_{t-1}$ term, and so on. Thus we continue the chain of inequalities to get:
  \begin{align*}
    \le \frac{\exp\left((e^z - 1)\sum_{i=1}^{t} p_i\right)}{\exp\left(z (1 + \alpha) \sum_{i =1}^{t} p_i\right)} = \exp\left(- ((1+\alpha) \ln (1+\alpha) - \alpha) \sum_{i =1}^{t} p_i\right) \,.
  \end{align*}
  
  For the other direction, set $z = \ln(1-\alpha)$, which is $< 0$.
  This time, $\EE[e^{z X_t} | X_1,\ldots,X_{t-1}] \le p_t e^z + (1-p_t)$ because $p_t$ is a \emph{lower bound} for $\EE[X_t | X_1,\ldots,X_{t-1}]$, and $z$ is negative. That $p_t e^z + (1-p_t) \le e^{p_t (e^z - 1)}$ still holds for negative $z$, so:
  \begin{align*}
    \Pr\left[ \sum_{i=1}^{t} X_i \le (1 - \alpha) \sum_{i =1}^{t} p_i \right] &= \Pr\left[ \exp(z\sum_{i=1}^{t} X_i) \ge \exp(z (1 - \alpha) \sum_{i =1}^{t} p_i) \right] \\
      &\le \ldots \le \frac{\exp\left((e^z - 1)\sum_{i=1}^{t} p_i\right)}{\exp\left(z (1 - \alpha) \sum_{i =1}^{t} p_i\right)} \\
      &= \exp\left(- ((1-\alpha) \ln (1-\alpha) + \alpha) \sum_{i =1}^{t} p_i\right) \,. \qedhere
  \end{align*}
\end{proof}

\begin{proof}[Proof of \Cref{lem:error-reduction-by-vote}]
  For each $i \in [p]$, let $Y_i$ be the random indicator variable for the event that $X_i \ne v$. Let $\alpha = \frac{1}{2 \delta} - 1$.The probability that $v$ is not the most common element can be bounded by the probability that it is the not the majority element; by a Chernoff bound, this is:
  \begin{align*}
    \Pr[\sum_{i\in[p]} Y_i \ge \frac{1}{2} p] &= \Pr[\sum_{i\in[p]} Y_i \ge (1 + \alpha) \delta p] \le \exp\left( - ((1+\alpha) \ln(1+\alpha) - \alpha) \delta p\right) \\
      &\le \exp\left( - 0.073 ((1+\alpha) \ln(1+\alpha) ) \delta p\right) \qquad \text{since $\alpha \ge 1/6$} \\
      &\le \exp\left( - \frac{0.073 }{2\delta} \ln\frac{1}{2\delta} \delta p\right) = \left(2\delta\right)^{0.036 p} \,. \qedhere
  \end{align*}
\end{proof}

\subsection{Mechanical Proofs for \texorpdfstring{\Cref{sec:rt-lb}}{Random Tape Lower Bound}}\label{subsec:rt-lb-mech}

\begin{proof}[Proof of \Cref{lem:rt-lb-calc-multiple}]
  Let $x$ be the left hand side of Eq. \ref{eq:rt-lb-max-min-lb}.
  The left branch of the $\min(\cdot,\cdot)$ terms in Eq. \ref{eq:rt-lb-max-min-lb} is actually unnecessary. For any integer $\lambda \ge 2$, say that
  \begin{align*}
    \left(\frac{\rl^{\lambda + 1}}{n}\right)^{\frac{2}{\lambda^2 + 3 \lambda - 2}} = \max_{k \in \NN} \left(\frac{\rl^{k + 1}}{n}\right)^{\frac{2}{k^2 + 3 k - 2}} \,.
  \end{align*}
  Then in particular,
  \begin{align*}
    \left(\frac{\rl^{\lambda + 1}}{n}\right)^{\frac{2}{\lambda^2 + 3 \lambda - 2}} \ge \left(\frac{\rl^{(\lambda -1) + 1}}{n}\right)^{\frac{2}{(\lambda-1)^2 + 3 (\lambda -1)- 2}} \ge     \left(\frac{\rl^{\lambda}}{n}\right)^{\frac{2}{\lambda^2 + \lambda - 4}} \,,
  \end{align*}
  which implies
  \begin{align*}
    n^{2 \lambda + 2} = n^{(\lambda^2 + 3 \lambda - 2) - (\lambda^2 + \lambda - 4)} \ge \rl^{\lambda \cdot (\lambda^2 + 3 \lambda - 2) - (\lambda + 1) \cdot (\lambda^2 + \lambda - 4)} = \rl^{\lambda^2 + \lambda + 4} \,,
  \end{align*}
  hence we have $\rl \le n^{\frac{2 \lambda +2}{\lambda^2 + \lambda + 4}}$. 

  On the other hand, we have
  \begin{align*}
    \rl^{1/\lambda} \ge \left(\frac{\rl^{\lambda + 1}}{n}\right)^{\frac{2}{\lambda^2 + 3 \lambda - 2}} \qquad \iff \qquad n^\lambda \le \rl^{(\lambda + 1) \lambda - \frac{\lambda^2+3\lambda - 2}{2}} = \rl^{\frac{\lambda^2 - \lambda + 2}{2}} \,,
  \end{align*}
  so the left branch of the $\min(\cdot,\cdot)$ in Eq. \ref{eq:rt-lb-max-min-lb} is only smaller when $\rl \ge n^{\frac{2 \lambda}{\lambda^2 - \lambda + 2}}$. As
  \begin{align*}
    \frac{2 \lambda +2}{\lambda^2 + \lambda + 4} \le \frac{2 \lambda}{\lambda^2 - \lambda + 2} \,,
  \end{align*}
  for all $\lambda \ge 1$, it follows that the left branch of the $\min(\cdot,\cdot)$ in Eq. \ref{eq:rt-lb-max-min-lb} is only smaller than the right when the entire term is not the maximum. Thus
  \begin{align*}
    x &\ge \max_{k \in \NN} \left(\frac{\rl^{k + 1}}{n}\right)^{\frac{2}{k^2 + 3 k - 2}} \,.
  \end{align*}

  To get a looser but more easily comprehensible lower bound, we note that $\max_{k \in \NN} \log \left(\frac{\rl^{k + 1}}{n}\right)^{\frac{2}{k^2 + 3 k - 2}}$ is piecewise linear and convex in $\log \rl$. Consequently,
  we can lower bound it using the convex function $C \frac{(\log \rl)^2}{\log n}$, where $C$ is the maximum value which satisfies the inequality at all ``corner points'' of $\max_{k \in \NN} \log \left(\frac{\rl^{k + 1}}{n}\right)^{\frac{2}{k^2 + 3 k - 2}}$. These corner points occur precisely at values of $\log \rl$ where, for some $k \ge 2$, we have:
  \begin{align*}
    \left(\frac{\rl^{k + 1}}{n}\right)^{\frac{2}{k^2 + 3 k - 2}} = \left(\frac{\rl^{(k-1) + 1}}{n}\right)^{\frac{2}{(k-1)^2 + 3 (k-1) - 2}} \,.
  \end{align*}
  Rearranging this gives:
  \begin{align*}
    \frac{\log n}{\log \rl} = \frac{(k+1) ((k-1)^2 + 3 (k-1) - 2) - (k-1+1) (k^2 + 3 k - 2)}{((k-1)^2 + 3 (k-1) - 2) - (k^2 + 3 k - 2)} = \frac{k^2 + k + 4}{2k + 2} \,.
  \end{align*}
  so the corners occur at $\rl = n^{\frac{2k + 2}{k^2 + k + 4}}$; and at such $\rl$, we have
  \begin{align*}
    \log \left(\frac{\rl^{k + 1}}{n}\right)^{\frac{2}{k^2 + 3 k - 2}} 
    &= \left(\frac{2}{k^2 + 3 k - 2} \cdot ((k+1) \frac{2 k + 2}{k^2 + k + 4} - 1)\right) \log n = \frac{2}{k^2 + k + 4} \log n \\
      &= \frac{(\log \rl)^2}{\log n} \frac{2}{k^2 + k + 4} \left(\frac{\log n}{\log \rl}\right)^2 = \frac{(\log \rl)^2}{\log n} \frac{2}{k^2 + k + 4} \left(\frac{k^2 + k + 4}{2k + 2}\right)^2 \\
      &= \frac{1}{2} \frac{(\log \rl)^2}{\log n} \frac{k^2 + k + 4}{(k+1)^2} \ge \frac{15}{32} \frac{(\log \rl)^2}{\log n} \,.
  \end{align*}
  The function $\frac{k^2 + k + 4}{(k+1)^2}$ has derivative $\frac{k-7}{(k+1)^3}$ and is minimized exactly at $k = 7$, where it has value $\frac{15}{16}$. Consequently, the value $C = \frac{15}{32}$ is the best possible.
\end{proof}

\subsection{Mechanical Proofs for \texorpdfstring{\Cref{sec:rt-ub}}{Random Tape Upper Bound}}\label{subsec:rt-ub-mech}

\begin{proof}[Proof of \Cref{lem:rt-alg-params}]
  First, we handle the case where $\ceil{\log \rl} <  \floor{2 \frac{\log(n/4)}{\log 16 \rl}}$. Then $d = \ceil{\log \rl} \le \floor{2 \frac{\log(n/4)}{\log 16 \rl}} - 1$, and $\alpha = 2$. Note that $\frac{\rl}{2^{d - 1}} \ge 1$ since $2^{d - 1} \le  2^{\ceil{\rl} - 1} \le 2\rl/2 = \rl$.
  \begin{align*}
      \prod_{i=2}^{d} b_i &= \ceil{\frac{\rl}{2^{d - 1}}} 2^{d - 2} \ge \frac{\rl}{2} = \frac{\rl}{\alpha} \\
      \prod_{i=2}^{d} b_i &=  \ceil{\frac{\rl}{2^{d - 1}}} 2^{d - 2} \le 2 \frac{\rl}{2} \le \frac{4 \rl}{\alpha} \,.
  \end{align*}
  Since $b_d = \ceil{\frac{\rl}{2^{d - 1}}} = \ceil{\frac{2 \rl}{2^{\ceil{\log \rl}}}} \le 2$,
  \begin{align*}
    \prod_{i \in [d]} w_i &= (16 \rl) \prod_{i = 2}^{d} \prod_{j=i}^{d} b_j
      \le (16 \rl) \prod_{i = 2}^{d} 2^{d - i + 1} = (16 \rl) 2^{d (d-1) / 2} \\
      &\le (16 \rl) (2^{\ceil{\log \rl}})^{(d-1) / 2} \le (16 \rl) (2 \rl)^{(d-1) / 2} \\
      &\le (16 \rl) (2 \rl)^{(\floor{2 \frac{\log (n/4)}{\log 16\rl}}-2) / 2} \\
      &\le (16 \rl) (2 \rl)^{\frac{\log (n/4)}{\log 16\rl} - 1} \\
      &\le (16 \rl)^{\frac{\log (n/4)}{\log 16\rl}} = \frac{n}{4} \le n \,.
  \end{align*}

  Second, we consider the case where $d = \floor{2 \frac{\log (n/4)}{\log(16\rl)}}$. Because $n \ge 64 \rl$, $d \ge 2$, and so
  \begin{align*}
    d = \floor{2 \frac{\log (n/4)}{\log 16 \rl}} \ge \frac{2}{3} \cdot 2 \frac{\log (n/4)}{\log 16 \rl} = \frac{\log (n/4)}{\frac{3}{4} \log 16 \rl} \ge \frac{\log (n / 4)}{\log 4 \rl} \,.
  \end{align*}
  The second inequality used that $\frac{3}{4} (4 + \log \rl) \le (2 + \log \rl)$ for $\rl \ge 4$. Consequently,
  \begin{align*}
    \alpha = \left(\frac{(4 \rl)^{d}}{n/4}\right)^{\frac{2}{d(d-1)}} \ge \left(\frac{(4 \rl)^{\frac{\log (n / 4)}{\log 4 \rl}}}{n/4}\right)^{\frac{2}{d(d-1)}} =  \left(\frac{n/4}{n/4}\right)^{\frac{2}{d(d-1)}} = 1 \,.
  \end{align*}
  We now prove Eq. \ref{eq:rt-ub-param-ge-l}. Because $\prod_{i=2}^{d-1} b_i \ge \alpha^{d - 2}$,
  \begin{align*}
    \prod_{i=2}^{d} b_i = \ceil{\frac{\rl}{\alpha^{d - 1}}} \prod_{i=2}^{d-1} b_i \ge \frac{\rl}{\alpha \prod_{i=2}^{d-1} b_i} \cdot \prod_{i=2}^{d-1} b_i = \frac{\rl}{\alpha} \,.
  \end{align*}
  For Eq. \ref{eq:rt-ub-param-le-n}, we observe that
  \begin{align*}
    d \le 2 \frac{\log(n/4)}{\log(16 \rl)}  \qquad \implies \qquad 16\rl \le (n/4)^{2/d} \qquad \implies \qquad \rl \ge \alpha^{d - 1} =\frac{(4 \rl)^2}{(n/4)^{2/d}} \,,
  \end{align*}
  and thus $\rl / \alpha^{d - 1} \ge 1$, so $b_d = \ceil{\rl / \alpha^{d -1}} \le 2 \rl / \alpha^{d - 1}$. Then since  $\prod_{i=2}^{d-1} b_i \le 2 \alpha^{d - 2}$,
  \begin{align*}
    \prod_{i=2}^{d} b_i \le \frac{2 \rl}{\alpha^{d - 1}} \prod_{i=2}^{d} b_i \le \frac{4 \rl}{\alpha \prod_{i=2}^{d-1} b_i} \cdot \prod_{i=2}^{d-1} b_i \le \frac{4 \rl}{\alpha} \,.
  \end{align*}
  Finally, we prove Eq. \ref{eq:rt-ub-param-le-n}. As noted above,
  \begin{align*}
    b_d \le \frac{2 \rl}{\alpha^{d-1}} \le \frac{4 \rl}{\alpha \prod_{i=2}^{d-1} b_j} \,.
  \end{align*}
  Applying this fact to bound the left hand side of Eq. \ref{eq:rt-ub-param-le-n} gives:
  \begin{align*}
    \prod_{i \in [d]} w_i &= (16 \rl) \prod_{i = 2}^d \prod_{j=i}^d b_j = 16 \rl (b_d)^d \prod_{i = 2}^{d-1} \prod_{j=i}^{d-1} b_j \\
      &\le 16 \rl \left(\frac{4 \rl}{\alpha \prod_{i=2}^{d-1} b_j} \right)^{d-1} \prod_{i = 2}^{d-1} \prod_{j=i}^{d-1} b_j \\
      &\le \frac{4 \cdot (4 \rl)^d}{\alpha^{d-1}} \frac{1}{\prod_{i = 2}^{d-1} \prod_{j=2}^{i-1} b_j} \\
      &\le \frac{4 \cdot(4 \rl)^d}{\alpha^{d-1}} \frac{1}{\alpha^{(d-1)(d-2)/2}} \qquad \text{since $\prod_{j=2}^{d-1} b_j \ge \alpha^{d-2}$ and $b_2 \ge b_3 \ge \ldots \ge b_{d-1}$} \\
      &= \frac{4 \cdot(4 \rl)^d}{\alpha^{d(d-1)/2}} = \frac{4 \cdot(4 \rl)^{d}}{\frac{(4 \rl)^d}{n / 4}} = n \,. \qedhere
  \end{align*}
\end{proof}

\begin{proof}[Proof of \Cref{lem:rt-ub-space-usage}]
  \Cref{alg:rt} only stores two types of data: for each $i\in[d]$, the vectors $L_i \in [w_i]^{b_i}$, and the vectors $x_i \in \{0,1\}^{b_i}$. These can be stored using $b_i \log w_i$ and $b_i$, bits respectively, for a total of:
  \begin{align*}
    \sum_{i \in [d]} b_i \log(2 w_i) &\le b_1 \log(2 w_1) + \sum_{i = 2}^{d} b_i \log(2 w_i) \\
      &\le b_1 \log(32 \rl) + \sum_{i = 2}^{d} b_i \log(2 \prod_{j=i}^{d} b_i)
      \le \sum_{i=1}^{d} b_i \log(32 \rl) \,,
  \end{align*}
  since by Eq. \ref{eq:rt-ub-param-le-2l}, $\prod_{j=i}^{d} b_i \le 4 \rl$. 
  
  We now observe that $b_d \le \ceil{\alpha}$. If $d = \ceil{\log \rl}$,
  then $\alpha = b_2 = \ldots = b_{d - 1} = 2$ and $b_d = \ceil{\rl/\alpha^{d - 1}} \le 2$. On the other hand, if $d = \floor{2 \frac{\log(n/4)}{\log(16\rl)}}$, then $\alpha = \frac{(4 \rl)^{2/(d-1)}}{(n/4)^{2/(d(d-1))}}$. We have:
  \begin{align*}
    (4\rl)^{d + 1} = (4\rl)^{\floor{2 \frac{\log(n/4)}{\log(16\rl)}} + 1} \ge (4\rl)^{2 \frac{\log(n/4)}{\log(16\rl)}} = (n/4)^2 \,,
  \end{align*}
  which implies
  \begin{align*}
    \alpha = \frac{(4 \rl)^{2/(d-1)}}{(n/4)^{2/(d(d-1))}} \ge \frac{(4 \rl)^{2/(d-1)}}{(4\rl)^{(d+1)/(d(d-1))}} = (4\rl)^{(2 - \frac{d+1}{d}) \cdot \frac{1}{d - 1}} = (4\rl)^{\frac{1}{d}} \,.
  \end{align*}
  Consequently,
  \begin{align*}
    b_d = \ceil{\frac{\rl}{\alpha^{d - 1}}} = \ceil{\frac{1}{4} \frac{4 \rl}{\alpha^{d - 1}}} \le \ceil{\frac{1}{4} (4 \rl)^{1/d}} \le (4 \rl)^{1/d} \le \alpha \,.
  \end{align*}
  
  With the bound on $b_d$, and the fact that $\alpha \ge 1$ in both cases, and that $d \le \ceil{\log \rl}$ we obtain:
  \begin{align*}
    \sum_{i \in [d]} b_i &\le \min(\rl + 1, \ceil{8 \alpha} + \ceil{3 \log 1/\delta}) + (d - 1) \ceil{\alpha} \\
      &\le \min(\rl, \ceil{3 \log 1/\delta}) + (7 + d) 2 \alpha \\
      &\le \min(\rl, \ceil{3 \log 1/\delta}) + 32 \log \rl \ceil{\frac{(4 \rl)^{2/(d-1)}}{(n/4)^{2/(d(d-1))}}} \,.
  \end{align*}
  Multiplying this last quantity by $\log(32 \rl)$ gives a space bound.

  To obtain a much weaker, but somewhat more comprehensible upper bound on $\alpha$, when $d = \floor{2 \frac{\log(n/4)}{\log(16\rl)}}$, we note that:
  \begin{align*}
    \max_{\lambda \in \NN \cap [2,\infty)} \log \frac{(4 \rl)^{2/(\lambda-1)}}{(n/4)^{2/(\lambda(\lambda-1))}} &\le \log \max_{\lambda \in \RR \cap [2,\infty)} \left(\frac{2}{\lambda - 1} \log(4\rl) - \frac{2}{\lambda(\lambda-1)} \log(n/4) \right) \\
    &\le \log \left(2 \log(4\rl) \max_{\lambda \in \RR \cap [2,\infty)} \left(\frac{1}{\lambda - 1} - \frac{1}{\lambda (\lambda -1)} \frac{\log(n/4)}{\log(4\rl)} \right)\right) \,.
  \end{align*}
  Let $\gamma = \frac{\log(n/4)}{\log(4\rl)}$; this is $\ge 1$. Let $f(x) = \frac{1}{x - 1} (1 - \frac{\gamma}{x})$. We will now prove that $\max_{x \ge 2} f(x) \le \frac{1}{2\gamma}$. We note that when $x = 2$, we have:
  \begin{align*}
    f(2) = 1 - \frac{\gamma}{2} \le \frac{1}{2\gamma} \,.
  \end{align*}
  Checking the other endpoint, we have:
  \begin{align*}
    \lim_{x \rightarrow \infty} \frac{1}{x - 1} (1 - \frac{\gamma}{x}) = 0 \,.
  \end{align*}
  Since $f(x)$ is differentiable on $[2,\infty)$, if it has a maximum other than at the endpoints, then it will occur when $\frac{d}{dx} f(x) = 0$. Solving this equation, we obtain:
  \begin{align*}
    \frac{d}{dx} f(x) = -\frac{1}{\left(x-1\right)^{2}}+\frac{\gamma\left(2x-1\right)}{\left(x\left(x-1\right)\right)^{2}} = -\frac{1}{\left(x-1\right)^{2}}\left[1-\frac{\gamma\left(2x-1\right)}{x^{2}}\right] = 0 \,,
  \end{align*}
  which is true iff $x^2 = \gamma(2x - 1)$. The solutions to the quadratic equation are
  \begin{align*}
    x = \gamma-\sqrt{\gamma\left(\gamma-1\right)} \qquad \text{and} \qquad  x = \gamma+\sqrt{\gamma\left(\gamma-1\right)} \,.
  \end{align*}
  Since $\gamma \ge 1$, the $-$ branch has $x \le 1$, which is not in $[2,\infty)$. The $+$ branch is only in $[2,\infty)$ if $\gamma \ge \frac{4}{3}$. The value of $f(x)$ in this case is:
  \begin{align*}
    f(\gamma+\sqrt{\gamma\left(\gamma-1\right)}) &= \frac{1}{\gamma+\sqrt{\gamma\left(\gamma-1\right)} - 1} \left(1 - \frac{\gamma}{\gamma+\sqrt{\gamma\left(\gamma-1\right)}} \right) \\
      &= \frac{\sqrt{\gamma(\gamma-1)}}{2 \gamma - 1 + 2 \sqrt{\gamma(\gamma-1)}} \\
      &\le \frac{1}{4 \sqrt{\gamma(\gamma-1)}}  &&\hspace{-4cm}\text{(since $\sqrt{\gamma(\gamma-1)} \le 2 \gamma - 1$ for all $\gamma \ge 1$)}
      \\
      &\le \frac{1}{2 \gamma} \,. &&\hspace{-4cm}\text{(since $\gamma \le 2 \sqrt{\gamma(\gamma-1)}$ for all $\gamma \ge \frac{4}{3}$)}
  \end{align*}
  Thus, if $f(x)$ does have a maximum in $[2,\infty)$, it is $\le \frac{1}{2\gamma}$. We conclude that $f(x) \le \frac{1}{2\gamma}$ in all cases. This proves:
  \begin{align*}
    \log \alpha \le \max_{\lambda \in \NN} \log \frac{(4 \rl)^{2/(\lambda-1)}}{(n/4)^{2/(\lambda(\lambda-1))}} 
      &\le \log\left(2 \log(4 \ell) \frac{1}{2 \frac{\log(n/4)}{\log(4\rl)}} \right) \le \log \frac{  \log(4 \ell)^2}{\log(n/4)} \,. \qedhere
  \end{align*}
\end{proof}

\subsection{Mechanical Proofs for \texorpdfstring{\Cref{sec:pd-lb}}{Pseudo-Deterministic and Random Seed Lower Bounds}}\label{subsec:pd-lb-mech}

\begin{proof}[Proof of \Cref{thm:pd-lb-intro}]
  By \Cref{lem:pd-z-lb}, we have:
  \begin{align}
    z \ge \min\left(
      \frac{\rl}{36 \log\frac{2 |S|}{\rl}},
      \frac{\rl}{17280 \log\frac{2 |S|}{\rl}} \cdot \frac{\log(1/2\delta)}{\log(64 |S|) \log\frac{2 |S|}{\rl}}
    \right) \,.\label{eq:pd-lb-restate}
  \end{align}
  By \Cref{lem:pd-s-le-2zp1}, $|S| \le \min(n, 2^{z+1}) \le \min(n, 4^z)$. We will apply this
  inequality to each branch of the minimum in Eq. \ref{eq:pd-lb-restate}. First, say that
  the left part of the minimum is larger than the right. Then $z \ge \rl / (36 \log\frac{2 |S|}{\rl})$.
  Applying $|S| \le n$ and $|S| \le 4^z$, this implies:
  \begin{align*}
    z &\ge \frac{\rl}{36 \log\frac{2 n}{\rl}} \,, \qquad \text{and} \\
    z &\ge \frac{\rl}{36 \log\frac{2 \cdot 4^z}{\rl}} \ge \frac{\rl}{36 \cdot 3 z} \qquad \implies \qquad z \ge \sqrt{\frac{\rl}{108}} \,.
  \end{align*}
  Thus:
  \begin{align}
    z \ge \max\left(\frac{\rl}{36 \log\frac{2 n}{\rl}}, \sqrt{\frac{\rl}{108}} \right) \,.\label{eq:pd-lb-case-l}
  \end{align}

  Next, say that the right side of the minimum in Eq. \ref{eq:pd-lb-restate} is larger.
  Then applying $|S| \le n$ and $|S| \le 4^z$ to that side, we get:
  \begin{align*}
    z &\ge \frac{\rl \log(1/2\delta)}{17280 (\log\frac{2 n}{\rl})^2 \log(64 n)} \,, \qquad \text{and} \\
    z &\ge \frac{\rl \log(1/2\delta)}{17280 (\log\frac{2 \cdot 4^z}{\rl})^2 \log(64 \cdot 4^z)} \ge \frac{\rl \log(1/2\delta)}{17280 \cdot 3^2 \cdot 8 z^3} \qquad \implies \qquad z \ge \left(\frac{\rl \log(1/2\delta)}{1244160}\right)^{1/4} \,.
  \end{align*}
  Thus:
  \begin{align}
    z \ge \max\left(
      \frac{\rl \log(1/2\delta)}{17280 (\log\frac{2 n}{\rl})^2 \log(64 n)},
      \left(\frac{\rl \log(1/2\delta)}{1244160}\right)^{1/4}
    \right) \,.\label{eq:pd-lb-case-r}
  \end{align}

  The minimum of the lower bounds from Eqs. \ref{eq:pd-lb-case-l} and \ref{eq:pd-lb-case-r}
  holds in all cases, so:
  \begin{align*}
    z \ge \min\left(\max\left(\frac{\rl}{36 \log\frac{2 n}{\rl}}, \sqrt{\frac{\rl}{108}}\right), \max\left(\frac{\rl \log(1/2\delta)}{17280 (\log\frac{2 n}{\rl})^2 \log(64n)},\left(\frac{\rl \log(1/2\delta)}{1244160}\right)^{1/4}\right)\right) \,.
  \end{align*}
\end{proof}

\begin{proof}[Proof of \Cref{cor:rs-lb-intro}]
  The lower bound from \Cref{thm:pd-lb-intro} for $\mif(n,t)$ with error $\delta = 1/3$, showing constants, is:
  \begin{align}
    \max\left(\frac{t \log(3/2)}{17280 (\log\frac{2 n}{t})^2 \log(64n)},\left(\frac{t \log(3/2)}{1244160}\right)^{1/4}\right) \,.\label{eq:lb-rs-pd-expr}
  \end{align}
  If the space used by an algorithm, $z$, satisfies $z \ge (\rl-1)/2$, then we tautologically have a lower bound of $(\rl-1)/2$. Otherwise, we have $2z+2 \le \rl$.
  Applying \Cref{thm:ext-rs-to-pd} gives, for the left branch of the $\max$ in Eq. \ref{eq:lb-rs-pd-expr}, with $t = \floor{\frac{\rl}{2 z + 2}} \ge \frac{1}{8 z}$:
  \begin{align*}
    z &\ge \floor{\frac{\rl}{2 z + 2}} \frac{\log(3/2)}{17280 (\log\frac{2 n}{t})^2 \log(64n)} \ge \frac{\rl}{8 z} \frac{1/2}{17280 (\log (2 n))^2 \log(64n)} \,,
  \end{align*}
  which implies
  \begin{align*}
    z \ge \sqrt{\frac{\rl}{276480 (\log (2 n))^2 \log(64n)}} \ge  \sqrt{\frac{\rl}{7741440 (\log n)^3}} \,.
  \end{align*}
  For the right branch of Eq. \ref{eq:lb-rs-pd-expr}, we obtain:
  \begin{align*}
    z \ge \left(\floor{\frac{\rl}{2 z + 2}} \frac{\log(3/2)}{1244160}\right)^{1/4} \ge \left(\frac{\rl}{8 z} \frac{1}{2 \cdot 1244160}\right)^{1/4} \,,
  \end{align*}
  which implies:
  \begin{align*}
    z^{5/4} \ge \left(\frac{\rl}{19906560}\right)^{1/4} \qquad \implies \qquad z \ge \left(\frac{\rl}{19906560}\right)^{1/5} \,.
  \end{align*}
  Combining the two lower bounds, gives:
  \begin{align}
    z \ge \max\left(\sqrt{\frac{\rl}{7741440 (\log n)^3}}, \left(\frac{\rl}{19906560}\right)^{1/5}\right) = \Omega\left(\sqrt{\frac{\rl}{(\log n)^3}} + \rl^{1/5}\right) \,. \label{eq:cor-lb-base}
  \end{align}
  This lower bound is everywhere smaller than $(\rl-1)/2$, so it is compatible with the case in which $z \ge (\rl-1)/2$. 
  
  Taking the maximum of Eq. \ref{eq:cor-lb-base} and the known random oracle lower bound for $\mif(n,\rl)$ algorithms in the static setting, \Cref{lem:ext-robust-lb}, gives:
  \begin{align*}
    z &= \Omega\left(\frac{\rl^2}{n} + \sqrt{\frac{\rl}{(\log n)^3}} + \rl^{1/5}\right)  \,. \qedhere
  \end{align*}
  
\end{proof}

\end{document}